\newcommand{\R}{\mathbb{R}}
\newcommand{\C}{\mathbb{C}}
\newcommand{\G}{\mathcal{G}}
\newcommand{\N}{\mathbb{N}}
\newtheorem{theorem}{Theorem}
\newtheorem{definition}[theorem]{Definition}
\newtheorem{lemma}[theorem]{Lemma}
\newtheorem{proposition}[theorem]{Proposition}
\newtheorem{postulate}{Postulate}
\DeclareMathOperator{\lin}{\mathrm{lin}}
\DeclareMathOperator{\im}{\mathrm{im}}
\newcommand{\ray}{\mathrm{Ray}}
\def\intersect{\cap}
\def\meet{\wedge}
\def\join{\vee}
\newcommand{\beq}{\begin{equation}}
\newcommand{\eeq}{\end{equation}}
\newcommand{\beqa}{\begin{eqnarray}}
\newcommand{\eeqa}{\end{eqnarray}}
\newcommand{\g}{{\mathfrak g}}
\newcommand{\ce}{{\cal E}}
\newcommand{\tr}{\mbox{tr}}
\begin{document}

\title{Higher-order interference and single-system postulates characterizing quantum theory}

\author{Howard Barnum} \email{hnbarnum@aol.com}
\affiliation{Department of Physics and Astronomy, University of
  New Mexico, 1919 Lomas Blvd. NE, Albuquerque, NM 87131}
\affiliation{Stellenbosch Institute for Advanced Studies (STIAS),
  Wallenberg Research Center at Stellenbosch University, Marais
  Street, Stellenbosch 7600, South Africa}
\author{Markus P.\ M\"uller} \email{markus@mpmueller.net}
\affiliation{Institut f\"ur Theoretische Physik, Universit\"at Heidelberg, Philosophenweg 19, D-69120 Heidelberg, Germany}
\author{Cozmin Ududec} \email{cozster@gmail.com}
\affiliation{Invenia Technical Computing, 135
    Innovation Dr., Winnipeg, MB R3T 6A8, Canada}

\date{April 8, 2014; revised November 12, 2014}

\begin{abstract}
We present a new characterization of quantum theory in terms of simple
physical principles that is different from previous ones in two
important respects: first, it only refers to properties of single
systems without any assumptions on the composition of many systems; and second, it
is closer to experiment by having absence of higher-order interference
as a postulate, which is currently the subject of experimental
investigation. We give three postulates -- no higher-order
interference, classical decomposability of states, and strong symmetry
-- and prove that the only non-classical operational probabilistic theories
satisfying them are real, complex, and quaternionic quantum theory,
together with $3$-level octonionic quantum theory and ball state
spaces of arbitrary dimension.  Then we show that adding observability
of energy as a fourth postulate yields complex quantum theory as the
unique solution, relating the emergence of the complex numbers to the
possibility of Hamiltonian dynamics.  We also show that there may be
interesting non-quantum theories satisfying only the first two of our
postulates, which would allow for higher-order interference in
experiments while still respecting the contextuality analogue of the
local orthogonality principle.
\end{abstract}

\pacs{Valid PACS appear here}

\maketitle

\section{Introduction} \label{SecIntroduction}

Quantum theory currently underpins much of modern physics and is
essential in many other scientific fields and countless technological
applications.  However, by most accounts quantum phenomena remain
rather mysterious: there is no generally accepted intuitive picture of
the underlying reality, and the standard textbook introductions of the
mathematical formalism lack a simple conceptual motivation.

With the rise of quantum information processing and the ever more
refined control of quantum phenomena, there has recently been a surge of diverse attempts to tackle such foundational questions.  These range from studies
of the information processing capabilities of theories similar to
quantum theory~\cite{BBLW08clone, BBLW07broadcast, BBLW08tele, 
GMCDrevdyn, AaronsonIsland}, to reconstructions of the formalism
from information-theoretic principles~\cite{Hardy01, DakicBrukner, MMAxioms, GiulioDer11, Fivel2010, InformationUnit}, to no-go theorems regarding
interpretations and generalizations of the formalism~\cite{PBR, CabelloSeveriniWinter, ColbeckRenner}, to novel experiments testing
various predictions of the theory~\cite{Sinha2010, SollnerWeihs12, GeorgeEtAl13}.

In this paper we give several closely related reconstructions of the
mathematical structure---Hilbert space, Hermitian observables, positive
operator-valued measures---of finite-dimensional quantum theory from
simple postulates with clear physical significance and generality.

Providing such an explanation for the Hilbert space
  structure of quantum theory in terms of physically (not just
mathematically) natural postulates is important for several reasons.
First, deeper and more reasonable principles can help to dissolve the
mysteries of quantum phenomena and make them more intelligible and
easier to teach.  Two well-known examples of this approach are Kepler's laws of planetary motion and their explanation through Newton's laws of motion and gravitation, and the Lorentz transformations and their explanation in Einstein's two relativity postulates. Second, it can be argued that this approach will be essential
in making progress on problems such as formulating a theory
unifying quantum and gravitational physics, as well as
for developing potentially more accurate and more fundamental
theories.  In the absence of a picture of the underlying reality, we
can use first principles to proceed toward the next physical theory in
a careful, conceptual fashion.  More practically, this approach can
shed light on what is responsible for the power of quantum information
processing and cryptography.

Because quantum theory applies to an extremely broad
range of physical systems and phenomena, and its probabilistic
structure seems essential, we work within a
broad framework for studying probabilistic physical theories (usually
called \emph{operational probabilistic theories}).  These are theories
that succinctly describe sets of
experiments and assign probabilities to measurement outcomes.  More
precisely, we imagine that physicists, or nature, prepare physical
systems in various states, and then observe these systems in various
ways.  The outcomes of these observations occur with certain
probabilities, which are predicted by the theory.  It is important to
emphasize that we do not assume that these probabilities are described
by quantum theory; instead our postulates will allow us to derive
their structure as represented by quantum theory.

Our postulates are as follows:
\begin{enumerate}
\item \textbf{Classical Decomposability}: Every state of a physical system
can be represented as a probabilistic mixture of perfectly distinguishable states of maximal knowledge (``pure states'').
\item \textbf{Strong Symmetry}: Every set of perfectly distinguishable
  pure states of a given size can be reversibly transformed to any
  other such set of the same size.
\item \textbf{No Higher-Order Interference}: The interference pattern
  between mutually exclusive ``paths'' in an experiment is
  exactly the sum of the patterns which would be observed in all
  two-path sub-experiments, corrected for overlaps.
\item \textbf{Observability of Energy}: There is non-trivial
continuous reversible time evolution, and the generator of every such evolution
can be associated to an observable (``energy'') which is a conserved quantity.
\end{enumerate}

\begin{figure}[!hbt]
\begin{center}
\includegraphics[angle=0, width=6cm]{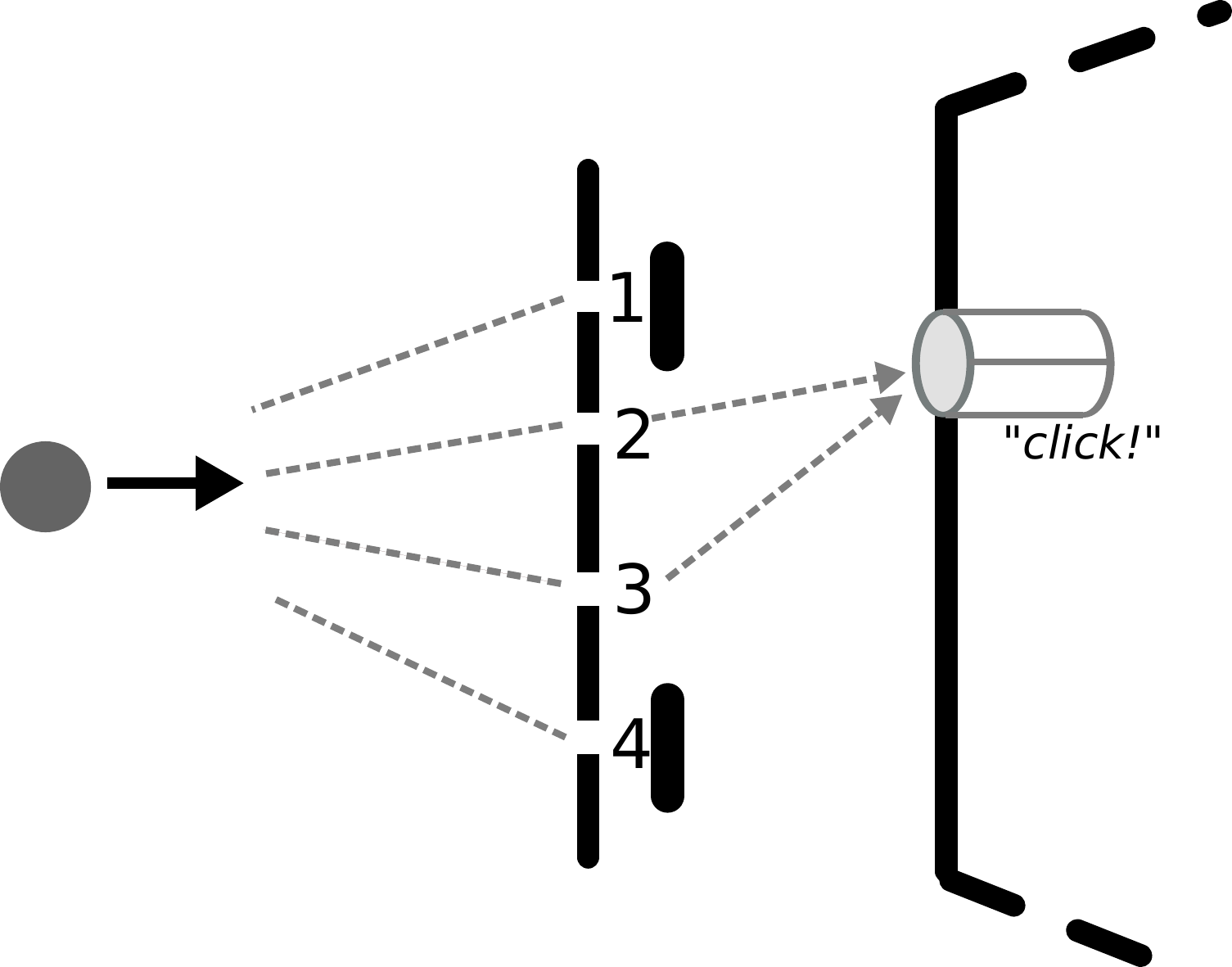}
\caption{\textbf{Higher-order interference.} Consider a particle which can pass one of $M$ (here: $M=4$) slits, where some
of the slits may be blocked by the experimenter (indicated by the black bars). After passing the multi-slit setup,
the particle may trigger a certain event, for example the click of a detector localized in a certain area
of the screen. We are interested in the probability $p_J$ of this event, given that slits $J\subset\{1,2,\ldots,M\}$
are open (for example $p_{23}$ in the depicted setup).\\
Classically, the probability of such an event given that all four slits are open, $p_{1234}$, equals $p_1+p_2+p_3+p_4$, where $p_i$ is
the probability assuming than only slit $i$ is open. This is violated in quantum theory due to interference.
However, even in quantum theory, the total probability can be computed from contributions
of \emph{pairs of slits} only: we have $p_{1234}=p_{12}+p_{13}+p_{14}+p_{23}+p_{24}+p_{34}-2p_1-2p_2-2p_3-2p_4$.
It is in this sense that quantum theory has second-, but no third- or higher-order interference. The definition of
interference that we use is not restricted to spatially arranged slits, but is formulated generally for any set of $M$ perfectly distinguishable
alternatives in a probabilistic theory.
}
\label{fig_interference}
\end{center}
\end{figure}

Before discussing their physical interpretation and motivation in more
detail, we point out that all of our postulates refer to \emph{single}
systems only. This is in contrast to earlier reconstructions of
quantum theory~\cite{Hardy01, DakicBrukner, MMAxioms, GiulioDer11}
which rely heavily on properties of composite systems. Our motivation
to rely on single systems is as follows. It is not clear that the
notion of subsystems and their composition, as it is often used in
information-theoretic circuit diagrams and category-theoretic
considerations, applies to physics without change in its full operational interpretation.
For example, if a composite quantum system consists of spacelike separated subsystems,
then the causal spacetime structure of special relativity imposes additional
complications when describing the possible joint measurements on the composite system~\cite{SorkinImpossibleMeasurements}.
These additional restrictions are usually not captured by operational approaches, which just declare a set of states
and measurements for the composite system, and postulate that these can in principle be implemented to arbitrary accuracy.
Therefore, a safe strategy for an operational approach seems to be to avoid making assumptions about the state space structure of composite
systems, and to talk only about stand-alone systems. These may or may not correspond to effective physical subsystems that can be controlled by an agent in a laboratory.

Moreover, there has recently been a surge of interest in finding
compelling physical principles that explain the specific contextuality
behavior of quantum theory as compared to other probabilistic theories.
This  line of research aims at analyzing the \emph{single-system analogue}
of quantum non-locality, and understanding its specific characteristics
in terms of principles such as ``consistent exclusivity''~\cite{Fritz}. Our results also
contribute to this line of research by showing that Postulates 1 and 2
are sufficient to guarantee that systems satisfy consistent exclusivity.

We do not claim that our postulates are the only reasonable ones, but we
think that they -- like other recent reconstructions -- are more natural
than the usual abstract formulations which simply presume Hilbert spaces, complex
numbers, and operators. Moreover, as we discuss below, we think that
our formulation is especially suitable for the search for interesting
and physically reasonable modifications of quantum theory; that is,
state spaces that are not described by the Hilbert space formalism
but are otherwise consistent and physically plausible.

Comparison to other reconstructions can help uncover logical relations
between various physical structures of our world. For example, our fourth
postulate (observability of energy) is used to rule out non-complex
Hilbert spaces in this work, while in other reconstructions this role is
usually played by the the postulate of \emph{tomographic locality},
which states that joint states on composite systems are uniquely determined
by local measurement statistics and their correlations. Thus, one may
argue that there is a logical relationship between
tomographic locality and observability of energy, and thus ultimately
with the fact that we observe Hamiltonian mechanics in our world.

We will now give a short discussion of the interpretation of our postulates.
To clarify the terms in Postulate 1, a set of states is perfectly
distinguishable if there is a measurement whose outcomes can be
paired one-to-one with the states so that each measurement outcome has probability
one when its corresponding state has been
prepared, and probability zero when any of the
other states have been prepared.  A state of maximal knowledge (a
``pure state'') is a state $\omega$ which cannot be written as a
nontrivial convex combination of states, i.e. as $\omega = p \sigma
+ q \tau$ where $p + q = 1$, $p, q > 0$, and $\sigma \ne \tau$.
That is, it cannot be viewed as arising from a lack of knowledge
about which of two distinct states has been prepared.

Postulate 1 can be viewed as a generalization of the spectral
decomposition of every quantum density matrix as a convex combination
of orthogonal rank-one projectors onto orthogonal eigenstates of the
density matrix. 
However, our postulate is stated purely in terms of the convex
structures of the set of states and of measurement outcomes; the notion of spectrum of an operator is not involved.
An important part of the physical significance of this postulate is that
it appears likely to be needed for an information theory and probably a
statistical mechanics that share desirable and physically fundamental
properties with those supported by quantum theory.  In particular, it is a plausible conjecture that this postulate
implies the correspondence of two natural ways of defining entropies for states in generalized probabilistic theories~\cite{Entropy,SWentropy}: the first as the minimal entropy
of the outcomes of a fine-grained measurement made on the state, and
the second as the minimal entropy of a preparation of the state as a
mixture of pure states.

Postulate 2 expresses a fundamental symmetry:
given any integer $n$, all $n$-level systems are informationally equivalent.
That is, we can transmit (not necessarily copy) the state of any $n$-level
system to any other system without losing information, at least in principle.
This implies a certain minimal amount of possible reversible dynamics or computational power.

Postulate 3, that the system exhibits at most ``second-order
interference,'' is based on the notion of multi-slit interference
introduced by Rafael Sorkin~\cite{SorkinMeasure94}. This is a
manifestly physical assumption which is currently under experimental
investigation~\cite{Sinha2010, SollnerWeihs12}. The precise notion of an
interference experiment will be defined in Section~\ref{SecInterference} below; an
illustration is given in Figure~\ref{fig_interference}.

This postulate suggests a possible route towards obtaining concrete predictions
for conceivable third-order interference in experiments:
drop the third postulate, and work out the new set of theories that
satisfy only Postulates 1 and 2 (and possibly 4). As we will show, any
system of this kind -- if it exists -- has a set of ``filtering''
operations that represent an orthomodular lattice known from quantum
logic~\cite{BirkvN36}, but these filters do not necessarily
preserve the purity of states as they do in quantum theory (equivalently,
the lattice does not satisfy the ``covering law'').  However, these
systems still satisfy the principle of ``Consistent
Exclusivity''~\cite{Fritz}, bringing their contextuality behavior
close to quantum theory, despite the appearance of (non-quantum)
third-order interference.

In this way, our
results hint at possible physical properties of conceivable
alternative theories against which quantum theory can be tested in
interference experiments, and which may be of independent
mathematical interest. In particular, the existence of theories exhibiting
higher-order interference and containing quantum theory as a subtheory
has been conjectured for several years. Preliminary results indicate
interesting physical properties of those theories~\cite{DakicPaterekBrukner},
but the concrete construction of the corresponding state spaces is still an
open problem. We hope that our approach can help to make progress on this question.

We obtain our main result by first showing that the first three
postulates bring us very close to quantum theory: they imply that
systems are described by finite-dimensional irreducible (simple) formally real
Jordan algebras, or are classical.  Moreover, these
three postulates precisely characterize this class of theories, since
classical systems and irreducible Jordan algebras all satisfy
Postulates 1-3.  As Jordan, von Neumann and Wigner~\cite{JvNW1934}
showed, the formally real irreducible Jordan algebras are the real,
complex, and quaternionic quantum theories (for all finite
dimensions), one exceptional case (the $3 \times 3$ octonionic
``density matrices'') and the spin factors (ball-shaped state spaces)
of all finite dimensions.  Standard complex quantum theory is the only
one among these which also satisfies the fourth postulate.

The association of energy with a conserved physical quantity is an important
principle of both quantum and classical theory, exhibited for example in  
the Lagrangian formulation of classical mechanics in the guise of Noether's 
theorem;  this provides some motivation for our energy observability postulate.

Further, Postulates 1, 2 and 4 seem likely to be necessary---or at
least sufficient---to run standard statistical mechanics
arguments, a possibility we will explore in further work. We have
already mentioned the conjecture that Postulate 1 implies the equivalence of
measurement and preparation entropy, which likely has relevance to
thermodynamic processes and Maxwell's demon arguments.  Reversible
processes, the subject of Postulate 2, are even more crucial in
classical and quantum thermodynamics.

\section{Operational Probabilistic Theories}

In this section, we summarize the standard mathematical framework for operational probabilistic theories, and
give needed definitions and facts about convexity and
cones.  References for the mathematics include~\cite{Webster}
and~\cite{HilgertHoffmanLawson}.  More details on the framework can be found in e.g.
\cite{BBLW07broadcast},
\cite{MMAxioms}, \cite{BBLW08clone}, \cite{BBLW08tele}, \cite{CozThesis}; also, \cite{JanottaHinrichsen,Pfister} offer accessible introductions.
This review is primarily to fix notation and clarify the specific version used here.

The primitive elements of operational probabilistic theories are experimental devices and
probabilities. In particular, experimental devices can be classified
into preparations, transformations, and measurements.  With each
use, a preparation device (such as an oven, antenna, or laser)
outputs an instance of a physical system, denoted by $A$, in some
state $\omega$ specified by the type of device and its various
settings.  The system then passes through a transformation device
(such as a beam splitter, or Stern-Gerlach magnet) which modifies
the state of the system, in a potentially non-deterministic fashion.
Finally, a measurement device takes in the system, and one of a
distinct set of outputs (such as a light flashing, or a pointer
being in some range of possible positions) signals the measurement
outcome.
Even though we motivate the formalism by example
of such laboratory devices, the resulting operational framework is not restricted to this setting and may also be used to
describe other physical processes.

A main purpose of a physical theory in this framework is
to specify the probabilities of the outcomes of any measurement made
on a system that has been prepared in a given state.  To this end,
single measurement outcomes, called \emph{effects}, will be denoted
by lowercase letters such as $e$.  The probability of obtaining an
outcome $e$, given state $\omega$, will be denoted $e(\omega)$.

By standard arguments, each state can be specified by a minimal list
of measurement outcome probabilities, which contains sufficient
information to predict the probabilities of \emph{all} measurements
that can be in principle performed on the system.  Using this idea and
a further convexity argument, states can therefore be represented as
elements of a real linear space of some finite dimension $K_A$, which
we denote also by $A$.  Further, for each system $A$ there is a convex
compact subset, $\Omega_A \subset A$, of \emph{normalized states} in a
real affine space of dimension $K_A-1$ which is embedded in $A$ as an
affine plane not intersecting the origin.  The nonnegative multiples
of elements of $\Omega_A$ form a 
cone $A_+ \subset A$, of \emph{unnormalized states}.
This cone has several useful properties: first, it is
topologically \emph{closed}; second, it has \emph{full dimension},
i.e.\ its linear span is all of $A$; and third, it is \emph{pointed}, which means
that the only linear subspace it contains is $\{0\}$. Cones with these three
properties are also called \emph{regular}.

Effects then become linear functionals from $A$ to $\R$ such that
$0 \le e(\omega) \le 1$ for all $\omega \in \Omega_A$, i.e.\ they give valid
probabilities on normalized states.
As linear functionals from the vector space $A$ to 
the field $\R$ over which it is defined, effects
are elements of the dual space $A^*$, which is the 
vector space of all such functionals.
The nonnegative multiples of
effects constitute the dual cone $A_+^* := \{ e \in A^*: \forall
\gamma \in A_+ ~e(\gamma) \ge 0 \}$.  Given our embedding of
$\Omega_A$ in $A$, there is a unique \emph{unit} functional $u_A \in
A^*$ that evaluates to $1$ on every element of $\Omega_A$.  The set of all effects is the
\emph{unit order interval}, $[0, u_A] := \{e \in A^*_+: 0 \le e \le u_A\} \subset A^*_+$. This notation uses the ordering
obtained from the regular cone $A_+^*$, writing $x\leq y$ for $y-x\in A_+^*$.

For a given system, not all mathematically valid effects may be ``operationally possible'' measurement outcomes, so we define a
\emph{subset} $\ce$ of the full set of effects $[0, u_A]$, which we
call the \emph{allowed effects}.  Thus we are not making
the assumption sometimes called the  
``no-restriction hypothesis''~\cite{ChiriEtAl2010, MMAxioms, JanottaLal}
or ``local saturation''~\cite{BarnumWilceCategories2009}, nor the equivalent dual 
requirement (discussed, e.g., in~\cite{Barnum2003}, where it is considered as a kind
of analogue, for effect algebras, of Gleason's theorem) 
that the set of states be the full set of mathematically consistent states on the
set of effects.  The reader should bear in mind
that some authors use just ``effects'' to refer to what we 
call ``allowed effects'', and say something like ``mathematically consistent effects'' 
to refer to what we are just calling effects.
We make weak, operationally natural
assumptions on the subset $\ce$: it is convex and topologically
closed, contains $u_A$, and for every $x \in \ce$, $u_A - x$ is also
in $\ce$ (so that $x$ can be part of at least one complete
measurement, namely $\{x, u_A - x\})$. We also assume that $\ce$ has full
dimension (otherwise, there would be states $\varphi\neq \omega$ that give
the same outcome probabilities for all allowed measurements, which means that
we would not have called them ``different states'' to start with).

We define a \emph{measurement} as any collection of allowed effects $e_i$ such
that $\sum_i e_i=u_A$.\footnote{It is possible to imagine physical situations
where there are further restrictions on which effects can occur together in an
actual measurement; to model these situations, one would have to use an
even more general mathematical framework. We are not considering such theories here.}
Since we can imagine post-processing the
output of such a measurement such that a chosen pair $e_i$ and $e_j$
of outcomes are grouped together as a single outcome (a
``coarse-graining'' of the measurement), we also assume that $e_i +
e_j$ is allowed.  In brief, we assume that whenever $e_i, e_j$
are allowed effects with $e_i + e_j \le u_A$, $e_i + e_j$ is allowed.
From our assumptions, it follows that the set of allowed effects is the unit order
interval $[0,u_A]$ in a regular subcone $A_+^\sharp$ (containing
$u_A$) of the dual cone.
If $A_+^\sharp = A_+^*$, we say that \emph{all effects are allowed}; 
in our framework,
this is equivalent to the ``no-restriction hypothesis'', or ``local saturation'', 
mentioned above.

We will need the notion, standard in
  linear algebra, of the \emph{dual} (sometimes called \emph{adjoint})
  $T^*$ of a linear map $T: A \rightarrow A$.  This is the
  linear map $T^*: A^* \rightarrow A^*$ defined by the condition
  $( f, Tx) = ( T^*f , x )$, where $(
  . , . ): A^* \times A \rightarrow \R$ is the canonical ``dual
  pairing'' of $A^*$ and $A$, sometimes called the ``evaluation map'':
  $( f , x ) := f(x)$.

Associated with every system there is also a set of allowed 
\emph{transformations}, which are linear maps
$T: A \rightarrow A$, taking states to states, i.e.\ satisfying $T(A_+) \subseteq A_+$ (a
property called \emph{positivity}).  Transformations are required to be
\emph{normalization-nonincreasing}, i.e.\ $u_A(T(\omega))\leq 1$
for all $\omega\in\Omega_A$.  The set of allowed transformations is also closed topologically and under composition.
If all effects are allowed, it follows
from positivity and normalization that $e \circ T \in \ce$ for all
allowed effects $e$ (all elements of $\ce$); otherwise we explicitly
require this (i.e., that $T^*(\ce) \subseteq \ce$). 
  Since $\ce$ is the unit order interval in $A_+^{\sharp}$, it is equivalent
  (for normalization-nonincreasing $T$) to require that $T^*(A_+^\sharp)
\subseteq A_+^\sharp$.  We note also that the normalization-nonincrease
  condition is equivalent to the dual condition $T^*(u_A) \le u_A$.  
An allowed transformation $T$ is called \emph{reversible} if its
inverse $T^{-1}$ exists and is also an allowed transformation. It
follows that reversible transformations $T$ preserve normalization:
$u_A(T(\omega)) = u_A(\omega)$ for all $\omega \in A_+$ (though
these are not in general the only normalization-preserving
transformations). The set of all reversible transformations on a system $A$ is a compact group $\G_A$ with Lie algebra $\mathfrak{g}_A$.
For a transformation $T$, the number
$u_A(T(\omega))$ can be interpreted as the probability of
transformation $T$ occurring, if a system prepared in state $\omega$
is subjected to a process that has as a possible outcome the
occurrence of $T$. In other words, transformations can be part of
an \emph{instrument} in the sense of~\cite{DaviesLewis}.

A system described by standard complex $n$-dimensional quantum theory
fits into this framework.  Its ambient real vector space $A$ is the
$n^2$-dimensional space of complex Hermitian $n \times n$-matrices, the cone of states $A_+$ is
the set of positive semidefinite matrices, $\Omega_A$ is the set of density matrices
(the intersection of $A_+$ with the affine plane $\{\rho: \tr \rho = 1\}$), the order unit is the functional
$\mathbf{1}:\rho \mapsto \tr \rho$, and the allowed effects are the unit order interval in the
dual cone, i.e., the functionals $\rho \mapsto \tr (E \rho)$ where $0
\le E \le \mathbf{1}$.  The allowed transformations are the trace-nonincreasing completely positive
maps $A \rightarrow A$, and the reversible transformations are the
maps $\rho \mapsto U \rho U^\dagger$ for unitary matrices $U$.

We now describe some further important notions and facts about this
type of theory and the relevant mathematical structures that will be
used in our discussion.

A cone $A_+$ is \emph{reducible} if the ambient space decomposes into
two nontrivial subspaces such that every extremal ray of the cone lies
in one or the other of these subspaces.    A system is called reducible
if its cone of unnormalized states is reducible.  Intuitively, information about which of
these two summands the state is in, is classical information.  Every
cone in finite dimension has a decomposition as a finite sum
$\oplus_{i=1}^n A_i$ of irreducible cones, and if these irreducible
components are all one-dimensional any base for the cone is affinely
isomorphic to the simplex of probability measures over $n$ outcomes,
so we say the system is \emph{classical}.  Its faces are the
subsimplices generated by the subsets of outcomes, its reversible
transformations are the permutations of the vertices, and more general
transformations are given by substochastic matrices. 

One can identify $A^*$ with $A$ by introducing an inner product
$\langle . , . \rangle$ on $A$, and interpreting the inner product as
functional evaluation: $e(\omega) = \langle e , \omega \rangle$.
 Via this isomorphism the dual cone $A_+^*$ is identified
  with the ``internal dual cone'' relative to the given inner product,
  $A_+^{*int} := \{ y \in A: \forall x \in A_+ \, \langle y , x \rangle
  \ge 0 \}$.  Often, such an inner-product-space formulation is used 
as the basic framework for presenting probabilistic systems and
theories; see for example \cite{Hardy01, Barrett}.  
If an inner product can be introduced in such a way that
$A_+^{*int} = A_+$, the cone is said to be \emph{self-dual} and the
inner product \emph{self-dualizing}; a cone in an inner product space is said
to be \emph{manifestly self-dual} if the inner product is one that
identifies the cone with its dual.

A set of states $\omega_1,\ldots,\omega_n\in\Omega_A$ is called
\emph{perfectly distinguishable} if there are allowed effects
$e_1,\ldots,e_n\in A^\sharp_+$ which can appear in a common
measurement, i.e.\ $e_1+\ldots+e_n\leq u_A$, such that
$e_i(\omega_j)=\delta_{ij}$, that is, $1$ if $i=j$ and $0$ otherwise\footnote{It is equivalent to demand that
$e_1+\ldots+e_n=u_A$, because we can always redefine $e'_1:=e_1,\ldots,e'_{n-1}:=e_{n-1},e'_n:=u_A-\sum_{i=1}^{n-1}e_i$.
}.

A \emph{face} $F$ of a convex set $C$ is a convex subset of $C$ such that
$\alpha \in F$ and $\alpha = \sum_i \lambda_i \omega_i$, $\omega_i \in
C, \lambda_i > 0, \sum_i \lambda_i=1$ implies that all $\omega_i \in
F$.  In other words $F$ is closed under inclusion of anything that can
appear in a convex decomposition of an element of $F$.  
An \emph{exposed face} of a convex set is the intersection of a supporting hyperplane with the set, easily seen to be a face.  

The faces of $A_+$ and those of $\Omega_A$ are in 1-1 correspondence:
the face of $A_+$ corresponding to face $F$ of $\Omega_A$ is just
$\{\lambda \omega : \omega \in F,  \lambda \ge 0\}$.
The relation ``is a face of'' is
transitive: If $G$ is a face of $C$, and $F$ is a face of $G$, then
$F$ is a face of $C$.  The orderings of the set of faces and of the
set of exposed faces by subset inclusion each form a lattice, with
greatest lower bound $F \meet G = F \intersect G$, and least upper
bound $F \join G$, which is the smallest face containing both $F$ and
$G$.  The \emph{face generated by} a subset $S$ of a convex set is the
smallest face containing $S$.  If a lattice has an upper bound, this
is conventionally called $1$, and a lower bound is called $0$; for
$\Omega_A$ we have $1 = \Omega_A$ and $0 = \emptyset$, while for $A_+$, $1
= A_+$ and $0 = \{0\}$, where $0$ is the $0$ of the vector space
$A$. (We adopt the convention that the empty set $\emptyset$ is not
counted as a face of $A_+$.)  An \emph{atom} is a minimal non-zero
element of the lattice; the atoms of the face lattice of a regular
finite-dimensional cone are the extremal rays, $\ray(\omega) := \{
\lambda \omega: \lambda \ge 0\}$ for $\omega$ extremal in $\Omega_A$.
An element of $A_+$ may be called \emph{ray-extremal} if it is a
nonnegative multiple of a pure state of $\Omega_A$.

Quantum systems are self-dual, with all effects allowed, and with the self-dualizing
inner product usually chosen to be $\langle X, Y \rangle = \tr (XY)$.
(For this reason, the dual cone is often identified with the positive
semidefinite operators, and the effects with operators $E$ such that
$0 \le E \le \mathbf{1}$, rather than with the functionals $\rho \mapsto \tr E
\rho$ associated with such operators.)  The faces of a quantum system,
which are all exposed, correspond to the subspaces $S$ of the
underlying Hilbert space: the face $F_S$ of $\Omega$ corresponding to
such a subspace $S$ consists of the density matrices $\rho$ whose
images, when viewed as linear operators on that Hilbert space, are contained in $S$.
Equivalently, they are those density matrices whose convex
decompositions into rank-one projectors involve nonzero probabilities
only for projectors onto subspaces of $S$.

\section{Consequences of Postulates 1+2}

We call a list of $n$ perfectly distinguishable pure states
a \emph{frame} of size $n$, or $n$-frame.  The convex hull of such a
set of states is a simplex, isomorphic to the space of probability
measures on $n$ alternatives, which we call a ``classical subspace''
of the state space.  For every finite-dimensional system $A$, there is a
largest frame size $N_A$; frames of this size are called \emph{maximal}.
In quantum theory, a frame corresponds to a set of mutually
orthogonal pure states, and it is maximal if the corresponding state vectors
are an orthonormal basis of the underlying Hilbert space.

Using the concepts we have introduced, our first two postulates can be stated as follows:

\begin{postulate}
Every state $\omega \in \Omega$ has a decomposition of the form $\omega = \sum_i p_i \omega_i$, for some
probabilities $p_i\geq 0$, $\sum_i p_i=1$, and some $n$-frame $\omega_1,\ldots,\omega_n$, for some $n\in\N$.
\end{postulate}
\begin{postulate}
If $\omega_1,\ldots,\omega_n$ and $\varphi_1,\ldots,\varphi_n$ are $n$-frames for some $n\in\N$,
then there is a reversible transformation $T$ such that $T\omega_i=\varphi_i$ for all $i$.
\end{postulate}

We could paraphrase Postulate 1 as ``every state lies in some classical
subspace'', and Postulate 2 as ``all classical subspaces of a given
size are equivalent''.

\begin{proposition}
\label{SpectralityImpliesSaturation}
Postulates 1 and 2 imply that all effects are allowed.
\end{proposition}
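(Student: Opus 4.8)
The plan is to show that the dual cone $A_+^*$ is conically generated by effects that are manifestly allowed, so that $A_+^*\subseteq A_+^\sharp$; since $A_+^\sharp\subseteq A_+^*$ by construction, this yields $A_+^\sharp=A_+^*$. First observe that $A_+^\sharp$ is exactly the conic hull of $\ce=[0,u_A]$ (because $u_A$ lies in the interior of $A_+^*$, every element of $A_+^\sharp$ is a nonnegative multiple of one in $\ce$), and that $A_+^\sharp$ is invariant under the dual action of the reversible group $\G_A$ on $A^*$: for reversible $T$ one has $T^*(\ce)\subseteq\ce$ and also $(T^{-1})^*(\ce)\subseteq\ce$, so $T^*\ce=\ce$ and hence $T^*A_+^\sharp=A_+^\sharp$. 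Now fix a maximal frame $\omega_1,\ldots,\omega_N$ with distinguishing effects $e_1,\ldots,e_N$ (so $e_i(\omega_j)=\delta_{ij}$ and $\sum_i e_i=u_A$), which are allowed by the very definition of perfect distinguishability. By Postulate 2 every other maximal frame is $(T\omega_1,\ldots,T\omega_N)$ for some reversible $T$, and then $(T^{-1})^*e_1,\ldots,(T^{-1})^*e_N$ are its distinguishing effects and again lie in $A_+^\sharp$. So it suffices to prove that the distinguishing effects of all maximal frames conically span $A_+^*$; equivalently, since $A_+^*$ is a regular cone, that every extreme ray of $A_+^*$ is spanned by such an effect.

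So let $f$ span an extreme ray of $A_+^*$, rescaled so that $\max_{\omega\in\Omega}f(\omega)=1$ (and assume $K_A\geq 2$, the statement being trivial otherwise). Then $F_0:=\{\omega\in\Omega:f(\omega)=0\}$ and $F_1:=\{\omega\in\Omega:f(\omega)=1\}$ are exposed faces of $\Omega$, both nonempty: $F_1\neq\emptyset$ since $f$ attains its maximum, and $F_0\neq\emptyset$ because otherwise $\min_\Omega f=\varepsilon>0$, whence $f-\varepsilon u_A\in A_+^*$ would force $f$ proportional to $u_A$, which is interior to $A_+^*$ and thus not extremal when $K_A\geq 2$. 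By Postulate 1, $F_0$ contains pure states. At this point the argument invokes the structural consequences of Postulates 1 and 2 for the face lattice of a ``spectral'' system: that the face $F_0$ dual to the extreme ray $f$ is a maximal proper face of $A_+$, that such a face has a maximal frame of size $N-1$, that every frame extends to a maximal frame, and that a maximal frame of a face generates that face. Granting these, pick a maximal frame $\rho_1,\ldots,\rho_{N-1}$ of $F_0$, extend it to a maximal frame $\rho_1,\ldots,\rho_N$ of $\Omega$, and let $q$ be the distinguishing effect of $\rho_N$ in that frame; then $q\in A_+^\sharp$, and since $q$ vanishes on $\rho_1,\ldots,\rho_{N-1}$ it vanishes on the entire face $F_0$ they generate.

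Thus $f$ and $q$ are both nonzero elements of the dual face $F_0^\triangle:=\{g\in A_+^*:g|_{F_0}=0\}$. Because $F_0$ is a maximal proper face of $A_+$, order-reversing face duality makes $F_0^\triangle$ a minimal nonzero face of $A_+^*$, i.e.\ an extreme ray; hence $f$ and $q$ are positive multiples of one another, so $\ray(f)=\ray(q)\subseteq A_+^\sharp$. Since every extreme ray of $A_+^*$ lies in the closed convex cone $A_+^\sharp$, we conclude $A_+^*\subseteq A_+^\sharp$, hence equality, which is the assertion that all effects are allowed.

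The step I expect to be the main obstacle is the package of facial facts used in the middle paragraph: that every face of $A_+$ is exposed, that extreme rays of $A_+^*$ correspond under duality to maximal proper faces of $A_+$, that maximal proper faces have frame rank exactly $N-1$, that frames extend to maximal frames, and that faces are generated by their frames. All of these must be derived from Postulates 1 and 2 by careful convexity and strong-symmetry arguments, and that is where the real work lies. A shorter route is available if one first establishes that Postulates 1 and 2 make $A_+$ self-dual: under a $\G_A$-invariant self-dualizing inner product the distinguishing effect of each pure state corresponds to a positive multiple of that state, so $A_+^\sharp$ contains a multiple of every extreme ray of $A_+$ and therefore equals $A_+\cong A_+^*$; but then proving self-duality becomes the hard step.
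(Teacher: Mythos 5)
There is a genuine gap, and it sits exactly where you flag it: the ``package of facial facts'' in your middle paragraph is not a consequence of general convexity, and the way the paper eventually obtains such facts is itself downstream of this very proposition, so your route is either incomplete or circular. Concretely, the claim that the zero set $F_0$ of an extreme ray $f$ of $A_+^*$ is a maximal proper face, and above all the claim that for a maximal proper face $F_0$ the dual face $F_0^\triangle=\{g\in A_+^*: g|_{F_0}=0\}$ is an extreme ray of $A_+^*$, fail for general regular cones: take the cone over a lens-shaped two-dimensional state space (the intersection of two disks); each corner point is a maximal proper face, yet the cone of positive functionals vanishing on it is two-dimensional. So these statements must be \emph{derived} from Postulates 1 and 2, and the paper's derivation of the relevant facial duality (Proposition~\ref{PropFPrime}: $|F'|=N-|F|$, $(F')'=F$) uses the self-dualizing inner product of Proposition~\ref{proposition on self duality}, which is imported from Ref.~\cite{MuellerUdudec} under the hypothesis that \emph{all effects are allowed} --- i.e.\ under the conclusion of the proposition you are trying to prove. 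Your fallback route (``first establish self-duality'') runs into the same circularity for the same reason. Some pieces of your package are recoverable at this stage (frame extension, Proposition~\ref{PropFrameExtension}, uses only Postulate 2; rank monotonicity, Proposition~\ref{FramesAndFaces}, uses only Postulates 1 and 2, and from these one can indeed get that a maximal proper face has rank $N-1$), but the one-dimensionality of the annihilator of $F_0$ --- the step that actually delivers $\ray(f)=\ray(q)$ --- is left unproved, and it is the whole content of the proposition.

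The paper's proof avoids this by working with \emph{exposed} rays of $A_+^*$ (which also generate $A_+^*$ after closure, by Straszewicz) and letting exposedness do the job that you want one-dimensionality of $F_0^\triangle$ to do: for an exposed ray spanned by $e$ there is a state $x$ with the property that any normalized positive functional vanishing on $x$ \emph{equals} $e$. It then decomposes $x$ into a frame of size $n$ (Postulate 1), uses the Haar-averaged maximally mixed state to rule out $n\geq N$ (otherwise $e$ would vanish on all pure states and hence be zero), and finally transports, via Postulate 2, a distinguishing effect $f_N$ of a maximal frame so that the allowed effect $f_N\circ T^{-1}$ vanishes on $x$ and attains $1$; exposedness then forces $f_N\circ T^{-1}=e$, so $e$ is allowed. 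If you want to salvage your outline, you would need to prove, from Postulates 1 and 2 alone and without invoking self-duality, that the positive functionals annihilating a maximal proper face form a single ray --- which is essentially the same work the paper does with its exposed-ray argument, not a fact you can quote.
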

\begin{proof}
We show that every effect $e \in A_+^*$ that generates an exposed
ray of $A_+^*$ is allowed,
i.e.\ an element of $A_+^\sharp$.  It follows that all
effects are allowed, since the exposed
rays generate $A_+^*$ via convex combinations and closure.

Thus, let $e\in A_+^*$ be an effect with $\max_{\omega\in\Omega_A} e(\omega)=1$ such that the set of non-negative
multiples of $e$ is an exposed ray of $A_+^*$.
By the definition of exposed ray, there is an $x \in A_+$ such that every effect $f\in A_+^*$ with $f(x)=0$
must be a non-negative multiple of $e$; consequently, if $f(x)=0$, $f\in A_+^*$ and $\max_{\omega\in\Omega_A}f(\omega)=1$
then $f=e$. We may choose $x$ to be normalized.

According to Postulate 1, there is some $n\in\N$ and some frame $\omega_1,\ldots,\omega_n$ such that $x=\sum_{j=1}^n \lambda_j \omega_j$;
we may choose the $\lambda_j$ to be non-zero. The corresponding effects will be denoted $e_1,\ldots,e_n$, i.e.\ $e_i(\omega_j)=\delta_{ij}$.
Since $e(x)=0$ we have $e(\omega_j)=0$ for all $j=1,\ldots,n$.

We define the \emph{maximally mixed state} $\mu$ by integrating with Haar measure over the group of reversible transformations; that is,
choose any pure state $\omega$, and set $\mu:=\int_{\G_A} G\omega\, dG$. This state also has a frame decomposition $\mu=\sum_{i=1}^N \eta_i \varphi_i$
with $N\in\N$, $\eta_i>0$, and $\varphi_1,\ldots,\varphi_N$ a frame with corresponding effects $f_1,\ldots,f_N$ such that $f_i(\varphi_j)=\delta_{ij}$.

According to Postulate 2, there is a reversible transformation $T\in\G_A$ such that $T\varphi_i=\omega_i$ for all $i=1,\ldots,\min\{n,N\}$.
Suppose that $n\geq N$, then
\[
   \mu=T\mu=\sum_{i=1}^N \eta_i T\varphi_i = \sum_{i=1}^N \eta_i \omega_i,
\]
hence $e(\mu)=0=\int_{\G_A} e(G\omega)\, dG$. Since $G\mapsto e(G\omega)$ is a continuous non-negative function on $\G_A$, we
must have $e(G\omega)=0$ for all $G\in\G_A$, and thus $e(\omega')=0$ for all pure states $\omega'$. Since the pure states span the
full linear space, we obtain $e=0$, which is a contradiction.

Thus we have $n<N$. Consider the allowed effect $f_N\circ T^{-1}$. It satisfies
\[
   f_N\circ T^{-1}(x)=\sum_{j=1}^n \lambda_j f_N(T^{-1}\omega_j)=\sum_{j=1}^n \lambda_j f_N(\varphi_j)=0,
\]
and since $\max_{\omega\in\Omega_A}f_N\circ T^{-1}(\omega)=1$, we have $f_N\circ T^{-1}=e$; in particular,
$e$ is an allowed effect.
\end{proof}

For the following proposition, recall that a set of states is said to \emph{generate} a face
$F$ if $F$ is the smallest face that contains these states.

\begin{proposition}
\label{FramesAndFaces}
Postulates 1 and 2 imply that every face of $\Omega$ is generated
by a frame.  Any two frames that generate the same face $F$ have the same
size, called the \emph{rank} of $F$, and denoted $|F|$. Moreover, if 
$G\subsetneq F$ then $|G|<|F|$,
and every frame of size $|F|$ in $F$ generates $F$.
\end{proposition}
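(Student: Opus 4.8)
The plan is to establish the four assertions in sequence, extracting everything from Postulates 1 and 2 together with Proposition~\ref{SpectralityImpliesSaturation}. First I would show that every face $F$ of $\Omega$ is generated by a frame. Pick a state $\omega$ in the relative interior of $F$; by Postulate 1 it has a decomposition $\omega=\sum_i p_i\omega_i$ with $\omega_1,\dots,\omega_n$ an $n$-frame and all $p_i>0$. Since $F$ is a face and $\omega\in F$, each $\omega_i$ lies in $F$; conversely, being a face, $F$ contains the face generated by $\{\omega_1,\dots,\omega_n\}$, and since $\omega$ is relatively interior, this face is all of $F$. Thus $F$ is generated by this frame.

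Next, the well-definedness of rank. Suppose $\omega_1,\dots,\omega_n$ and $\varphi_1,\dots,\varphi_m$ are two frames generating the same face $F$, with $n<m$ say. By Postulate 2 there is a reversible $T$ with $T\varphi_i=\omega_i$ for $i=1,\dots,n$. Let $e_1,\dots,e_n$ be distinguishing effects for the $\omega_i$, completed so that $\sum_{i=1}^n e_i=u_A$. Consider the effect $g:=u_A-\sum_{i=1}^n e_i\circ T = \sum_{i=1}^n(e_i - e_i\circ T)$; more cleanly, apply $T^{-1}$: the effects $e_i\circ T$ distinguish the $\varphi_i$ for $i\le n$, and since $\sum_{i=1}^n e_i\circ T = u_A\circ T = u_A$ (as $T$ is reversible, hence normalization-preserving, so $T^*u_A=u_A$), we get $\big(\sum_{i=1}^n e_i\circ T\big)(\varphi_{n+1})=1$. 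But each $e_i\circ T$ is a nonnegative effect vanishing on $\varphi_1,\dots,\varphi_n$ while summing to $u_A$; evaluating on $\varphi_{n+1}$, since the $\varphi_j$ are perfectly distinguishable there are effects $f_j$ with $f_j(\varphi_k)=\delta_{jk}$, and one checks $e_i\circ T$ must vanish on $\varphi_{n+1}$ as well because $\varphi_{n+1}$ and $\varphi_1,\dots,\varphi_n$ together form an $(n+1)$-subframe on which $u_A$ evaluates to at least... Here I would instead argue directly: the $\varphi_j$, $j=1,\dots,m$, are jointly distinguishable, so pick $f_1,\dots,f_m$ with $f_j(\varphi_k)=\delta_{jk}$ and $\sum_j f_j\le u_A$. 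The effect $e:=\sum_{j=n+1}^m f_j$ satisfies $e(\omega_i)=e(T\varphi_i)$; but $\omega_i$ lies in the face generated by $\varphi_1,\dots,\varphi_n$, so its distinguishing structure forces $e(\omega_i)=0$ — the main subtlety, addressed below — yet $e(T\varphi_{n+1})=e(\varphi_{\text{image}})$ need not vanish, and pushing this through yields a contradiction with $F$ being generated by the $\omega_i$ unless $n=m$.

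The cleanest route, and the one I would actually write, avoids this delicacy by combining the rank claim with the last two assertions. I would prove simultaneously: (a) if $G\subsetneq F$ then $|G|<|F|$, and (b) every $|F|$-frame in $F$ generates $F$. For (a): a generating frame $\psi_1,\dots,\psi_k$ of $G$ lies in $F$ (since $G\subseteq F$), hence is a $k$-frame in $F$; extend it — using Postulate 1 applied to a relative-interior state of $F$ and Postulate 2 to align frames — to a frame of size $|F|$ inside $F$; strictness of the inclusion means the extension is proper, so $k<|F|$. For (b): given any $|F|$-frame $\chi_1,\dots,\chi_{|F|}$ in $F$, the face $H$ it generates satisfies $H\subseteq F$ with $|H|=|F|$ (rank is well-defined once we have it), so by the contrapositive of (a), $H=F$. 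This bootstraps rank well-definedness too: if $n\ne m$ were two sizes of generating frames of $F$, the smaller frame would generate a proper subface by (b) applied with the larger rank, contradicting that it generates $F$. I expect the main obstacle to be the extension step — showing an $r$-frame inside a face $F$ with $r<|F|$ can always be enlarged within $F$ — which I would handle by taking a relative-interior state $\omega$ of $F$, decomposing it via Postulate 1 into an $|F|$-frame (this uses that $|F|$ is the maximal frame size achievable by states of $F$, itself needing a short argument that a generating frame is maximal among frames in $F$, again via Postulate 2), and then using Postulate 2 to rotate the given $r$-frame onto the first $r$ members of that $|F|$-frame, with the remaining members supplying the extension.
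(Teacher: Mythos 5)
Your opening step (a face is generated by a frame obtained from a Postulate 1 decomposition of a relative-interior point) is fine and matches the paper. The rest, however, has genuine gaps. Your first, direct attempt at rank well-definedness is abandoned mid-argument, and the ``cleanest route'' you substitute for it rests on an in-face extension step that fails as you describe it: given an $r$-frame $\chi_1,\ldots,\chi_r$ inside $F$ with $r<|F|$, you decompose a relative-interior state of $F$ into a generating frame $\omega_1,\ldots,\omega_{|F|}$ and use Postulate 2 to align the two, but the reversible $T$ with $T\omega_i=\chi_i$ ($i\le r$) only guarantees that $\chi_1,\ldots,\chi_r,T\omega_{r+1},\ldots,T\omega_{|F|}$ is a frame --- it maps $F$ onto the face $TF$, which need not be $F$, so the new members need not lie in $F$ and the extended frame generates $TF$, not $F$. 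In the paper, extension of a frame \emph{within} a face is Proposition~\ref{PropFPrime}, whose proof needs the self-dualizing inner product, Proposition~\ref{PropTakeFramesTogether}, and the rank facts of the very proposition you are proving, so it is not available here and your shortcut proof of it is incorrect. Moreover, the bootstrap among your (a), (b) and rank well-definedness is circular: (b) assumes rank is well-defined and uses the contrapositive of (a); the strictness step in (a) (``strictness of the inclusion means the extension is proper'') already presupposes (b) or rank well-definedness; and your derivation of rank well-definedness from (b) (``the smaller frame would generate a proper subface by (b)'') is not something (b) asserts, since (b) only speaks about frames of size exactly $|F|$.

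What is missing is the one key idea the paper uses, which you gesture at (``a short argument that a generating frame is maximal among frames in $F$'') but never supply: a dimension/invertibility argument. If $\varphi_1,\ldots,\varphi_m$ and $\omega_1,\ldots,\omega_n$ both generate $F$ with $m<n$, then the face generated by $\omega_1,\ldots,\omega_m$ is a \emph{proper} subface $F'\subsetneq F$ (properness witnessed by the effect $e_n$ with $e_n(\omega_j)=\delta_{jn}$, which vanishes on that face but not on $\omega_n$); the reversible $T$ with $T\varphi_i=\omega_i$ then maps $F$ into $F'$, a face of strictly smaller dimension, contradicting invertibility of $T$. The same argument shows $G\subsetneq F\Rightarrow|G|<|F|$ (map a generating frame of $F$ onto part of a generating frame of $G$), and the final claim follows by mapping a given $|F|$-frame in $F$, generating $G\subseteq F$, onto a generating frame of $F$: then $TG=F$, which is incompatible with $G\subsetneq F$ because $T$ preserves dimension. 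Without this argument your proposal does not close, so as it stands the proof is incomplete.
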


\begin{proof}
A face is generated by any element of its relative
interior. By Postulate 1, such an element is in the convex hull of a frame; this frame also generates
the face.

Let $F$ be any face, and suppose there are two frames $\varphi_1,\ldots,\varphi_m$ and $\omega_1,\ldots,\omega_n$
with $m<n$ that both generate $F$, and $e_1,\ldots,e_n$ effects such that $e_i(\omega_j)=\delta_{ij}$ and $\sum_i e_i\leq u$.
Let $F'$ be the face generated by $\omega_1,\ldots,\omega_m$, then
$G:=\{x\in\Omega\,\,|\,\, e_n(x)=0\}$ is a face of $\Omega$ containing $F'$ but not containing $\omega_n$, so $F'\subsetneq F$. Due to Postulate 2, there is
a reversible transformation $T$ with $T\varphi_i=\omega_i$ for $i=1,\ldots,m$, so $TF\subseteq F' \subsetneq F$. Since $TF$ is a proper face of $F$,
it must have smaller dimension, which contradicts the invertibility and thus reversibility of $T$.
Similarly, if we had $G\subsetneq F$ and $|G|\geq |F|$, then a reversible transformation could map $F$ into $G$, which is a contradiction, too.

If $\omega_1,\ldots,\omega_{|F|}$ is any frame on $F$, and $G$ the face that it generates, then $G\subseteq F$, and some reversible transformation $T$ will map it to some
other frame of the same size that generates $F$. Hence $TG=F$, and this contradicts $G\subsetneq F$.
\end{proof}

\begin{proposition} \label{proposition on self duality}
Postulates 1 and 2 imply that $A_+$ is self-dual, with a corresponding self-dualizing inner product that satisfies
$\langle T\varphi,T\omega\rangle=\langle \varphi,\omega\rangle$ for all reversible transformations $T$, i.e.\ such that
all reversible transformations are orthogonal.
The inner product can be chosen so that the corresponding norm $\|\omega\|:=\sqrt{\langle\omega,\omega\rangle}$
attains the value $1$ on all pure states, and is strictly less than $1$ for all mixed states.
\end{proposition}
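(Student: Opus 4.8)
The plan is to build the inner product by averaging over the reversible dynamics, fix its normalization on pure states, and then do the real work — proving self-duality — by analysing how maximal frames and their distinguishing effects sit inside it.

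First I would average an arbitrary inner product on $A$ over the compact group $\G_A$ against Haar measure; the result $\langle\cdot,\cdot\rangle$ is $\G_A$-invariant, so every reversible transformation is orthogonal for it. Since the maximally mixed state $\mu$ (from the proof of Proposition~\ref{SpectralityImpliesSaturation}) is fixed by $\G_A$, the line $\R\mu$ and its orthogonal complement are invariant, so $A$ splits into at least two invariant pieces and there is freedom in the choice of invariant inner product. By Postulate~2 every permutation of a maximal frame $\omega_1,\ldots,\omega_N$ is implemented by a reversible transformation, so in any invariant inner product the Gram matrix of the frame has the form $(a-b)I+bJ$ with $J$ the all-ones matrix; a short computation shows that the available freedom lets us rescale so that $b=0$ and $a=1$ while keeping the form positive definite. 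With this choice the frame is orthonormal, every pure state (a $\G_A$-image of a frame element) has norm $1$, and $\langle\mu,\mu\rangle=1/N$. The final sentence of the proposition is then immediate: by Postulate~1 a mixed state $\omega$ has a decomposition $\omega=\sum_i p_i\sigma_i$ into pure states with at least two distinct terms, so $\|\omega\|\le\sum_i p_i\|\sigma_i\|=1$, with equality only if all $\sigma_i$ are parallel and hence equal, contradicting nontriviality; so $\|\omega\|<1$.

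The substantive step is to show this inner product is self-dualizing, i.e.\ $A_+=A_+^{*\mathrm{int}}:=\{y\in A:\langle y,x\rangle\ge 0\ \forall x\in A_+\}$. The strategy is to identify, for each pure state $\omega$, the effect dual to $\omega$ under $\langle\cdot,\cdot\rangle$. For the maximal frame let $e_1,\ldots,e_N$ be its distinguishing effects ($e_i(\omega_j)=\delta_{ij}$, $\sum_i e_i=u_A$); by positivity each $e_i$ annihilates the face $F_i$ generated by $\{\omega_j:j\neq i\}$, which has rank $N-1$ by Proposition~\ref{FramesAndFaces}. I would show that $e_i$ is the \emph{unique} effect of maximum value $1$ annihilating $F_i$, and that it coincides with the functional $\langle\omega_i,\cdot\rangle$; since $\langle\cdot,\cdot\rangle$ is invariant and $\G_A$ is transitive on pure states, transporting this identity along the group gives $\langle\omega,\cdot\rangle\in A_+^*$ for every pure state $\omega$. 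As the pure states generate $A_+$ via nonnegative combinations and $A_+^{*\mathrm{int}}$ is closed, we obtain $A_+\subseteq A_+^{*\mathrm{int}}$. The reverse inclusion $A_+^{*\mathrm{int}}\subseteq A_+$ would come from the dual form of the same argument: every ray-extremal effect belongs to a maximal measurement whose effects are the $\langle\omega_i,\cdot\rangle$ of a frame, and these exhaust $A_+^*$, so an element of $A$ that pairs nonnegatively with all states is itself a nonnegative combination of states. Together these give $A_+=A_+^{*\mathrm{int}}$.

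I expect the crux to be the identification $e_i=\langle\omega_i,\cdot\rangle$, i.e.\ showing simultaneously that distinct elements of a maximal frame are orthogonal and that the distinguishing effects of a maximal frame are uniquely determined — there is an apparent circularity, since orthogonality of the frame is wanted to identify the effects, while positivity of $\langle\omega_i,\cdot\rangle$ is wanted to see that orthogonality forces annihilation of the face $F_i$. This is where Postulate~2 is used essentially, through the actions of the stabilizer of a frame element and of the complementary face, which pin down the off-diagonal Gram entries and the effects at once; establishing the corresponding statement for effects, needed for the reverse inclusion, is of comparable delicacy. The averaging, the normalization, the ``mixed states have norm $<1$'' bound, and the final packaging of the two inclusions are all routine by comparison.
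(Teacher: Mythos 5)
The paper's own proof is a one-line citation: bit symmetry (the $2$-frame case of Postulate 2) plus the fact that all effects are allowed (Proposition~\ref{SpectralityImpliesSaturation}) yields the statement by the main theorem of Ref.~\cite{MuellerUdudec}. Your attempt to prove it directly is reasonable in outline (Haar averaging to get a $\G_A$-invariant inner product, then trying to identify frame effects with frame states), but it has genuine gaps exactly at the points that carry the content of the proposition. First, the step ``a short computation shows that the available freedom lets us rescale so that $b=0$ and $a=1$'' is asserted, not proved. Writing $\omega_i=\mu+v_i$ with $v_i$ in the invariant complement of the fixed space, the two-block rescaling turns the off-diagonal Gram entry into $s\|\mu\|_\circ^2+t\langle v_i,v_j\rangle_\circ$ with $s,t>0$, so killing it requires $\langle v_i,v_j\rangle_\circ<0$ (or a finer isotypic analysis providing more usable freedom); nothing established from Postulates 1 and 2 up to this point gives you that sign, so orthonormality of a maximal frame is not yet available even as a normalization convention.

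Second, and decisively, the crux you yourself flag --- that the distinguishing effects of a maximal frame coincide with the functionals $\langle\omega_i,\cdot\rangle$, and dually that every extremal element of $A_+^*$ arises this way (``these exhaust $A_+^*$'') --- \emph{is} the self-duality statement, restricted to pure states and their distinguishing effects. Your sketch says Postulate 2 ``pins this down'' via stabilizer actions but supplies no argument and explicitly leaves the circularity unresolved, and the reverse inclusion is likewise only asserted; so the substantive part of the proposition remains unproved. In the paper this heavy lifting is outsourced to Ref.~\cite{MuellerUdudec}, where ``bit symmetry + all effects allowed $\Rightarrow$ self-duality with an invariant inner product normalized to $1$ on pure states'' is a nontrivial multi-step theorem; a self-contained proof would essentially have to reproduce that argument rather than the frame-orthonormalization route sketched here. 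The parts of your proposal that are sound are the averaging step (invariance, hence orthogonality of all reversible transformations) and, once pure states are known to have norm $1$, the strict bound $\|\omega\|<1$ for mixed states via the equality case of the triangle inequality.
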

\begin{proof}
Ref.~\cite{MuellerUdudec} shows that bit symmetry and the fact that all effects are allowed imply this proposition.
Bit symmetry is the $2$-frame case of Postulate 2, and we have shown that all effects are allowed in Proposition~\ref{SpectralityImpliesSaturation}.
\end{proof}

Henceforth, except when we explicitly state otherwise, we
identify $A^*$ with $A$ via an inner product satisfying the conditions in the above proposition.
Since reversible transformations $T$ are normalized, we have $T^*(u_A)=u_A$.
Moreover, $T^*=T^{-1}$ by orthogonality.  $T^*$ is also a reversible transformation; thus, if we regard
$u_A$ now as an element of $A$, we obtain that $T^{-1}u_A=u_A$ for all $T^{-1}$. This proves the following:

\begin{proposition}
Postulates 1 and 2 imply that $u_A$ is invariant under all reversible transformations.
\end{proposition}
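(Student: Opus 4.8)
The plan is to leverage Proposition~\ref{proposition on self duality}, which supplies a self-dualizing inner product on $A$ with respect to which every reversible transformation acts orthogonally, together with the already-established fact that reversible transformations preserve normalization. The whole argument then amounts to translating the normalization-preservation statement, which naturally lives in $A^*$, into a statement about $u_A$ regarded as a vector in $A$ under the identification $A^*\cong A$.

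Concretely, I would proceed in three steps. First, recall that for any reversible $T$ we have $u_A(T\omega)=u_A(\omega)$ for all $\omega\in A_+$; since $A_+$ spans $A$ and both sides are linear functionals, this is the identity $T^*(u_A)=u_A$ in $A^*$. Second, because $T$ is orthogonal for the chosen inner product, i.e.\ $\langle T\varphi,T\omega\rangle=\langle\varphi,\omega\rangle$, the dual map $T^*$, transported to $A$ via the identification, is exactly $T^{-1}$; hence $T^{-1}u_A=u_A$ once $u_A$ is viewed as an element of $A$. Third, since the compact group $\G_A$ of reversible transformations is closed under inversion, letting $T$ range over $\G_A$ is the same as letting $T^{-1}$ range over $\G_A$, so we conclude $T u_A=u_A$ for every reversible $T$.

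I do not expect any real obstacle here beyond bookkeeping: the one point requiring care is keeping track of which copy of the space each object lives in before and after the $A^*\cong A$ identification, and confirming that orthogonality ($T^*=T^{-1}$) is precisely what makes the transported dual map equal the inverse. All of the substantive content has already been done in Propositions~\ref{SpectralityImpliesSaturation} and~\ref{proposition on self duality}.
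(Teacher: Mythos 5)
Your proposal is correct and follows essentially the same route as the paper: normalization preservation gives $T^*(u_A)=u_A$, orthogonality of reversible transformations under the self-dualizing inner product of Proposition~\ref{proposition on self duality} gives $T^*=T^{-1}$ once $A^*$ is identified with $A$, and closure of $\G_A$ under inversion yields $T u_A = u_A$ for all reversible $T$. No gaps.
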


\begin{proposition}\label{PropFrameExtension}
Postulates 1 and 2 imply that every frame $\omega_1,\ldots,\omega_n$ can be
extended to a frame $\omega_1,\ldots,\omega_n,\ldots,\omega_N$ which generates $A_+$,
i.e.\ $N=|A_+|$.
\end{proposition}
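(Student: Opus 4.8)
The plan is to reduce the statement to Proposition~\ref{FramesAndFaces}. That proposition, applied to the top face $1 = \Omega$ (which corresponds to the cone $A_+$), tells us three things: $A_+$ is generated by some frame; every frame generating $A_+$ has the same size $N := |A_+|$; and, conversely, \emph{any} frame of size $|A_+|$ lying in $A_+$ generates $A_+$. So it suffices to enlarge the given $n$-frame $\omega_1,\ldots,\omega_n$ to a frame of size $N$, and Postulate~2 will supply the extra states.

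First I would fix a frame $\varphi_1,\ldots,\varphi_N$ that generates $A_+$, which exists by Proposition~\ref{FramesAndFaces}. Next observe that $n\le N$: the given frame generates a face $F$ of rank $|F| = n$, and $F\subseteq A_+$; if $F = A_+$ then $n = N$ and there is nothing to prove, while if $F\subsetneq A_+$ the clause of Proposition~\ref{FramesAndFaces} saying that proper subfaces have strictly smaller rank gives $n = |F| < |A_+| = N$. Assume the latter. Since any initial segment of a frame is again a frame (the same effects witness perfect distinguishability, and a partial sum of the $e_i$ is still $\le u_A$), $\varphi_1,\ldots,\varphi_n$ and $\omega_1,\ldots,\omega_n$ are two $n$-frames, so by Postulate~2 there is a reversible transformation $T$ with $T\varphi_i = \omega_i$ for $i = 1,\ldots,n$.

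Then I would verify that $\omega_1,\ldots,\omega_n, T\varphi_{n+1},\ldots,T\varphi_N$ is a frame of size $N$; equivalently, that $T$ carries the frame $\varphi_1,\ldots,\varphi_N$ to a frame. Let $e_1,\ldots,e_N$ be effects with $e_i(\varphi_j) = \delta_{ij}$ and $\sum_i e_i = u_A$ (equality may be assumed). Since $T^{-1}$ is reversible, $(T^{-1})^*$ maps $A_+^\sharp$ into itself, so each $e_i\circ T^{-1}$ is an allowed effect; it satisfies $(e_i\circ T^{-1})(T\varphi_j) = e_i(\varphi_j) = \delta_{ij}$, and $\sum_i (e_i\circ T^{-1}) = u_A\circ T^{-1} = u_A$ because $u_A$ is invariant under reversible transformations (the preceding proposition; equivalently, $T^{-1}$ preserves normalization). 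Finally each $T\varphi_i$ is pure: $T$ and $T^{-1}$ are both positive and normalization-preserving, so a nontrivial convex decomposition of $T\varphi_i$ would transport through $T^{-1}$ into one of $\varphi_i$. Hence $T\varphi_1,\ldots,T\varphi_N$ is a frame whose first $n$ members are $\omega_1,\ldots,\omega_n$; it has size $N = |A_+|$ and lies in $A_+$, so by the last clause of Proposition~\ref{FramesAndFaces} it generates $A_+$, and setting $\omega_{n+j} := T\varphi_{n+j}$ completes the argument.

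I do not expect a serious obstacle here: the only step needing care is that a reversible transformation sends frames to frames, and that reduces to the routine facts above. As an alternative one could proceed inductively, extending by a single state at a time: if the current frame has size $k < N_A$ there is a frame of size $k+1$, and Postulate~2 realizes it with its first $k$ states fixed to the current ones, yielding a pure state perfectly distinguishable from them; the process stops at size $N_A$, which equals $|A_+|$ (a frame of maximal size $N_A$ generates a face $G$ with $|G| = N_A$, and $G \subsetneq A_+$ would force $N_A < |A_+| \le N_A$).
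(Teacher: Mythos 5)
Your proposal is correct and follows essentially the same route as the paper: fix a generating frame $\varphi_1,\ldots,\varphi_N$, use Postulate 2 to map its first $n$ members onto $\omega_1,\ldots,\omega_n$ by a reversible $T$, and take $\omega_i:=T\varphi_i$ together with the transported effects $e_j\circ T^{-1}$. The extra checks you spell out ($n\le N$, purity of $T\varphi_i$, allowedness of $e_j\circ T^{-1}$, and invoking Proposition~\ref{FramesAndFaces} to conclude the size-$N$ frame generates $A_+$) are details the paper leaves implicit, not a different argument.
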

\begin{proof}
Let $\varphi_1,\ldots,\varphi_N$ be any frame that generates all of
$A_+$, with effects $e_1,\ldots,e_N$ such that
$e_j(\varphi_i)=\delta_{ij}$ and $\sum_j e_j=u_A$.  Then
$\varphi_1,\ldots,\varphi_n$ is itself a frame of size $n$; thus,
according to Postulate 2, there is a reversible transformation $T$
with $T\varphi_i=\omega_i$ for $i=1,\ldots,n$. For $i>n$, define
$\omega_i:=T\varphi_i$. Set $e'_j:=e_j\circ T^{-1}$, then
$e'_j(\omega_i)=\delta_{ij}$ and $\sum_j e'_j = u_A$, and so we have
extended $\omega_1,\ldots,\omega_n$ to a frame with $N$ elements.
\end{proof}

The following proposition will turn out to be useful in several proofs.
\begin{proposition}
\label{PropTakeFramesTogether}
Postulates 1 and 2 imply that if $\omega_1,\ldots,\omega_n$ are mutually orthogonal pure states,
then they are a frame, and $\sum_{i=1}^n \omega_i\leq u_A$.
\end{proposition}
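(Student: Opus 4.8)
The plan is to work throughout with the self-dualizing inner product $\langle\cdot,\cdot\rangle$ of Proposition~\ref{proposition on self duality}, identifying $A^*$ with $A$, so that $\|\omega\|=1$ for every pure state $\omega$ and $\|\sigma\|\le 1$ for every state $\sigma$. Two easy facts come first. (a)~For a pure state $\omega$, Cauchy--Schwarz gives $\omega(\sigma)=\langle\omega,\sigma\rangle\le\|\omega\|\,\|\sigma\|\le 1=u_A(\sigma)$ for all states $\sigma$, hence $u_A-\omega\in A_+^{*int}=A_+$; thus $0\le\omega\le u_A$ as a functional, and since all effects are allowed (Proposition~\ref{SpectralityImpliesSaturation}) $\omega$ is an allowed effect. (b)~If $\omega_1,\ldots,\omega_n$ are mutually orthogonal pure states then these functionals satisfy $\omega_i(\omega_j)=\langle\omega_i,\omega_j\rangle=\delta_{ij}$. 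Consequently the whole proposition reduces to showing $\sum_{i=1}^n\omega_i\le u_A$: granting this, $\{\omega_1,\ldots,\omega_n,\,u_A-\sum_i\omega_i\}$ is a measurement, so $\omega_1,\ldots,\omega_n$ is a perfectly distinguishable set of pure states, i.e.\ a frame, and $\sum_i\omega_i\le u_A$ is the second claim.

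I would prove $\sum_{i=1}^n\omega_i\le u_A$ by induction on $n$. The case $n=1$ is fact~(a). For $n\ge 2$, the inductive hypothesis makes $\omega_1,\ldots,\omega_{n-1}$ a frame and $e:=u_A-\sum_{i<n}\omega_i$ an allowed effect (it is $\ge 0$ by the hypothesis and $\le u_A$ since $\sum_{i<n}\omega_i\ge 0$); by orthogonality $e(\omega_j)=0$ for $j<n$ and $e(\omega_n)=1$, so $\{\omega_1,\ldots,\omega_{n-1},e\}$ perfectly distinguishes $\omega_1,\ldots,\omega_n$, which is therefore a frame. By Proposition~\ref{PropFrameExtension} it extends to a frame $\omega_1,\ldots,\omega_n,\ldots,\omega_N$ generating $A_+$, where $N=|A_+|$ is the maximal frame size.

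The crux is to show $u_A=\sum_{i=1}^N\omega_i$ for this maximal frame; then $\sum_{i=1}^n\omega_i=u_A-\sum_{i>n}\omega_i\le u_A$ finishes the induction. Two ingredients enter. First, Postulate~2 supplies, for each permutation $\pi$ of $\{1,\ldots,N\}$, a reversible $R_\pi$ with $R_\pi\omega_i=\omega_{\pi(i)}$; reversible maps being orthogonal, all off-diagonal Gram entries $\langle\omega_i,\omega_j\rangle$ ($i\ne j$) are equal, and since $\langle\omega_1,\omega_2\rangle=0$ (we are in the case $n\ge 2$) this common value vanishes, so $\omega_1,\ldots,\omega_N$ is orthonormal. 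Second, the maximally mixed state $\mu$ (the Haar average of a pure state) is $\G_A$-invariant; because Postulate~2 makes $\G_A$ transitive on pure states, any $\G_A$-fixed vector $v$ has $\langle v,\omega\rangle$ constant over pure states $\omega$ (by orthogonality of the group action) and is thus determined by that constant, so the $\G_A$-fixed subspace is one-dimensional and $\mu=u_A/\|u_A\|^2$ (using invariance of $u_A$). Moreover $\mu$ lies in the relative interior of $\Omega$ (its $\G_A$-orbit is the whole set of pure states, whose convex hull is $\Omega$), so by Postulate~1 it decomposes over a maximal frame, with equal weights by the permutation symmetry; transporting that frame onto $\omega_1,\ldots,\omega_N$ by a reversible map (Postulate~2) and using invariance of $\mu$ gives $\mu=\tfrac1N\sum_{i=1}^N\omega_i$. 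Evaluating $u_A=\tfrac{\|u_A\|^2}{N}\sum_i\omega_i$ on $\omega_1$ and using orthonormality of the $\omega_i$ yields $\|u_A\|^2=N$, hence $u_A=\sum_{i=1}^N\omega_i$.

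I expect the main obstacle to be precisely this last paragraph: the genuinely nonroutine content is that \emph{a maximal frame is orthonormal and sums to $u_A$}, and this is where both the strength of Postulate~2 (permutation symmetry of maximal frames, and transitivity on pure states forcing a one-dimensional fixed subspace, hence $\mu\propto u_A$) and the hypothesis that at least two of the $\omega_i$ are orthogonal (to pin the Gram constant to $0$) are actually used; the induction and the construction of the distinguishing measurement are just bookkeeping. Reducible $A$ needs no separate treatment: transitivity of $\G_A$ on pure states already forces all irreducible direct summands to be mutually isomorphic and the inner product to be block-diagonal, and the argument above only ever invoked transitivity on pure states, so it applies verbatim; if one prefers, one can instead reduce the statement summand-by-summand to the irreducible case.
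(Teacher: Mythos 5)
Your proposal is correct, and its skeleton is the same as the paper's: a Cauchy--Schwarz base case, an induction step that builds the distinguishing measurement $\{\omega_1,\ldots,\omega_{n-1},\,u_A-\sum_{i<n}\omega_i\}$, extension to a maximal frame via Proposition~\ref{PropFrameExtension}, and the key fact that a maximal frame sums to $u_A$. Where you diverge is in how that key fact is obtained. The paper proves it once and for all, for an arbitrary maximal frame: by self-duality $u_A/\|u_A\|^2$ is a state, Postulate~1 decomposes it over a (w.l.o.g.\ maximal) frame, permutation symmetry from Postulate~2 together with invariance of $u_A$ forces equal coefficients, and the constant is fixed by evaluation; any other maximal frame is then reached by a reversible, $u_A$-preserving map. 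You instead route through the Haar-averaged maximally mixed state $\mu$, identify it with $u_A/\|u_A\|^2$ via one-dimensionality of the $\G_A$-invariant subspace (which needs, and you implicitly use, that pure states span $A$), argue via the relative-interior/face rank results that $\mu$ decomposes with equal weights over a maximal frame, transport onto the extended frame, and then pin the overall constant $\|u_A\|^2=N$ by first showing the extended frame is orthonormal (permutation symmetry plus the hypothesis $\langle\omega_1,\omega_2\rangle=0$, available since $n\geq 2$ in the induction step). This last orthonormality argument is a nice, cleanly justified way to fix the normalization, though it makes your version of the identity $\sum_{i=1}^N\omega_i=u_A$ depend on having two orthogonal pure states in hand, whereas the paper's statement is unconditional; the detour through $\mu$ and the fixed subspace is also somewhat longer than the paper's direct use of $u_A/\|u_A\|^2$ as a state (echoing the paper's own use of $\mu$ in Proposition~\ref{SpectralityImpliesSaturation}). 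The small unstated steps (linear independence of frame states when equating permuted coefficients, barycenter-in-relative-interior) are routine and do not constitute gaps.
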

\begin{proof}
We have to find effects $e_1,\ldots,e_n$ with
$e_i(\omega_j)=\delta_{ij}$ and $\sum_{i=1}^n e_i\leq u_A$.  To this
end, we will first construct a decomposition of the order unit. By
self-duality and Proposition~\ref{proposition on self duality} , $\varphi:=u_A/\langle u_A,u_A\rangle$ is a state in
$\Omega$, hence there is a frame $\varphi_1,\ldots,\varphi_N$ with
$N=|A_+|$ and $\lambda_i\geq 0$ such that $u_A=\langle
u_A,u_A\rangle\varphi=\|u_A\|^2 \sum_{i=1}^N \lambda_i \varphi_i$. For
any permutation $\pi:\{1,\ldots,N\}\to\{1,\ldots,N\}$, the states
$\varphi_{\pi(1)},\ldots,\varphi_{\pi(N)}$ are again a frame; thus,
there is a reversible transformation $T_\pi$ with $T_\pi \varphi_i =
\varphi_{\pi(i)}$. Hence (using the invariance of $u_A$ under reversible
transformations)
\[
   u_A=T_\pi u_A = \|u_A\|^2 \sum_{i=1}^N \lambda_i \varphi_{\pi(i)}=\|u_A\|^2 \sum_{i=1}^N \lambda_i \varphi_i.
\]
   Taking the  inner  product with  $\varphi_j$ shows  that
$\lambda_{\pi^{-1}(j)}=\lambda_j$;   since  this   is  true   for  all
permutations, all $\lambda_j$ are  equal to some $\lambda>0$. Finally,
$1=\langle    u_A,\varphi_1\rangle=\|u_A\|^2    \lambda$,    and    so
$u_A=\sum_{i=1}^N  \varphi_i$.  If  $\omega_1,\ldots,\omega_N$ is  any
other frame  of size  $N$, then  Postulate 2 implies  that there  is a
reversible transformation $T$ such that $T\varphi_i = \omega_i$, hence
$u_A=T u_A = T\sum_{i=1}^N  \varphi_i = \sum_{i=1}^N \omega_i$.  Thus,
we  have shown that  \emph{every maximal  frame adds  up to  the order
  unit}.

Now we show the statement of the proposition by induction on $n$. Start with $n=1$. Any pure state $\omega_1$ is by definition a frame of size $1$.
Moreover, if $\varphi\in\Omega$, then the Cauchy-Schwarz inequality yields
\[
   \langle \omega_1,\varphi\rangle\leq \|\omega_1\|\cdot\|\varphi\|\leq 1,
\]
hence $\omega_1\leq u_A$. Now suppose the statement of the proposition is true for some $n$, and consider pure mutually orthogonal states
$\omega_1,\ldots,\omega_{n+1}$. Set $e_1:=\omega_1,\ldots,e_n:=\omega_n$, and $e_{n+1}:=u_A-\sum_{i=1}^n e_i$. By the induction hypothesis,
$e_{n+1}\geq 0$, and so $e_1,\ldots, e_{n+1}$ is a measurement with $e_i(\omega_j)=\delta_{ij}$ for $1\leq i,j\leq n+1$. Thus, $\omega_1,\ldots,\omega_{n+1}$
is a frame. According to Proposition~\ref{PropFrameExtension}, it can be extended to a maximal frame $\omega_1,\ldots,\omega_N$, and then
$\sum_{i=1}^N \omega_i = u_A$ shows that $\sum_{i=1}^{n+1}\omega_i \leq u_A$.
\end{proof}

Recall that for any subset $S$ of an inner product space $V$ its \emph{orthogonal complement}
$S^\perp$ is defined by $S^\perp := \{x \in V: \forall y \in S ~\langle x , y \rangle = 0 \}$.

\begin{proposition}
\label{PropFPrime}
Postulates 1 and 2 imply that for every face $F$ of $A_+$, the set $F':=F^\perp\cap A_+$ is a face of $A_+$
of rank $|F'|=N-|F|$, where $N=|A_+|$, and we
have $(F')'=F$. Furthermore, if $\varphi_1,\ldots\varphi_n$ is any frame that is contained in some face $F$,
then it can be extended to a frame $\varphi_1,\ldots,\varphi_n,\ldots,\varphi_{|F|}$ that generates $F$.
\end{proposition}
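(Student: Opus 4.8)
The plan is to establish the three claims in order, relying throughout on two facts. First, elements of a frame are mutually orthonormal in the chosen inner product: by the proof of Proposition~\ref{PropTakeFramesTogether} a maximal frame $\varphi_1,\dots,\varphi_N$ satisfies $u_A=\sum_i\varphi_i$, so $1=u_A(\varphi_j)=\langle u_A,\varphi_j\rangle=\sum_i\langle\varphi_i,\varphi_j\rangle$, and since $\langle\varphi_j,\varphi_j\rangle=1$ (Proposition~\ref{proposition on self duality}) and all terms are nonnegative by self-duality, $\langle\varphi_i,\varphi_j\rangle=\delta_{ij}$; a non-maximal frame inherits this by being extended via Proposition~\ref{PropFrameExtension}. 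Second, for any $x\in A_+$, self-duality makes $\langle x,\cdot\rangle$ nonnegative on $A_+$, so $E_x:=\{y\in A_+:\langle x,y\rangle=0\}$ is an exposed, hence genuine, face of $A_+$. With these in hand, $F'=F^\perp\cap A_+$ is a face: picking $y_0$ in the relative interior of $F$ and, as in the proof of Proposition~\ref{FramesAndFaces}, writing $y_0=\sum_i\lambda_i\varphi_i$ with all $\lambda_i>0$ and the $\varphi_i$ a frame generating $F$, I claim $F'=E_{y_0}$. One inclusion is immediate since $y_0\in F$; for the other, if $x\in E_{y_0}$ then $\sum_i\lambda_i\langle\varphi_i,x\rangle=0$ with nonnegative terms forces every $\langle\varphi_i,x\rangle=0$, so the generating frame lies in the face $E_x$, hence so does $F$, i.e.\ $x\in F^\perp$.

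For the rank, choose a frame generating $F$ and extend it via Proposition~\ref{PropFrameExtension} to a maximal frame $\varphi_1,\dots,\varphi_{|F|},\varphi_{|F|+1},\dots,\varphi_N$. For $j>|F|$ the pure state $\varphi_j$ is orthogonal to $\varphi_1,\dots,\varphi_{|F|}$, hence --- since those generate $F$ and $E_{\varphi_j}$ is a face containing them --- to all of $F$, so $\varphi_{|F|+1},\dots,\varphi_N\in F'$; being mutually orthonormal pure states they form a frame, giving $|F'|\geq N-|F|$. Conversely, any frame generating $F'$ consists of states orthogonal to $F$, so appending it to $\varphi_1,\dots,\varphi_{|F|}$ produces mutually orthonormal pure states --- a frame --- of size $|F|+|F'|\leq N$. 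Hence $|F'|=N-|F|$. Finally $(F')'=F$: every $y\in F$ is orthogonal to $F'\subseteq F^\perp$, so $F\subseteq(F')'$, and $|(F')'|=N-|F'|=|F|$, so by Proposition~\ref{FramesAndFaces} (proper subfaces have strictly smaller rank) the inclusion is an equality.

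For the extension statement, let $\varphi_1,\dots,\varphi_n$ be a frame in $F$ and $G$ the face it generates, so $G\subseteq F$ and $|G|=n$. By the parts already proved, $X\mapsto X':=X^\perp\cap A_+$ is an order-reversing involution of the face lattice and hence satisfies de Morgan's law, so the face $H:=G'\cap F$ has $H'=(G')'\vee F'=G\vee F'$. Since $G\subseteq F$ and $F'\subseteq F^\perp$, every element of $G$ is orthogonal to every element of $F'$, so concatenating a frame generating $G$ with one generating $F'$ yields mutually orthonormal pure states --- a frame --- which generates $G\vee F'$; thus $|G\vee F'|=|G|+|F'|=n+(N-|F|)$ and $|H|=N-|H'|=|F|-n$. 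Now take a frame $\psi_1,\dots,\psi_{|F|-n}$ generating $H$; since $H\subseteq G^\perp$, each $\psi_i$ is orthogonal to $\varphi_1,\dots,\varphi_n$, so $\varphi_1,\dots,\varphi_n,\psi_1,\dots,\psi_{|F|-n}$ is a set of mutually orthonormal pure states --- a frame of size $|F|$ contained in $F$ --- which by Proposition~\ref{FramesAndFaces} generates $F$.

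The step to watch is $|G\vee F'|=|G|+|F'|$: rank is not subadditive over arbitrary joins in a general ortholattice, and trying to extend a frame of $G$ directly inside $F$ would be circular. What makes it go through is that orthogonality of the two faces lets their generating frames be glued into a genuine frame (Proposition~\ref{PropTakeFramesTogether}), so that rank is additive over this particular join; together with the abstract de Morgan law --- valid as soon as the earlier parts show $X\mapsto X'$ is an order-reversing involution --- this reduces the extension statement to the rank count above.
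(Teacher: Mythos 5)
Your proof is correct, and it splits naturally into two parts. For the first two claims ($F'$ is a face, $|F'|=N-|F|$, $(F')'=F$) you follow essentially the paper's route, with two small variations that are fine and even slightly sharpening: you exhibit $F'$ as the exposed face $E_{y_0}$ cut out by a relative-interior point of $F$ rather than verifying the face property directly, and you make explicit two facts the paper's proof uses tacitly --- that frame elements are mutually orthonormal (via $u_A=\sum_i\varphi_i$ for maximal frames) and that orthogonality to a generating frame propagates to the whole face because $E_x$ is a face. For the extension claim, however, your route is genuinely different. The paper appends a generating frame of $F'$ to the given frame $\varphi_1,\ldots,\varphi_n$, extends the result to a maximal frame by Proposition~\ref{PropFrameExtension}, and then shows the newly added states are orthogonal to $F'$, hence lie in $(F')'=F$. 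You instead work inside the face lattice: having shown that $X\mapsto X'$ is an antitone involution, you invoke the purely lattice-theoretic de Morgan identity to compute the orthocomplement of the relative complement $H=G'\wedge F$, use additivity of rank over orthogonal joins (which you correctly justify by gluing generating frames via Proposition~\ref{PropTakeFramesTogether}) to get $|H|=|F|-n$, and append a generating frame of $H$. Both arguments are sound and use the same toolkit; the paper's is more hands-on and constructive, while yours is more structural and, notably, the identity $F=G\vee(F\wedge G')$ you establish along the way is precisely the orthomodular law, so your argument effectively anticipates Theorem~\ref{ThmOrthomodular} (with no circularity, since you derive de Morgan and the rank count only from the parts of the proposition you have already proved).
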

\begin{proof}
Let $\omega\in F'$ be any element, and $0<\lambda<1$, $\omega_1,\omega_2\in A_+$ such that $\omega=\lambda\omega_1+(1-\lambda)\omega_2$.
Then, for every $f\in F$, we have $0=\langle f,\omega\rangle=\lambda\langle f,\omega_1\rangle + (1-\lambda)\langle f,\omega_2\rangle$. Due to self-duality,
we have $\langle f,\omega_i\rangle\geq 0$ for $i=1,2$, hence $\langle f,\omega_1\rangle=\langle f,\omega_2\rangle=0$ for all $f\in F$. This shows that
$\omega_1,\omega_2\in F'$, hence $F'$ is a face.

Now we determine the rank of $F'$. Let $\omega_1,\ldots,\omega_{|F|}$ be any frame that generates $F$, and
$\varphi_1,\ldots,\varphi_{|F'|}$ be a frame that generates $F'$. Then $\langle \omega_i,\varphi_j\rangle=0$ for all $i,j$, and so
Proposition~\ref{PropTakeFramesTogether} tells us that
both frames taken together are a frame in $A_+$, proving that $|F|+|F'|\leq N$. Extend $\omega_1,\ldots,\omega_{|F|}$ to a frame on $A_+$,
then $\omega_{|F|+1},\ldots,\omega_N$ are orthogonal to $F$ and thus a frame in $F'$, showing that $|F'|\geq N-|F|$, so $|F'|=N-|F|$, and
the extension is actually a generating frame of $F'$. Consequently, $\omega_1,\ldots,\omega_{|F|}\in(F')'$, and since $|(F')'|=N-|F'|=N-(N-|F|)=|F|$,
these states generate $(F')'$. Since they also generate $F$, we must have $F=(F')'$.

Now suppose that $\varphi_1,\ldots,\varphi_n$ is any frame contained in $F$; let $\omega_1,\ldots,\omega_{|F|}$ be any frame that
generates $F$. According to Proposition~\ref{PropFrameExtension}, we can extend it to a frame $\omega_1,\ldots,\omega_{|F|},\ldots,\omega_N$
that generates all of $A_+$; moreover, the $\omega_i$ with $i\geq |F|+1$ generate $F'$. But then $\langle \omega,\omega_i\rangle=0$ for all $i\geq |F|+1$ and $\omega\in F$. Thus, the set of
states $\varphi_1,\ldots,\varphi_n,\omega_{|F|+1},\ldots,\omega_N$ is a set of mutually orthogonal pure states and thus, due to Proposition~\ref{PropTakeFramesTogether},
a frame. Using Proposition~\ref{PropFrameExtension} again, we can find states $\varphi_{n+1},\ldots,\varphi_{|F|}$ such that
$\varphi_1,\ldots,\varphi_n,\varphi_{n+1},\ldots,\varphi_{|F|},\omega_{|F|+1},\ldots,\omega_N$ is a frame generating $A_+$. For $i\geq |F|+1$ and $j$ arbitrary, we have
$\langle \varphi_j,\omega_i\rangle=0$, and since these $\omega_i$ generate $F'$, we have $\langle \varphi_j,\omega\rangle=0$ for all $\omega\in F'$. Thus $\varphi_j\in(F')'=F$,
and we have extended $\varphi_1,\ldots,\varphi_n$ to a frame generating $F$.
\end{proof}

As mentioned in Section~\ref{SecIntroduction},
Postulates 1 and 2 imply that there is a special transformation called a \emph{filter} associated with
each face of the state space. The next theorem shows that certain projections are positive (recall that a linear map is \emph{positive} if
it maps the cone $A_+$ into itself),
and in Section \ref{SecJordan} we will further show that these projections have the additional properties required of filters.

\begin{theorem}
\label{ThmPosProjection}
Postulates 1 and 2 imply that for every face $F$ of $A_+$, the
orthogonal projection $P_F$ onto the linear span of $F$ is positive.
\end{theorem}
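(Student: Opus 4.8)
The plan is to reduce to pure states and then exploit the complementary face $F':=F^\perp\cap A_+$ supplied by Proposition~\ref{PropFPrime}. By Postulate~1 and linearity of $P_F$ it suffices to show $P_F\psi\in A_+$ for every pure state $\psi$; positivity on all of $A_+$ then follows by taking nonnegative multiples. Since $F'=F^\perp\cap A_+$, we have $\lin F'\subseteq F^\perp=(\lin F)^\perp$, so $\lin F$ and $\lin F'$ are orthogonal and the operator $Q:=P_F+P_{F'}$ is exactly the orthogonal projection onto $\lin F\oplus\lin F'$. Writing $W:=(\lin F\oplus\lin F')^\perp$, we have $\mathrm{id}=P_F+P_{F'}+P_W$, and one notes in passing that $W\cap A_+=\{0\}$: any $w\in W\cap A_+$ lies in $F^\perp\cap A_+=F'\subseteq\lin F'$, while also $w\in W\subseteq(\lin F')^\perp$, forcing $\langle w,w\rangle=0$.

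The heart of the matter is to show that the ``pinching'' map $Q$ is positive. The plan is to exhibit $Q$ as a convex mixture of reversible transformations, namely $Q=\tfrac12(\mathrm{id}+R)$, where $R$ is the orthogonal transformation acting as the identity on $\lin F\oplus\lin F'$ and as $-\mathrm{id}$ on $W$. To place $R$ inside the group $\G_A$ of reversible transformations one considers the closed subgroup of $\G_A$ generated by all reversible transformations that fix $\lin F$ and $\lin F'$ pointwise; Postulate~2, applied to frames obtained by extending generating frames of $F$ and of $F'$ to maximal frames of $A$ as permitted by Propositions~\ref{PropFrameExtension} and~\ref{PropFPrime}, supplies enough such transformations, and one then argues that this subgroup realizes $-\mathrm{id}$ on $W$. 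Granting $R\in\G_A$, both $\mathrm{id}$ and $R$ are positive, hence so is $Q$. Conceptually this is the assertion that $F$, regarded as a stand-alone system, inherits Postulates~1 and~2 and is therefore self-dual with the induced inner product; one can then avoid $R$ and argue instead that $P_F\psi$, which satisfies $\langle P_F\psi,x\rangle=\langle\psi,P_Fx\rangle=\langle\psi,x\rangle\ge 0$ for all $x\in F$ (using self-duality of $A_+$ and $P_Fx=x$), lies in the internal dual cone of $F$ inside $\lin F$, which equals $F$. Either route needs genuinely new input — reversibility of $R$, respectively compatibility of the induced self-duality with the ambient inner product — and this is the step I expect to be the main obstacle; the rest is bookkeeping with frames and faces via the established propositions.

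Finally one passes from positivity of $Q$ to positivity of $P_F$. For a state $\psi$, $Q\psi=P_F\psi+P_{F'}\psi$ lies in $A_+\cap(\lin F\oplus\lin F')$, and I claim this set equals $\{a+b:a\in F,\ b\in F'\}$. The inclusion ``$\supseteq$'' is immediate. For ``$\subseteq$'', take $z\in A_+\cap(\lin F\oplus\lin F')$, decompose it via Postulate~1 into a frame, and use the orthogonality relations together with $(F')'=F$, Proposition~\ref{PropTakeFramesTogether}, and $W\cap A_+=\{0\}$ to argue that the frame may be chosen inside $F\cup F'$, so that $z$ splits accordingly. Since $\lin F\perp\lin F'$, the decomposition $Q\psi=a+b$ with $a\in F$, $b\in F'$ is unique, and uniqueness forces $a=P_F\psi$ and $b=P_{F'}\psi$; hence $P_F\psi=a\in F\subseteq A_+$, completing the argument.
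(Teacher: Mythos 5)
You have correctly identified where the real work lies, but neither of your two routes closes it, and the first is actually unsalvageable as stated. The claim that the reflection $R$ (identity on $\lin F\oplus\lin F'$, $-\mathrm{id}$ on $W$) lies in $\G_A$ does not follow from Postulate 2 and is false in general: Postulate 2 only gives transitivity on frames, and fixing a generating frame of $F$ pointwise does not fix $\lin F$ pointwise (a frame spans a much smaller subspace than $\lin F$ in quantum theory, for instance). Concretely, the even-dimensional ball state spaces with $\G_0=\mathrm{SO}(d)$ satisfy Postulates 1 and 2, yet for $F$ a pure state the map you call $R$ is a reflection of determinant $-1$ and is not in the group; so no argument from Postulates 1 and 2 can place $R$ in $\G_A$, and writing $Q=\tfrac12(\mathrm{id}+R)$ as a mixture of reversible maps cannot work. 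Your second route is the right one, but the step you flag as ``the main obstacle'' is exactly the theorem's content: you need that $F$, viewed as a cone in $\lin F$, is self-dual with respect to the \emph{restriction of the ambient inner product} (``perfection''). Applying Proposition~\ref{proposition on self duality} to $F$ as a stand-alone system only yields \emph{some} self-dualizing inner product on $\lin F$, with no control over its relation to the inherited one, so ``conceptually $F$ inherits self-duality'' is not a proof. Your final paragraph also leans on the unproven identity $A_+\cap(\lin F\oplus\lin F')=F+F'$ (``the frame may be chosen inside $F\cup F'$''), which is nontrivial precisely because of possible degeneracies in the decomposition and is itself close in strength to what is being proved.

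For comparison, the paper fills the gap directly: it proves perfection by taking a normalized generator $e$ of an exposed ray of the internal dual cone $F^*\subset\lin F$, choosing the normalized $\omega\in F$ witnessing exposedness with $\langle e,\omega\rangle=0$, frame-decomposing $\omega$ inside $F$ (Postulate 1), noting the frame has fewer than $|F|$ elements since $\omega$ lies in a proper face of $F$, extending it to a frame generating $F$ (Proposition~\ref{PropFPrime}), and observing that the added frame element is an element of $F\subseteq F^*$ orthogonal to $\omega$, hence equal to $e$; since exposed rays generate $F^*$, this gives $F^*=F$. Positivity of $P_F$ then follows by Iochum's equivalence (Appendix~\ref{AppendixIochum}) via exactly the computation you wrote, $\langle P_F x,y\rangle=\langle x,y\rangle\ge 0$ for $y\in F$, $x\in A_+$, so $P_Fx\in F^*=F$. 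If you want to complete your own write-up, replace the $R$-construction and the appeal to induced self-duality by this exposed-ray/frame-extension argument; the rest of your reduction then becomes unnecessary.
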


\begin{proof}
Iochum (\cite{IochumThesis}, see also \cite{IochumBook}) has
shown that positivity of all $P_F$ is equivalent to \emph{perfection}.
(For the reader's convenience, and the authors' peace of mind, a proof is included
in Appendix~\ref{AppendixIochum}.)
A cone is called \emph{perfect} if all faces $F$ of $A_+$, regarded as cones
in the linear span $\lin F$, are themselves self-dual with respect to
the inner product inherited from $A$. We will therefore show this property,
establishing the claim.

So let $F$ be any face of $A_+$, and $F^*\subset \lin F$ be the dual
cone with respect to the inner product inherited from $A$. Since $F
\subseteq A_+ = A_+^*$, for $\omega \in F$ we have $\langle
\omega,\varphi\rangle\geq 0$ for all $\varphi\in F$, and so $\omega\in
F^*$.  This proves that $F\subseteq F^*$.  To see the converse
inclusion, let $e$ be any normalized element of $F^*$ (i.e.\ $\langle u_A,e\rangle=1$) that generates
an exposed ray of $F^*$.  This means there exists $\omega\in F$ (which we may
choose normalized) with $\langle e,\omega\rangle=0$ such that $f\in
F^*$ and $\langle f,\omega\rangle=0$ implies $f=\lambda e$ with
$\lambda\in\R$.  But $\omega=\sum_i \lambda_i\omega_i$ for some frame
$\omega_1,\ldots,\omega_k\in F$ and $\lambda_i>0$. Since $\omega$ is
in the face $\{\varphi\in F\,\,|\,\, \langle e,\varphi\rangle=0\}\subsetneq F$, we have $k<|F|$, and extending to a
frame $\omega_1,\ldots,\omega_k,\ldots,\omega_{|F|}$ on $F$ gives
$\omega_{|F|}\in F \subseteq F^*$ as well as $\langle
\omega_{|F|},\omega\rangle=0$, hence $e=\omega_{|F|}\in F$. Since the
exposed rays generate $F^*$, this proves that $F^*\subseteq F$.
\end{proof}

The properties that we have proven so far turn out to give an
interesting structure known from the field of quantum logic, indeed
sometimes taken as a definition of a quantum logic \cite{HardingInHandbook}.  
As noted above, the set of faces ordered by subset
inclusion is a bounded lattice.  However, from Postulates 1 and
2, we recover more of the logical structure of quantum theory:

\begin{theorem}\label{ThmOrthomodular}
Postulates 1 and 2 imply that the lattice of faces of $A_+$ is an
\emph{orthomodular lattice}.
\end{theorem}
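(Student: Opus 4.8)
The plan is to realise the lattice of faces as an orthomodular lattice with the orthocomplementation given by the map $F\mapsto F':=F^\perp\intersect A_+$ already studied in Proposition~\ref{PropFPrime}. We have noted above that the faces of $A_+$ ordered by inclusion form a bounded lattice, with $0=\{0\}$, $1=A_+$, $F\meet G=F\intersect G$, and $F\join G$ the smallest face containing both $F$ and $G$. Thus what remains is to check that $'$ is an orthocomplementation (involution, order-reversal, and the complement laws $F\meet F'=0$, $F\join F'=1$) and that the orthomodular identity holds.

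That $'$ is an orthocomplementation follows quickly from the earlier results. Involution, $(F')'=F$, is precisely part of Proposition~\ref{PropFPrime}; order-reversal, $F\subseteq G\Rightarrow G'\subseteq F'$, is immediate since $G^\perp\subseteq F^\perp$. For the complement laws: $F\meet F'=F\intersect F^\perp=\{0\}$, because any $\omega$ lying in both satisfies $\langle\omega,\omega\rangle=0$; and $F\join F'=A_+$, because Proposition~\ref{PropFPrime} gives $|F|+|F'|=N:=|A_+|$, so that a generating frame of $F$ together with a generating frame of $F'$ --- which are mutually orthogonal since $F'\subseteq F^\perp$ --- form, by Proposition~\ref{PropTakeFramesTogether}, a frame of size $N$ in $A_+$; such a frame generates $A_+$ by Proposition~\ref{FramesAndFaces}, and since $F\join F'$ contains it, $F\join F'=A_+$.

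For the orthomodular law, fix faces $F\subseteq G$; the goal is $G=F\join(F'\meet G)$. One inclusion is trivial: $F\subseteq G$ and $F'\meet G\subseteq G$ give $F\join(F'\meet G)\subseteq G$. For the reverse, pick a frame $\omega_1,\ldots,\omega_{|F|}$ generating $F$; since $F\subseteq G$ this is a frame inside $G$, so by the ``furthermore'' clause of Proposition~\ref{PropFPrime} it extends to a frame $\omega_1,\ldots,\omega_{|G|}$ generating $G$, and by Proposition~\ref{PropFrameExtension} further to a maximal frame $\omega_1,\ldots,\omega_N$, which by the argument in the proof of Proposition~\ref{PropTakeFramesTogether} sums to $u_A$. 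The functional $f:=u_A-\sum_{i=1}^{|F|}\omega_i=\sum_{j>|F|}\omega_j$ is a legitimate effect (Proposition~\ref{PropTakeFramesTogether}, together with self-duality and Proposition~\ref{SpectralityImpliesSaturation}), and $f(\omega_i)=0$ for $i\leq|F|$; since $\{x\in\Omega:f(x)=0\}$ is a face containing these generators of $F$, the effect $f$ vanishes on all of $F$, and self-duality then forces $\langle\omega_j,\omega\rangle=0$ for every $\omega\in F$ and every $j>|F|$. Hence $\omega_{|F|+1},\ldots,\omega_{|G|}\in F^\perp\intersect G=F'\meet G$, so $F\join(F'\meet G)$ contains all of $\omega_1,\ldots,\omega_{|G|}$ and therefore contains the face $G$ that they generate. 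Every step is a routine assembly of the frame technology of Propositions~\ref{PropFrameExtension}, \ref{PropTakeFramesTogether}, and~\ref{PropFPrime}; the only point requiring care is the upgrade from ``orthogonal to a generating frame of $F$'' to ``orthogonal to all of $F$'', which is supplied by the observation that the zero set of an effect is a face. I do not anticipate a genuine obstacle.
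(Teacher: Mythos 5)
Your proof is correct and follows essentially the same route as the paper: establish that $F\mapsto F'=F^\perp\cap A_+$ is an orthocomplementation via Proposition~\ref{PropFPrime}, then verify the orthomodular law by extending a frame of $F$ to one of $G$ and then to a maximal frame, placing the intermediate states in $F'\wedge G$. The only (harmless) deviations are that you settle for showing these states lie in $F'\wedge G$ rather than, as the paper does, that they generate $G\cap F'$ --- which indeed suffices once you add the trivial inclusion $F\vee(F'\wedge G)\subseteq G$ --- and that you justify their orthogonality to $F$ by an explicit effect-vanishing argument where the paper uses the (implicit) mutual orthogonality of frame elements.
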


Before giving the proof, recall that orthomodularity is the property that
\begin{equation}
   \label{EqOrthomodularity}
   F\subseteq G\enspace \Rightarrow\enspace G=F\vee (G\wedge F').
\end{equation}
Note that in \cite{Barker} it is shown that for
self-dual cones, orthomodularity of the face lattice in the above
sense is equivalent to the property of \emph{perfection} mentioned in
the proof of Theorem~\ref{ThmPosProjection}.  Furthermore,
in~\cite{ASBook} it is shown that orthomodularity of the face lattice,
according to an orthocomplementation which agrees with ours in case
Postulates 1 and 2 hold, follows from a property called \emph{projectivity}.  In the
next section we will define projectivity and establish
that state spaces satisfying Postulates 1 and 2 are projective, giving
us an alternative proof of orthomodularity.  Here, we proceed with
the direct proof.

\begin{proof}
Constructing $F'$ as the face generated by the extension of a frame generating $F$ shows easily that $(F')'=F$ (as already shown
in Proposition~\ref{PropFPrime}), and that $F\subseteq G$ implies
$F'\supseteq G'$, as well as $F\vee F'=\mathbf{1}\equiv A_+$ and $F\wedge F'\equiv \mathbf{0}\equiv \{0\}$. These properties mean that the operation $'$ is an
\emph{orthocomplementation} on the lattice of faces.  It remains to show that
this orthocomplemented lattice satisfies the orthomodular law, Eq. (\ref{EqOrthomodularity}).
To this end, assume $F\subseteq G$, and let $\omega_1,\ldots,\omega_{|F|}$ be a frame on $F$.  Extend this to a frame on $G$, and further extend the result to a frame on $A_+$,
yielding $\omega_1,\ldots,\omega_N$. 
Then $\omega_{|F|+1},\ldots,\omega_{|G|}$ is a frame on $G\cap F'$; if it did not generate $G\cap F'$,
it could be extended in $G\cap F'$, and to this extension we could append $\omega_1,\ldots,\omega_{|F|}$ to obtain a frame of size larger
than $|G|$ in $G$, which is a contradiction. Hence
$H:=G\cap F'$ is generated by $\omega_{|F|+1},\ldots,\omega_{|G|}$. Since $F\vee H$ is the smallest face containing $F$ and $H$, it
is the smallest face containing $\omega_1,\ldots,\omega_{|G|}$, hence equal to $G$.
\end{proof}

Systems that satisfy Postulates 1 and 2 are operationally
close to quantum theory also with respect to their contextuality
behavior: they satisfy the principle of \emph{consistent exclusivity}~\cite{Fritz}, the single-system generalization of the
recently introduced postulate of \emph{local orthogonality}~\cite{FritzLO}. This is also called \emph{Specker's
Principle}~\cite{Cabello}, and comes in slightly different versions, depending on assumptions of the validity of the principle
in situations where one has more than one copy of a state. Here we are interested in the single-system version that is called $\mbox{CE}^1$ in~\cite{Fritz}.

In order to talk about contextuality, we need a notion of ``sharp measurements'': the analogs of projective measurements in quantum theory.
Following~\cite{CabelloSeveriniWinter}, we call an effect $0\leq e\leq u_A$ \emph{sharp} if it can be written as a sum of normalized ray-extremal effects; that is,
if there are pure states $\omega_1,\ldots,\omega_n$ such that
\[
   e=\sum_{i=1}^n \omega_i,
\]
and if an analogous decomposition exists for $u_A-e$. This definition does not assume that the $\omega_i$ are mutually orthogonal; however, they
have to be as a consequence of Postulates 1 and 2. To see this, note that for all $j$
\[
   1=\langle u_A,\omega_j\rangle\geq\langle e,\omega_j\rangle=1+\sum_{i\neq j}\underbrace{\langle \omega_i,\omega_j\rangle}_{\geq 0},
\]
hence $\langle\omega_i,\omega_j\rangle=0$ for all $i\neq j$. The corresponding effects $e$ can also be characterized in two further ways,
namely as \emph{projective units} and as the extremal points of the unit order interval, giving further weight to the interpretation as the analogue
of orthogonal projectors in quantum theory. This is the content of the next lemma. We start with a definition.
\begin{definition}[Projective units]
\label{DefProjectiveUnits}
Let $A$ be any system satisfying Postulates 1 and 2. Then, for every face $F$ of $A_+$, define the \emph{projective unit} $u_F$ as
\[
   u_F:=P_F u_A,
\]
where $P_F$ is the orthogonal projection onto the linear span of $F$. A projective unit $u_F$ is called \emph{atomic} if $|F|=1$.
\end{definition}

This is now used in the following lemma:

\begin{lemma}
\label{LemProjectiveUnits}
Let $A$ be any system satisfying Postulates 1 and 2. Then, for every face $F$ of $A_+$, there is a unique effect $u_F$ with $0\leq u_F\leq u_A$
such that $u_F(\omega)=1$ for every $\omega\in F\cap \Omega_A$, and $u_F(\varphi)=0$ for all $\varphi\in F'\cap \Omega_A$, namely
the projective unit from Definition~\ref{DefProjectiveUnits}.
If $\omega_1,\ldots,\omega_{|F|}$
is any frame that generates $F$, then
\begin{equation}
   u_F=\sum_{i=1}^{|F|} \omega_i.
   \label{eqFrameDecomp}
\end{equation}
Furthermore, every effect $e\in A_+$ with $0\leq e \leq u_A$ is a convex combination of projective units, and we have $u_F+u_G\leq u_A$
if and only if $F\perp G$, in which case $u_F+u_G=u_{F\vee G}$.
\end{lemma}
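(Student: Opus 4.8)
The plan is to verify the claimed properties of $u_F := P_F u_A$ in the natural order: (i) it takes value $1$ on $F$ and $0$ on $F'$; (ii) the frame decomposition \eqref{eqFrameDecomp}; (iii) uniqueness; (iv) the convex-decomposition statement; and (v) the orthogonality/additivity statement. For (i), since $P_F$ is the orthogonal projection onto $\lin F$ and, by Theorem~\ref{ThmPosProjection}, is positive, for $\omega\in F$ we have $u_F(\omega)=\langle P_F u_A,\omega\rangle=\langle u_A,P_F\omega\rangle=\langle u_A,\omega\rangle=1$ because $\omega\in\lin F$ is fixed by $P_F$; and for $\varphi\in F'=F^\perp\cap A_+$ we get $u_F(\varphi)=\langle u_A,P_F\varphi\rangle=0$ since $P_F\varphi=0$. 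To see $0\le u_F\le u_A$ one writes, using Proposition~\ref{PropFPrime}, $u_A=\sum_{i=1}^{|F|}\omega_i+\sum_{i=|F|+1}^N\omega_i$ for a frame adapted to $F$ (the first $|F|$ generating $F$, the rest generating $F'$); then $P_F u_A=\sum_{i=1}^{|F|}\omega_i$, which is $\le u_A$ by Proposition~\ref{PropTakeFramesTogether} and is $\ge 0$. This simultaneously proves \eqref{eqFrameDecomp} for this particular frame, and for an arbitrary generating frame of $F$ one applies a reversible $T$ (Postulate~2) carrying one frame to the other; since $T$ fixes $u_A$ and is orthogonal, $T P_F T^{-1}$ is the orthogonal projection onto $\lin(TF)$, giving the general case.

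For uniqueness (iii): suppose $e$ is any effect with $e=1$ on $F\cap\Omega_A$ and $e=0$ on $F'\cap\Omega_A$. Using a frame $\omega_1,\dots,\omega_N$ adapted to $F$ as above, $e(\omega_i)=1$ for $i\le|F|$ and $e(\omega_i)=0$ for $i>|F|$. Since $0\le e\le u_A$ and, by Cauchy–Schwarz in Proposition~\ref{proposition on self duality}, $\langle e,\omega_i\rangle\le\|e\|\le 1$ with equality forcing $e$ to point along $\omega_i$, one shows $e$ agrees with $\sum_{i\le|F|}\omega_i=u_F$ on a spanning set. Concretely, $\langle u_A-e,\omega_i\rangle=0$ for $i\le|F|$ forces, by positivity of $u_A-e$ and self-duality, $u_A-e\in\{\omega_1,\dots,\omega_{|F|}\}^\perp$; similarly $\langle e,\omega_i\rangle=0$ for $i>|F|$ forces $e\perp F'$, i.e.\ $e\in\lin F$; combining, $e=P_F e = P_F(u_A-(u_A-e))=P_F u_A - P_F(u_A-e)=u_F - 0 = u_F$ since $u_A-e\in F'^{\,\perp\perp\text{-complement}}$ lies in $\lin F^\perp$. (The bookkeeping here is where I expect to be most careful, but it is elementary given self-duality and $F=(F')'$.)

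For (iv): every effect $e$ with $0\le e\le u_A$ is a convex combination of the extremal points of the unit order interval $[0,u_A]$, which is compact; the discussion preceding the lemma identifies these extremal points as the sharp effects, and by \eqref{eqFrameDecomp} every sharp effect is a projective unit $u_F$ — so the claim follows. For the final statement (v): if $F\perp G$ then a frame generating $F$ and one generating $G$ are mutually orthogonal pure states, hence by Proposition~\ref{PropTakeFramesTogether} their union is a frame, so $u_F+u_G=\sum_i\omega_i+\sum_j\varphi_j\le u_A$; and since this combined frame generates $F\vee G$ (the smallest face containing both), \eqref{eqFrameDecomp} gives $u_F+u_G=u_{F\vee G}$. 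Conversely, if $u_F+u_G\le u_A$, pick $\omega\in F\cap\Omega_A$ and $\varphi\in G\cap\Omega_A$; then $1\ge\langle u_A,\varphi\rangle\ge\langle u_F+u_G,\varphi\rangle=\langle u_F,\varphi\rangle+1$, forcing $\langle u_F,\varphi\rangle=0$, i.e.\ $\varphi\in\lin F^\perp$; running over all such $\omega,\varphi$ and using that pure states span, $F\perp G$. The main obstacle is the uniqueness argument in (iii), where one must chase orthogonality relations through $F=(F')'$ and the Cauchy–Schwarz equality condition rather than rely on any Jordan-algebraic structure (which is not yet available at this point in the paper).
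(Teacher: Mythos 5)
Your parts (i), (ii) and (v) follow essentially the paper's route (the paper obtains Eq.~(\ref{eqFrameDecomp}) by normalizing $u_F$ to a state and applying Postulate 1 rather than by projecting an adapted maximal frame, but the two are interchangeable; note that ``every maximal frame sums to $u_A$'', which you use, is established only inside the proof of Proposition~\ref{PropTakeFramesTogether}, not in its statement). Two steps, however, need repair.

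First, your part (iv) is circular as written. The identification of the extreme points of $[0,u_A]$ with sharp effects (equivalently, with projective units) is \emph{not} available before the lemma: the paragraph preceding it only defines sharp effects and announces that their characterization as projective units and as extreme points of the order interval ``is the content of the next lemma''. The paper avoids this by arguing directly: any effect $0\le e\le u_A$, viewed in $A_+$ via self-duality, has by Postulate 1 a decomposition $e=\sum_i\lambda_i\omega_i$ over a maximal frame with $0\le\lambda_i\le1$ (the upper bound coming from pairing $u_A-e\ge0$ with the frame elements); the coefficient vector then lies in the unit cube, hence is a convex mixture of $\{0,1\}$-vectors, and each $\{0,1\}$-vector gives $\sum_{i\in S}\omega_i=u_{F_S}$ by Eq.~(\ref{eqFrameDecomp}). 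If you insist on the Krein--Milman route, you must first \emph{prove} that extreme points of $[0,u_A]$ are projective units, which amounts to the same cube argument.

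Second, in the uniqueness step the inference ``$e\perp F'$, i.e.\ $e\in\lin F$'' is invalid as a statement about subspaces: $(\lin F')^\perp$ strictly contains $\lin F$ in general, the difference being exactly the ``coherence'' subspace $\ker P_F\cap\ker P_{F'}$ of Proposition~\ref{propPeirceDecomposition}, which is precisely where a rival candidate for $u_F$ could hide. The inference is saved only by positivity: use the full hypothesis ($e=0$ on all of $F'\cap\Omega_A$, not merely on the frame elements) to get $e\perp\lin F'$, and then $e\in A_+\cap (F')^\perp=(F')'=F\subseteq\lin F$ by Proposition~\ref{PropFPrime}; similarly $u_A-e\perp\lin F$ follows from the hypothesis on all of $F\cap\Omega_A$, not from the garbled ``$F'^{\perp\perp}$-complement'' remark. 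You gesture at $F=(F')'$ in your closing sentence, so this is fixable, but it must be invoked at that exact point. The same compression recurs in your converse of (v): ``$\langle u_F,\varphi\rangle=0$, i.e.\ $\varphi\in\lin F^\perp$'' requires writing $\langle u_F,\varphi\rangle=\langle u_A,P_F\varphi\rangle$ and using positivity of $P_F$ (Theorem~\ref{ThmPosProjection}) together with strict positivity of $u_A$ on $A_+\setminus\{0\}$ to conclude $P_F\varphi=0$, which is exactly how the paper argues; for the record, the paper's own uniqueness claim is extracted from the frame decomposition of $e$ rather than from a projection identity.
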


\begin{proof}
As in Definition~\ref{DefProjectiveUnits}, set $u_F:=P_F u_A$. Due to Theorem~\ref{ThmPosProjection}, $u_F\in A_+$. Thus, $\omega\in F$ implies
\[
   \langle u_F,\omega\rangle = \langle P_F u_A,\omega\rangle=\langle u_A,P_F\omega\rangle=\langle u_A,\omega\rangle=1.
\]
If $\varphi\in F'$, then $P_F\varphi=0$, and an analogous computation shows that $\langle u_F,\varphi\rangle=0$.
Set $\mu_F:=u_F/\langle u_A,u_F\rangle$, then $\mu_F\in F\cap\Omega_A$, and so there is a frame $\omega_1,\ldots,\omega_{|F|}$ of $F$
such that $\mu_F=\sum_{i=1}^{|F|}\lambda_i\omega_i$ with $\lambda_i\geq 0$, $\sum_i\lambda_i=1$. For every $j=1,\ldots,|F|$, we have $\omega_j\in F$, and so
\[
   1=\langle u_F,\omega_j\rangle=\langle u_A,u_F\rangle \sum_{i=1}^{|F|} \lambda_i \langle \omega_i,\omega_j\rangle=\lambda_j\langle u_A,u_F\rangle,
\]
so all $\lambda_j$ are equal to $\langle u_A,u_F\rangle^{-1}$, proving that there exists \emph{some} frame $\omega_1,\ldots,\omega_{|F|}$
with decomposition~(\ref{eqFrameDecomp}) of $u_F$, and  showing the inequality $0\leq u_F\leq u_A$. If $\varphi_1,\ldots,\varphi_{|F|}$ is any other frame on $F$,
then there exists a reversible transformation $T$ with $T\omega_i=\varphi_i$. Since both frames generate $F$, $T$ must preserve the face $F$ (and also
its orthogonal complement because $T$ is orthogonal). Hence
\[
   u_F+u_{F'}=u_A=T u_A = T u_F+ T u_{F'}.
\]
Thus $u_F=T u_F=T\sum_{i=1}^{|F|}\omega_i = \sum_{i=1}^{|F|}\varphi_i$, proving that $u_F$ can be decomposed into \emph{any} frame in the claimed way.
If $0\leq e\leq u_A$ is any effect, then it has a frame decomposition $e=\sum_{i=1}^{|A_+|}\lambda_i \omega_i$, where $\omega_i\in\Omega_A$ are
mutually orthogonal pure states, and $0\leq \lambda_i\leq 1$. Thus, the vector $\lambda:=(\lambda_1,\ldots,\lambda_{|A_+|})$ is an element of the $|A_+|$-dimensional
unit cube, and can thus be written as a convex combination of extremal points of the (convex) cube, corresponding to vectors $\mu=(\mu_1,\ldots,\mu_{|A_+|})$ where all
$\mu_i\in\{0,1\}$. Hence $e$ can correspondingly be decomposed into effects of the form $\sum_{i=1}^{|A_+|} \mu_i \omega_i$, which are projective units.
This also shows that the $u_F$ are the unique effects with the properties stated in the lemma. If $F\perp G$ then $u_F+u_G=u_{F\vee G}\leq u_A$ is clear from the sum
representation of projective units; conversely, if $u_F+u_G\leq u_A$, then $u_F+u_{F'}=u_A$ implies that $u_F+u_G\leq u_F+u_{F'}$, and
so $u_G\leq u_{F'}$. Thus, if $\omega\in G\cap \Omega_A$, then $1=\langle u_G,\omega\rangle\leq\langle u_{F'},\omega\rangle\leq 1$,
and so $0=\langle u_A-u_{F'},\omega\rangle=\langle u_F,\omega\rangle=\langle P_F u_A,\omega\rangle
=\langle u_A,P_F \omega\rangle$, which implies that $P_F\omega=0$ and $\omega\perp F$. Hence $F\perp G$.
\end{proof}

Following the definition of~\cite{CabelloSeveriniWinter}, expressed in the language of~\cite{Fritz}, every system satisfying Postulates 1 and 2
defines a \emph{contextuality scenario} given by a hypergraph $H$, where the vertices of $H$ are the projective units $u_F$ ($F\neq\{0\}$ any face of $A_+$),
and the edges are collections of effects $u_{F_1},\ldots, u_{F_n}$ with $\sum_{i=1}^n u_{F_i}=u_A$. These edges describe contexts, i.e.\ sharp
measurements (given by sets of projective units) that are compatible (i.e.\ jointly measurable).
\begin{theorem}
Any system satisfying Postulates 1 and 2 also satisfies the principle of \emph{Consistent Exclusivity} $\mbox{CE}^1$
as given in~\cite[Def.\ 7.1.1]{Fritz} and~\cite{Henson}.
\end{theorem}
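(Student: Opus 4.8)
The plan is to unwind the definition of $\mathrm{CE}^1$ for the concrete scenario $H$ described just above, and then reduce it to Proposition~\ref{PropTakeFramesTogether}. Recall that every state $\omega\in\Omega_A$ induces a probabilistic model on $H$ by assigning to each vertex $u_F$ the probability $p(u_F):=u_F(\omega)=\langle u_F,\omega\rangle$; this really is a model on $H$, since by definition of the edges, $\sum_i u_{F_i}(\omega)=u_A(\omega)=1$ along any edge of $H$. Consistent Exclusivity $\mathrm{CE}^1$ then demands that for every \emph{clique} in the orthogonality graph of $H$ -- that is, every family $u_{F_1},\ldots,u_{F_n}$ of vertices such that each pair $u_{F_i},u_{F_j}$ ($i\neq j$) occurs together in some edge of $H$ -- one has $\sum_{i=1}^n p(u_{F_i})\leq 1$ for every model arising in the theory, in particular for every model coming from a state. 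So it suffices to prove the inequality of \emph{effects} $\sum_{i=1}^n u_{F_i}\leq u_A$ whenever $u_{F_1},\ldots,u_{F_n}$ form such a clique; evaluating on an arbitrary $\omega\in\Omega_A$ then yields the claim.

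The first step is to translate graph-orthogonality in $H$ into orthogonality of faces. If $u_{F_i}$ and $u_{F_j}$ lie in a common edge $u_{G_1},\ldots,u_{G_m}$ of $H$, then since all summands are nonnegative effects and $\sum_k u_{G_k}=u_A$, we get $u_{F_i}+u_{F_j}\leq u_A$, hence $F_i\perp F_j$ by Lemma~\ref{LemProjectiveUnits}. (The converse is also true: if $F_i\perp F_j$ then $u_{F_i}+u_{F_j}=u_{F_i\vee F_j}$ and $u_A-u_{F_i\vee F_j}=u_{(F_i\vee F_j)'}$ by the same lemma, so either $\{u_{F_i},u_{F_j}\}$ or $\{u_{F_i},u_{F_j},u_{(F_i\vee F_j)'}\}$ is an edge; but only the first direction is needed.) Thus a clique in the orthogonality graph of $H$ is precisely a family of \emph{pairwise orthogonal faces} $F_1,\ldots,F_n$ of $A_+$.

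Now fix such pairwise orthogonal faces. For each $i$ choose a frame $\omega^{(i)}_1,\ldots,\omega^{(i)}_{|F_i|}$ generating $F_i$; by Lemma~\ref{LemProjectiveUnits}, Eq.~(\ref{eqFrameDecomp}), we have $u_{F_i}=\sum_{k=1}^{|F_i|}\omega^{(i)}_k$. The members of a single frame are mutually orthogonal (as follows from self-duality and Lemma~\ref{LemProjectiveUnits}, cf.\ the discussion of sharp effects above), and $F_i\perp F_j$ means that every element of $F_i$ is orthogonal to every element of $F_j$; hence the whole collection $\{\omega^{(i)}_k:\ 1\leq i\leq n,\ 1\leq k\leq |F_i|\}$ is a set of mutually orthogonal pure states. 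By Proposition~\ref{PropTakeFramesTogether}, this collection is itself a frame and $\sum_{i,k}\omega^{(i)}_k\leq u_A$. Since $\sum_{i,k}\omega^{(i)}_k=\sum_{i=1}^n u_{F_i}$, we conclude $\sum_{i=1}^n u_{F_i}\leq u_A$, and therefore $\sum_{i=1}^n p(u_{F_i})=\bigl(\sum_{i=1}^n u_{F_i}\bigr)(\omega)\leq u_A(\omega)=1$ for every $\omega\in\Omega_A$, which is exactly $\mathrm{CE}^1$.

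The only delicate point is matching the abstract combinatorial data of $\mathrm{CE}^1$ (orthogonality/exclusivity graph, cliques, the universal quantifier over models) with the order-theoretic structure of the system; once that bookkeeping is in place, the entire mathematical content is carried by Proposition~\ref{PropTakeFramesTogether} -- the geometric fact, itself a consequence of Postulates 1 and 2 together with self-duality, that pairwise orthogonal pure states are \emph{jointly} orthogonal and sum to an effect dominated by $u_A$. I expect no further obstacle: in particular, finiteness of each clique is automatic, since a family of pairwise orthogonal faces has total rank at most $N=|A_+|$.
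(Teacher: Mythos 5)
Your proposal is correct and follows essentially the same route as the paper: translate pairwise co-measurability into $u_{F_i}+u_{F_j}\leq u_A$, deduce $F_i\perp F_j$ from Lemma~\ref{LemProjectiveUnits}, decompose each $u_{F_i}$ into a frame for $F_i$, and invoke Proposition~\ref{PropTakeFramesTogether} to see the union is a frame whose sum of elements is dominated by $u_A$. The only (immaterial) difference is that the paper phrases the final bound via the projective unit $u_F$ of the face generated by the combined frame, whereas you use the inequality $\sum_i\omega_i\leq u_A$ directly.
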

\begin{proof}
We have to show the following: if $I$ is any set of vertices of the hypergraph $H$ such that every two elements of $I$ belong to a common edge, then $\sum_{e\in I} e(\omega)\leq 1$
for all $\omega\in \Omega$. In the context of Postulates 1 and 2, $I$ is then a set of projective units
$I=\{u_{F_1},\ldots,u_{F_n}\}$ such that $u_{F_i}+u_{F_j}\leq u$ for $i\neq j$.
But Lemma~\ref{LemProjectiveUnits} implies that $F_i\perp F_j$. So if $\mathcal{F}_i$ is any frame for $F_i$, then
$\mathcal{F}_i\perp\mathcal{F}_j$ for $i\neq j$, hence the disjoint union $\mathcal{F}:=\bigcup_i \mathcal{F}_i$ is a frame on $A_+$, generating some face $F$. Thus
\[
   \sum_{e\in I}e(\omega)=\sum_{i=1}^n \langle u_{F_i},\omega\rangle=\sum_{i=1}^n\sum_{e\in\mathcal{F}_i} \langle e,\omega\rangle=\langle u_F,\omega\rangle\leq 1.
\]
This proves the claim.
\end{proof}

As mentioned in Section~\ref{SecIntroduction}, the classification of the
set of all state spaces that satisfy Postulates 1 and 2 remains an
open problem with interesting physical and mathematical
implications.  Now we show that one additional assumption brings us
into the realm of Jordan algebra state spaces. Before postulating the
absence of third-order interference, we study another postulate which
turns out to be equivalent in our context.

\section{Jordan systems from Postulates 1+2 and purity preservation by filters}
\label{SecJordan}

In this section, we show that a system satisfying
  Postulates 1 and 2 and a third postulate, that the positive
  projections of Theorem \ref{ThmPosProjection} take pure states to
  multiples of pure states, is either an irreducible Jordan algebraic
  system or classical.

Jordan algebras were introduced around 1932 by Pascual 
Jordan \cite{Jordan33}, as a
  potentially useful algebraic abstraction of the space of
  \emph{observables}, i.e. Hermitian operators on a Hilbert space, in
  the newly minted quantum theory.  Since the usual matrix or operator
  multiplication does not preserve Hermiticity, its physical
  significance was unclear; Jordan focused on abstracting properties
  of the symmetrized product $A \bullet B := (AB + BA)/2$ which
  does preserve Hermiticity.  Like the space of Hermitian operators, a
  Jordan algebra (as initially defined by Jordan and studied by him,
  von Neumann, and Wigner) is a real vector space, closed under a
  commutative bilinear product $\bullet$.  Since the symmetrized
  product of Hermitian operators is not associative but does satisfy
  the special case $(a^2 \bullet b) \bullet a = a^2 \bullet (b \bullet
  a)$ (where $a^2 := a \bullet a$) of associativity, a Jordan algebra
  is not assumed associative, but only to satisfy this special case,
  the ``Jordan property''.  
For a finite-dimensional Jordan algebra $A$, at least, the
  squares (elements of the form $a^2$ for some $a \in A$) form a
  closed cone of full dimension. Jordan, von Neumann, and Wigner
  investigated the \emph{formally real} finite-dimensional Jordan
  algebras, which are precisely those whose cones of squares are
  pointed.  
Like the quantum observables, formally real Jordan
  algebras have a well-behaved spectral
  theory (see~\cite[Sec.\ III.1]{FarautKoranyi}), with real spectra 
and an associated real-valued
 trace function.\footnote{In finite dimensions, formal reality coincides with the notion of Euclideanity, used in the references
\cite{FarautKoranyi} and \cite{ASBook}.}  
In these algebras, squares have nonnegative spectra, and
  the unit-trace squares form a closed compact convex set as required
  to be the normalized state space of a system in our context.  
As mentioned in the
  introduction, the finite-dimensional formally real Jordan algebras
  are already quite close to quantum theory: besides standard quantum
  theory over the complex numbers they are quantum-like systems over
  the reals and over the quaternions, systems whose state spaces are
  balls (``spin factors'') and what can be thought of as
  three-dimensional quantum theory over the octonions \cite{JvNW1934}.
  They are also of interest because they are precisely the
  finite-dimensional systems whose cones of unnormalized states are
  self-dual and homogeneous \cite{Koecher1958, Vinberg1960}.

The key tools we will use to establish the main result of this
section are Theorem \ref{ThmPosProjection} 
and a characterization of the state spaces of certain Jordan algebras by
Alfsen and Shultz~\cite[Thm.\ 9.33]{ASBook}, first published in~\cite{ASpaper}.
To state this result requires introducing several somewhat technical
notions, which are, however, of considerable physical interest in
their own right.  These are the notions of a \emph{filter} on the
state space $A$ (and its dual, the notion of a \emph{compression}
on the effect space $A^*$), with its associated notion of a projective
state space, and the property of \emph{symmetry of transition
  probabilities}.

We first define filters, and begin by introducing some notions used in 
that definition.      

\begin{definition}\label{def: complementary projections}
Let $A$ be any state space with cone $A_+$.  \emph{Projections} are
linear operators $P:A\to A$ with $P^2=P$; they are \emph{positive} if $P(A_+)\subseteq A_+$. Positive projections $P$
and $Q$ are called \emph{complementary} if $\im_+ P = \ker_+ Q$ and
vice versa, where $\im_+ P:=\im P \cap A_+$ and $\ker_+ Q:=\ker Q \cap
A_+$. A positive projection $P$ is \emph{complemented} if there exists
a positive projection $Q$ such that $P$ and $Q$ are complementary.
\end{definition}

\begin{definition}[Filters and projectivity] \label{def: filters and projectivity} 
A \emph{filter} is a positive linear  projection $P:A\to A$ which (i)
is complemented, 
(ii) has a complemented dual $P^*$, and
(iii) is normalized, i.e.
satisfies $u_A(P\omega)\leq u_A(\omega)$ for all $\omega\in A_+$.
\footnote{This condition is equivalent to base norm contractiveness,
  which is what Alfsen and Shultz use in their definition.  In the
  Appendix to \cite{ASBook}, item A24, they define, for $\omega,
  \sigma \in V_+$, $V$ a base norm space, $\omega \perp \sigma$ by
  $||\omega - \sigma|| = ||\omega|| + ||\sigma||$.  A26 states that
  each $\rho \in V$ can be decomposed as a difference of two
  orthogonal positive components, i.e. there are $\omega, \sigma \in
  V_+$ such that $\omega \perp \sigma$ and $\rho = \omega - \sigma$.
  From this we can see that base-norm contractiveness ($||T \rho ||
  \le ||\rho||$) of a map $T$ on $V_+$ implies contractiveness
  everywhere. Since $\|\omega\|=u_A(\omega)$ for all $\omega\in V_+$, we
  have equivalence of base norm contractiveness and normalization of filters.}

The state space $A$ is called \emph{projective} if every face of $A_+$
is the positive part, $\im_+ P$, of the image of a filter $P$.
\end{definition}

We define filters in order to make use of the results in
\cite{ASBook}, but they are also of great interest in their own right.
Actually Alfsen and Shultz define
\cite[Def.\ 7.22]{ASBook} \emph{compressions}, acting on the effect
space $A^*$.  The finite-dimensional specialization of Alfsen and
Shultz' notion of compression is just a positive projection $Q: A^*
\rightarrow A^*$ which is complemented, whose dual is complemented,
and whose dual is normalized; it is obvious that a linear map $Q: A^*
\rightarrow A^*$ is a compression iff $Q^*: A \rightarrow A$ is a
filter, and similarly $P$ is a filter iff $P^*$ is a compression.  
We defined filters because we are most interested in the
transformations that act on the state space $A$.  In fact, in the
context of Postulates 1 and 2 with $A$ and $A^*$ being identified via
an appropriate self-dualizing inner product, filters and compressions
are represented by precisely the same linear operators.

As described above, in standard quantum theory the face associated
with a subspace $S$ of Hilbert space consists of the density matrices
whose support is contained in $S$.  Quantum state spaces are
projective: there is a filter onto each face, namely the linear map
$\rho\mapsto P_S \rho P_S$, where $P_S$ is the orthogonal projector
onto $S$.  The complementary projection is $\rho\mapsto P_{S^\perp}\rho P_{S^\perp}$.

One of several reasons that filters are of great interest for 
physics and information-processing is that they
share with the maps $\rho \mapsto P \rho P$ the property of \emph{neutrality}
\cite[Def. 7.19]{ASBook}:
if a state $\omega \in \Omega$ ``passes the filter with probability $1$'', 
i.e. $u_A(P \omega) = u_A(\omega)$, then it ``passes the filter undisturbed'', 
i.e. $P \omega = \omega$.
(This is immediate from Definition 7.19, Proposition 7.21, 
and Definition 7.22 of \cite{ASBook}.) 

We now turn to \emph{symmetry of transition probabilities}, a notion
which is defined for systems which are projective in the sense of
Definition \ref{def: filters and projectivity}.

Observe that in a projective system, for each atomic projective unit
$p$, which is associated~\cite[Prop.\ 7.28]{ASBook} with a unique
filter $P$ for which $P^* u = p$, the associated face $\{x\,\,|\,\,
p(x)=1\}$ of $\Omega$ contains a single pure state.  Call this state
$\hat{p}$.  The map $p \mapsto \hat{p}$ is a one-to-one map from the
set of atoms of the lattice of projective units onto the set of
extremal points of $\Omega$.  The system is said to satisfy
\emph{symmetry of transition probabilities} \cite[Def.\ 9.2
  (iii)]{ASBook} if for all pairs $a,b$ of atoms of the lattice of
projective units, $a(\hat{b}) = b(\hat{a})$.

\begin{lemma}\label{lemma: symmetry of transition probabilities}
If a system satisfies Postulates 1 and 2, it satisfies 
symmetry of transition probabilities.
\end{lemma}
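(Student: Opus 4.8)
The plan is to work throughout with the identification $A^*\cong A$ via the self-dualizing inner product $\langle\cdot,\cdot\rangle$ of Proposition~\ref{proposition on self duality} — the one under which reversible transformations are orthogonal, $\|\omega\|=1$ for pure $\omega$, and $\|\omega\|<1$ for mixed $\omega$ — as already adopted in the text. Under this identification, transition probabilities between atoms will turn out to be literally inner products of unit vectors, so symmetry should fall out of the symmetry of $\langle\cdot,\cdot\rangle$. The work is in making that precise.

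First I would check that the notion of symmetry of transition probabilities even applies, i.e.\ that the system is projective, by exhibiting the $P_F$ of Theorem~\ref{ThmPosProjection} as filters. Each $P_F$ is positive by Theorem~\ref{ThmPosProjection}; it is self-adjoint (orthogonal projection in the self-dualizing inner product), so $P_F^*=P_F$ is complemented iff $P_F$ is; it is normalized since $u_A(P_F\omega)=\langle P_F u_A,\omega\rangle=\langle u_F,\omega\rangle\le\langle u_A,\omega\rangle=u_A(\omega)$ for $\omega\in A_+$, using $u_F\le u_A$ from Lemma~\ref{LemProjectiveUnits}; and it is complemented by $P_{F'}$, because $\im_+P_F=\lin F\cap A_+=F$ (a standard fact about faces of cones), while $\ker_+P_{F'}=(\lin F')^\perp\cap A_+=(F')'=F$ by Proposition~\ref{PropFPrime}, and symmetrically with the roles of $F$ and $F'$ exchanged. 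So $P_F$ is a filter with $\im_+P_F=F$, every face arises this way, and the system is projective.

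Next I would identify the atoms of the lattice of projective units and compute the map $p\mapsto\hat p$. An atomic projective unit is $p=u_F$ with $|F|=1$, i.e.\ $F=\ray(\alpha)$ for a pure state $\alpha$, and then $u_F=\alpha$ by the frame decomposition (\ref{eqFrameDecomp}). The associated face $\{x\in\Omega\mid p(x)=1\}$ is $\{x\in\Omega\mid\langle\alpha,x\rangle=1\}$, and Cauchy--Schwarz gives $\langle\alpha,x\rangle\le\|\alpha\|\,\|x\|=\|x\|\le 1$, with equality forcing $\|x\|=1$ (so $x$ is pure) and $x$ a nonnegative multiple of $\alpha$, hence $x=\alpha$. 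Thus $\hat p=\alpha$: under our identification the map $p\mapsto\hat p$ is the identity between atomic projective units and pure states. The only delicate point in the whole argument is precisely this identification, and it rests on the special norm property of the self-dualizing inner product from Proposition~\ref{proposition on self duality}; once that is in hand there is no real obstacle.

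Finally I would conclude: given two atoms $a=u_F$ and $b=u_G$ with $F=\ray(\alpha)$, $G=\ray(\beta)$ and $\alpha,\beta$ pure, we have $\hat a=\alpha$, $\hat b=\beta$, $u_F=\alpha$, $u_G=\beta$, and therefore
\[
a(\hat b)=\langle u_F,\beta\rangle=\langle\alpha,\beta\rangle=\langle\beta,\alpha\rangle=\langle u_G,\alpha\rangle=b(\hat a),
\]
which is exactly symmetry of transition probabilities. So the structure is: (i) invoke Theorem~\ref{ThmPosProjection} and Proposition~\ref{PropFPrime} to get projectivity; (ii) use (\ref{eqFrameDecomp}) and the Cauchy--Schwarz/norm argument to see $\hat p$ is the underlying pure state; (iii) read off the conclusion from the symmetry of the inner product.
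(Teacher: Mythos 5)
Your proposal is correct and follows essentially the same route as the paper: identify each atomic projective unit $u_F$ (for $|F|=1$) with its generating pure state via Lemma~\ref{LemProjectiveUnits} and the self-dualizing inner product, so that $\hat p = p$ and symmetry of transition probabilities reduces to symmetry of $\langle\cdot,\cdot\rangle$. Your preliminary verification that the $P_F$ are filters (so that the notion applies) is exactly the projectivity argument the paper gives separately in Theorem~\ref{first main theorem}, so it is a completeness bonus rather than a different approach.
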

\begin{proof}
In the context of Postulates 1 and 2,
atomic projective units are $u_F$ for $|F|=1$, where $F$ is generated by a pure state (frame of size $1$) $\omega_1$,
such that $u_F(\varphi)=\langle\omega_1,\varphi\rangle$ according to Lemma~\ref{LemProjectiveUnits}, so $\hat u_F=\omega_1=u_F$ in the
notation just introduced. Thus $a(\hat b)=\langle \hat a,\hat b\rangle = \langle \hat b,\hat a\rangle=b(\hat a)$.
\end{proof}

We can now state a version of a theorem from \cite{ASBook} that we will use
in proving the main result of this section.  One of the conditions in
this theorem will be important in its own right in what follows, and
we therefore call it \emph{Postulate 3}${}^\prime$.

\begin{theorem}
\label{ThmAlfsenShultz}

Let a finite-dimensional system $A_+$ satisfy
\begin{itemize}
\item[(a)] projectivity,
\item[(b)] symmetry of transition probabilities, and
\item[(c)] \textbf{Postulate 3${}^\prime$:} filters $P$ preserve purity. That is, if $\omega$ is a pure state, then $P\omega$ is a nonnegative multiple of a pure state.
\end{itemize}
Then $A_+$ is the state space of a formally real Jordan algebra.
\end{theorem}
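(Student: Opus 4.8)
The plan is to obtain this theorem as the finite-dimensional specialization of Alfsen and Shultz' characterization of Jordan state spaces~\cite[Thm.\ 9.33]{ASBook} (see also~\cite{ASpaper}); the work lies not in new machinery but in checking that our hypotheses (a)--(c) reproduce theirs and that the extra regularity they assume is automatic in our setting. First I would recall the content of their theorem: a convex set is affinely homeomorphic to the state space of a JB-algebra exactly when, in their language, it is \emph{spectral} --- which bundles together the existence of enough compressions with a short list of ``pure state properties'' --- it has symmetry of transition probabilities, and it satisfies a condition restricting the faces generated by pairs of pure states. Then I would invoke the dictionary already set up just above the theorem: a linear map $P:A\to A$ is a filter in our sense iff $P^*$ is a compression in the sense of~\cite[Def.\ 7.22]{ASBook}, and once $A$ is identified with $A^*$ via a self-dualizing inner product these operators literally coincide; our symmetry of transition probabilities is theirs; and a finite-dimensional JB-algebra is a formally real (equivalently Euclidean) Jordan algebra.

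Next come the routine reductions. In finite dimensions the topological completeness, monotone $\sigma$-completeness and $\sigma$-convexity hypotheses in~\cite{ASBook} hold automatically, and ``$\sigma$-convex hull'' may everywhere be read as ``convex hull.'' I would then check, using only hypothesis (a), that every face of $A_+$ is norm-exposed: if $F=\im_+ P=\ker_+ Q$ for a filter $P$ with complementary projection $Q$, then, since $Q$ is positive and $u_A$ is strictly positive on $A_+\setminus\{0\}$, we have $F=\{x\in A_+ : (u_A\circ Q)(x)=0\}$, the intersection of $A_+$ with the kernel of a positive functional, hence exposed; by projectivity every face arises this way, so Alfsen and Shultz' formulation in terms of \emph{norm-exposed} faces is met with nothing to spare. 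Together with the bijection between atoms of the projective-unit lattice and extreme points of $\Omega$ recorded just before Lemma~\ref{lemma: symmetry of transition probabilities} --- valid in any projective system --- this supplies the ``pure state'' items of the spectral-convex-set definition.

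The one step that may carry genuine content concerns their third condition. If it is phrased, as ours is, as preservation of purity by filters, then nothing further is needed. If instead it is their \emph{Hilbert ball property} --- that the face $F$ generated by two distinct pure states is affinely a Euclidean ball --- a short bridging argument is required, and this I expect to be the main obstacle. The idea: such an $F$ inherits projectivity and symmetry of transition probabilities, since filters and compressions restrict to faces, and Postulate~3${}^\prime$ forces the inherited filters to preserve purity; a ``rank-two'' projective system with symmetry of transition probabilities all of whose filters send pure states to multiples of pure states has only trivial filters on each of its two atoms, and a homogeneity argument then exhibits its pure-state set as a sphere on which the orthogonal reversible group acts transitively, so $F$ is a two-dimensional spin factor. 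Care is needed to carry this out without invoking Postulates 1 and 2, which are \emph{not} among the hypotheses here. Either way, once the dictionary of Section~\ref{SecJordan} has been matched against AS' definitions, \cite[Thm.\ 9.33]{ASBook} applies and yields that $A_+$ is the cone of squares of a finite-dimensional --- hence formally real (Euclidean) --- Jordan algebra.
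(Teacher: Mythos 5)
Your proposal matches the paper's treatment: Theorem~\ref{ThmAlfsenShultz} is established there exactly as you suggest, by translating filters into Alfsen--Shultz compressions and invoking \cite[Thm.\ 9.33]{ASBook}, with their condition (1) supplied by projectivity, their (2) a technicality that is automatic in finite dimension, and their (3) --- the ``pure state properties'' --- being precisely symmetry of transition probabilities together with purity preservation. Your worry about a Hilbert-ball formulation of the third condition is moot, since in their theorem that property is a consequence rather than a hypothesis, so the bridging argument you sketch is not needed.
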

The original theorem in~\cite{ASBook,ASpaper} is formulated in terms
of compressions, with similar results in finite dimensions given by
Gunson~\cite{GunsonAlg67} and by
Guz~\cite{GuzFilter78,GuzPure79,GuzCondProb80}.
Theorem~\ref{ThmAlfsenShultz} above is an adaptation to our language
and to finite dimension, using the notion of filters instead of
compressions.  The conjunction of (b) and (c) is what Alfsen and
Shultz~\cite[Def. 9.2]{ASBook} call the ``pure state properties''
(their (3)), while their (2) is a technical condition that is
automatically satisfied in finite dimension, and their (1) follows
from our (a).

\begin{theorem} \label{first main theorem}
In finite dimension, Postulates 1 and 2 imply that the system is projective.
Assuming in addition Postulate 3$^{~\prime}$ implies that the system is either irreducible Jordan-algebraic, or classical.
\end{theorem}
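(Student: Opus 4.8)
The plan is to prove projectivity by exhibiting, for each face $F$ of $A_+$, an explicit filter whose positive image is $F$, and then to feed the result into Theorem~\ref{ThmAlfsenShultz} for the second claim. The natural candidate filter onto $F$ is the orthogonal projection $P_F$ onto $\lin F$, taken with respect to a self-dualizing inner product as in Proposition~\ref{proposition on self duality}; its positivity is exactly Theorem~\ref{ThmPosProjection}. I would then check the three conditions of Definition~\ref{def: filters and projectivity}. Normalization is a direct computation: $u_A(P_F\omega)=\langle u_A,P_F\omega\rangle=\langle P_F u_A,\omega\rangle=\langle u_F,\omega\rangle$ with $u_F$ the projective unit of Lemma~\ref{LemProjectiveUnits}, and $0\le u_F\le u_A$ gives $u_A(P_F\omega)\le u_A(\omega)$ for $\omega\in A_+$. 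For the ``complemented dual'' condition, under the identification $A\cong A^*$ the operator $P_F$ is self-adjoint, so $P_F^*=P_F$ and it suffices to show $P_F$ itself is complemented; the complementary projection will be $P_{F'}$, where $F'=F^\perp\cap A_+$ as in Proposition~\ref{PropFPrime}.

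The one genuinely delicate step is the complementarity relation $\im_+ P_F=\ker_+ P_{F'}$ (and its mirror). I would stress here that one must \emph{not} argue naively, since $\ker P_{F'}=(\lin F')^\perp$ is in general strictly larger than $\lin F$; the identity holds only after intersecting with the cone. On one side, $\im_+ P_F=\lin F\cap A_+=F$: the inclusion $\supseteq$ is clear, and for $\subseteq$ one takes $y$ in the relative interior of $F$, observes that $y-\varepsilon x\in F$ for small $\varepsilon>0$ whenever $x\in\lin F\cap A_+$, and applies the face property of $F$ to the decomposition $y=\varepsilon x+(y-\varepsilon x)$ into two elements of $A_+$. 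On the other side, $\ker_+ P_{F'}=(\lin F')^\perp\cap A_+=(F')^\perp\cap A_+=(F')'$, and Proposition~\ref{PropFPrime} identifies $(F')'$ with $F$. Exchanging the roles of $F$ and $F'$ (again using $(F')'=F$) gives $\im_+ P_{F'}=\ker_+ P_F$, so $P_F$ and $P_{F'}$ are complementary in the sense of Definition~\ref{def: complementary projections}, $P_F$ is a filter, and $\im_+ P_F=F$; since $F$ was arbitrary, the system is projective.

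For the second assertion I would invoke Theorem~\ref{ThmAlfsenShultz}: condition (a) is the projectivity just established, condition (b) is Lemma~\ref{lemma: symmetry of transition probabilities}, and condition (c) is Postulate 3${}^\prime$ by hypothesis, so $A_+$ is the state space of a formally real Jordan algebra. To sharpen ``Jordan'' to ``irreducible Jordan-algebraic or classical'' I would use Postulate 2: decompose $A_+$ into its canonical irreducible summands $A_{1,+}\oplus\cdots\oplus A_{k,+}$, and note that every reversible transformation permutes these summands, since it maps extremal rays to extremal rays and respects the intrinsic partition of extremal rays into irreducible components. If some summand had a maximal frame of size $\ge 2$, then a $2$-frame lying entirely inside that summand and a $2$-frame with one state in it and one in another summand could not be related by any reversible transformation, contradicting Postulate 2; hence every summand has maximal frame size $1$, so each is one-dimensional and $A$ is classical, or else $k=1$ and $A_+$ is irreducible. (Spin factors of dimension $\ge 2$ are themselves irreducible, so nothing here excludes them.)

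I expect the main obstacle to be precisely the complementarity check for the pair $(P_F,P_{F'})$: passing from ``$P_F$ is a positive projection onto $\lin F$'' to the cone-level identities $\im_+ P_F=\ker_+ P_{F'}$ is where Proposition~\ref{PropFPrime} is indispensable, and it cannot be reduced to a statement about kernels of linear maps. The remaining filter conditions are short computations, the Jordan conclusion is imported wholesale from Alfsen--Shultz, and the reducibility refinement is a brief symmetry argument.
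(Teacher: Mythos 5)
Your proposal is correct and follows essentially the same route as the paper: the orthogonal projections $P_F$ of Theorem~\ref{ThmPosProjection} are shown to be filters (same normalization computation via $u_F=P_Fu_A$, self-adjointness for the dual condition, and complementarity of the pair $P_F$, $P_{F'}$ using $\im_+P_F=F$, $\ker_+P_F=F'$ and $(F')'=F$ from Proposition~\ref{PropFPrime}), and the Jordan conclusion is then imported from Theorem~\ref{ThmAlfsenShultz} together with Lemma~\ref{lemma: symmetry of transition probabilities}. The only point where you diverge is the final exclusion of non-classical reducible cones: the paper contrasts two rank-$2$ faces --- a classical-bit face spanned by pure states in different summands versus a non-direct-sum rank-$2$ face inside a non-one-dimensional irreducible summand --- and invokes Proposition~\ref{FramesAndFaces} to contradict Postulate 2, whereas you argue that reversible transformations must permute the (unique) irreducible summands, so a within-summand $2$-frame can never be mapped to a cross-summand $2$-frame; both are short, valid symmetry arguments, yours resting on the uniqueness of the irreducible decomposition and the paper's on the face-geometric invariant distinguishing the two frames.
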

\begin{proof}

In Theorem~\ref{ThmPosProjection},
we have already shown that the orthogonal projection $P_F$ 
onto the linear span of the face $F$ is
positive, for every face $F$. Now we show that it is a  
filter, which establishes that
$A_+$ is projective.  For any face $F$, the corresponding projection
$P_F$ satisfies $\im P_F=\lin F$, $\im_+ P_F=F$, and $\ker_+ P_F=F' =
F^\perp \intersect A_+$.  So also $\im_+ P_{F'} = F'$ and $\ker_+
P_{F'} = F'' = F$, and we see that $P_F$ and $P_{F'}$ are complements,
establishing property (i) in the definition of filter.  Since 
$P_F = P_F^*$, $P_F$ has complemented adjoint, property
(ii).  $P_F$ and $P_{F'}$ are positive by Theorem
\ref{ThmPosProjection}.  To see property (iii), i.e.\ normalization of
$P_F$, recall from Lemma~\ref{LemProjectiveUnits} that $u_A(P_F
\omega) = u_F(\omega)\leq u_A(\omega)$.  Hence for every face $F$ the 
projection $P_F$ is a filter, so the system is projective.

Projectivity is (a) of Theorem~\ref{ThmAlfsenShultz}.  
Lemma \ref{lemma: symmetry of transition probabilities} states that 
condition (b) of Theorem~\ref{ThmAlfsenShultz} follows from Postulates 1 and 2. 
So (a) and (b) of that theorem follow from Postulates 1 and 2, whence
by the theorem, Postulates 1, 2, and purity preservation by filters imply 
that a system is Jordan algebraic.

To see that the only reducible Jordan-algebraic cones this allows are
the classical ones (corresponding to direct sums of the
one-dimensional formally real Jordan algebra), note that the cone of a
direct sum of Jordan algebras $\mathcal{A}:=\bigoplus_{i=1}^n \mathcal{A}_i$
is the direct sum $C=\bigoplus_{i=1}^n C_i$ of their cones. This is because every $a\in\mathcal{A}$
can then be written $a=(a_1,\ldots,a_n)$, and the elements of $C$ are the squares
$a^2=(a_1^2,\ldots,a_n^2)$, where the single $a_i^2$ entries range over all of $C_i$.
Suppose one of the summands, say $C_j$, is not one-dimensional.  The
face generated by two ray-extremal points, $\omega_j \in C_j$ and
$\omega_k \in C_k$, with $k \ne j$, is a direct sum of one-dimensional
cones, i.e. a classical bit.  Since $C_j$ is irreducible and not
one-dimensional, it is not classical, so it contains perfectly
distinguishable pure states $\omega_j$ and $\omega_i$ that generate a
face that is not a direct sum.  Since we have another rank-2 face that
is a direct sum, in light of Proposition \ref{FramesAndFaces} 
this violates Postulate 2.  Hence either the cone is
irreducible, or all summands are one-dimensional (i.e. it is
classical).
\end{proof}

The following proposition will be needed later.
\begin{proposition}\label{propPeirceDecomposition}
Assume Postulates 1 and 2. Then, to every face $F_1$ of $A_+$ with complementary face $F_2\equiv F'_1$ and corresponding projections
$P_1$ and $P_2$, the space $A$ has an orthogonal decomposition
\[
   A = A_1 \oplus A_2 \oplus A_{12}^c,
\]
where $A_i := \im P_i$, $A_{12}^c := \ker P_1 \intersect \ker P_2$.
\end{proposition}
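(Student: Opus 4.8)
The plan is to observe that, given everything proved so far, the asserted decomposition is essentially the standard orthogonal decomposition of a finite-dimensional inner product space relative to a pair of mutually orthogonal subspaces. Recall that $A^*$ has been identified with $A$ via a self-dualizing inner product (Proposition~\ref{proposition on self duality}), and that for a face $F$ the operator $P_F$ is \emph{by definition} the orthogonal projection onto $\lin F$, which Theorem~\ref{ThmPosProjection} shows is positive. Hence $P_1=P_{F_1}$ and $P_2=P_{F_2}$ are orthogonal (self-adjoint, idempotent) projections with $\im P_1=A_1=\lin F_1$ and $\im P_2=A_2=\lin F_2=\lin F_1'$, and, being orthogonal projections, they satisfy $\ker P_i=(\im P_i)^\perp=A_i^\perp$.

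First I would establish $A_1\perp A_2$. By the definition of the complementary face in Proposition~\ref{PropFPrime}, $F_2=F_1'=F_1^\perp\intersect A_+\subseteq F_1^\perp$. Since $F_1^\perp$ is a linear subspace it contains $\lin F_2=A_2$; and because $F_1^\perp=(\lin F_1)^\perp=A_1^\perp$, we get $A_2\subseteq A_1^\perp$, i.e.\ $A_1\perp A_2$. In particular $A_1\intersect A_2\subseteq A_1\intersect A_1^\perp=\{0\}$, so $A_1+A_2$ is an internal orthogonal direct sum, which we write $A_1\oplus A_2$.

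Next I would identify $A_{12}^c$. By definition $A_{12}^c=\ker P_1\intersect\ker P_2=A_1^\perp\intersect A_2^\perp=(A_1+A_2)^\perp=(A_1\oplus A_2)^\perp$. The standard orthogonal decomposition theorem for a finite-dimensional inner product space then yields $A=(A_1\oplus A_2)\oplus(A_1\oplus A_2)^\perp=A_1\oplus A_2\oplus A_{12}^c$, and the three summands are pairwise orthogonal: $A_1\perp A_2$ was shown above, while $A_{12}^c\subseteq A_1^\perp$ and $A_{12}^c\subseteq A_2^\perp$ hold by construction. This is exactly the claimed orthogonal decomposition.

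I do not expect a substantial obstacle here; the only points requiring care are (i) that $P_F$ is genuinely the orthogonal projection, so that $\ker P_F=(\im P_F)^\perp$ — this is built into Definition~\ref{DefProjectiveUnits} together with Theorem~\ref{ThmPosProjection} — and (ii) that the complementary face lies inside the orthogonal complement of $F_1$, which is immediate from the formula $F_1'=F_1^\perp\intersect A_+$ of Proposition~\ref{PropFPrime}. Everything else is elementary linear algebra in the self-dualizing inner product space. (One may additionally record, using $|F_1|+|F_1'|=N$ from Proposition~\ref{PropFPrime}, that $\dim A_{12}^c=\dim A-\dim A_1-\dim A_2$, so that $A_{12}^c$ is in general nonzero; in the quantum case it is the off-diagonal Peirce component.)
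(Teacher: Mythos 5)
Your proof is correct and follows essentially the same route as the paper's: orthogonality $A_1\perp A_2$ from $F_1'=F_1^\perp\cap A_+$, then the elementary identity $\ker P_1\cap\ker P_2=A_1^\perp\cap A_2^\perp=(A_1\oplus A_2)^\perp$, and the standard orthogonal decomposition of $A$. You merely spell out in more detail the steps the paper compresses into one line.
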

\begin{proof}
By construction, $A_1=\lin F_1\perp \lin F'_1=A_2$, and by elementary linear algebra,
$(A_1\oplus A_2)^\perp = A_1^\perp \cap A_2^\perp = \ker P_1 \cap \ker P_2 = A_{12}^c$.
\end{proof}

\section{Third-order interference}
\label{SecInterference}

Rafael Sorkin defined a notion of $k$-th order interference
\cite{SorkinMeasure94}, which can be manifested in analogues of the
two-slit experiment involving $k$ or more slits.  This notion was
adapted to projective convex systems in \cite{InterferenceInPrep, CozThesis},
and the $k=3$ case explored in \cite{InterferenceShort}.
Quantum theory exhibits $k=2$ interference, but
no higher interference.  In this section, we show that Postulates 1
and 2, plus the assumption of no third-order interference,
characterize irreducible Jordan algebraic systems.

We formalize
  the assumption of no third-order interference using a mathematical
  definition of $M$-slit interference experiment given in terms of
  experimental probabilities.  This is motivated by, and abstracted
  from, specific concrete experimental interference experiments such
  as those in which a photon passes through physical slits in a
  barrier, but the probabilistic definition gives a conceptual account
  of the notion of interference that applies (as does the usual
  quantum-mechanical concept of interference) far more broadly.
Consider the setup depicted in
  Figure~\ref{fig_interference}; we would like to give a formal
  description of the experimental behavior, given that a certain
  subset of the slits is open or blocked. First, imagine the case that
  all slits are open, and consider the state $\omega$ of the particle
  immediately after it has passed the slit arrangement. By preparing
  the particle in different ways, we can obtain different states
  $\omega$. Part of the state $\omega$ contains the ``which-slit
  information'', encoding through which slit the particle has just
  passed the arrangement (possibly in a probabilistic mixture or
  generalized superposition). In an ideal
  $M$-slit experiment we would in principle
  be able to measure through which slit the
  particle passes, if we put suitable
  detectors behind the slits.

For every slit $j\in\{1,2,\ldots,M\}$, there  should 
exist states $\omega$ such that the particle is definitely found at slit $j$, if measured.
In our mathematical setting, this means that there is a face $F_j$ of the state space, such that all states $\omega\in F_j$ give unit probability
for the ``yes''-outcome of the two-outcome measurement \emph{``is the particle at slit $j$''?} Moreover, the slits should be perfectly distinguishable -- if
a particle is definitely at slit $j$, then it is definitely not at slit $i$ for all $i\neq j$. Mathematically, this means that $F_i\perp F_j$ for $i\neq j$.

We can also ask coarse-grained questions like \emph{``Is the particle found among slits $1$ and $2$ (rather than somewhere else)?} The set
of those states $\omega$ that give unit probability for the ``yes''-outcome must contain both $F_1$ and $F_2$; therefore, it must contain
$F_1\vee F_2$, the smallest face of the state space that contains both $F_1$ and $F_2$ as subsets. Furthermore, it should
be the smallest such face, since we
do not want to include further possibilities. Thus, this set will be $F_{12}:=F_1\vee F_2$. More generally, for every subset of slits $J\subseteq\{1,2,\ldots,M\}$,
we have a face $F_J=\bigvee_{j\in J}F_j$, containing those states that describe a particle that will definitely be found 
to be somewhere among the slits in $J$ if the corresponding effect is measured.
If the setup is ``complete'' in the sense that every particle must definitely be found at one of the slits if measured, the face $F_{12\ldots M}$
must be the full state space.

Now imagine an additional detector following the slit arrangement, as depicted in Figure~\ref{fig_interference}. It may click or not click;
the probability to click in the case that all slits are open is described by some effect $e$. Suppose we block all slits, except for
a subset $J\subset\{1,\ldots,M\}$ of the slits which are left open. The combination of blockings and detector defines a new measurement,
given by some other effect $e_J$, with click probability $e_J(\omega)$ if the state right before the blockings is $\omega$.

If the slits do what we intuitively expect them to do,
  as they do to a good approximation in quantum-mechanical
  multi-slit experiments, then the click probabilities should behave
as follows. If $\omega$ is a state of a particle that
would definitely be found at one of the
slits among $J$, i.e.\ $\omega\in F_J$, then the blockings should have
no effect (because the slits $J$ are all open), and the click
probability should remain the same: $e_J(\omega)=e(\omega)$. On the
other hand, if the particle would definitely not be found among the
open slits $J$, i.e.\ $\omega\in F'_J$, then the
particle should be blocked and there should definitely be no detector
click, and $e_J(\omega)=0$.

These considerations lead to the following definition, which abstracts
probabilistic properties of an interference experiment from 
particular physical realizations involving slits, spatial paths, and so forth.
We will soon see that the orthogonal projections $P_J$ onto the faces $F_J$
are of paramount importance, which is why we introduce a name for them as well.

\begin{definition}[$M$-slit experiment]
\label{DefMSlitExperiment}
A set of effects $e_J$ and faces $F_J$, $J\subseteq \{1,2,\dots,M\}$, with $F_J=\bigvee_{j\in J}F_j$ and $F_i\perp F_j$ for $i\neq j$
is called an \emph{$M$-slit experiment} if there is an effect $e\in[0,u]$ such that
\begin{itemize}
\item $e_J(\omega)=e(\omega)$ for all $\omega\in F_J$,
\item $e_J(\varphi)=0$ for all $\varphi\in F'_J$.
\end{itemize}
For any given set of faces with the properties stated above, the corresponding set of orthogonal
projections $P_J:=P_{F_J}$ will be called an \emph{$M$-slit mask}. It is called \emph{complete} if $F_{12\cdots M}=\Omega_A$,
that is, if $P_{12\cdots M}=\mathbf{1}$.
\end{definition}

Such an experiment exhibits second-order interference (say, for $M=2$)
if the overall interference pattern $e_{12}(\omega)$ fails to be the
sum of the one-slit patterns $e_1(\omega)$,
$e_2(\omega)$.
If it exhibits second-order interference, it may in addition exhibit
irreducibly third-order interference. Third-order interference occurs if the
overall pattern $e_{123}(\omega)$
fails to be the sum of the double-slit patterns $e_{ij}(\omega)$, corrected for
overcounting by subtracting suitable multiples of the single-slit patterns $e_i(\omega)$.
Unless otherwise specified we
use the notation $\sum_{i < j}$ to mean the double sum $\sum_{i} \sum_{j>i}$.

\begin{definition}[Third-order interference]
\label{DefThirdOrder}
We say that a state space exhibits \emph{third-order interference} if
there exists an $M$-slit experiment (for some $M\geq 3$)
and a state $\omega$ such that
\begin{equation}
   e_{12\ldots M}(\omega)\neq \sum_{i<j} e_{ij}(\omega)-(M-2)\sum_i e_i(\omega).
   \label{eqDef3rdOrderEffects}
\end{equation}
In particular, for $M=3$,  the condition is
\[
   e_{123}(\omega)\neq e_{12}(\omega)+e_{13}(\omega)+e_{23}(\omega)-e_1(\omega)-e_2(\omega)-e_3(\omega).
\]
\end{definition}

The second term in~(\ref{eqDef3rdOrderEffects}) corrects for the
overlaps of the sets $\{i,j\}$ as each index occurs
$M-1$ times in pairs $i < j$.
Sorkin's \cite{SorkinMeasure94} original definition, and the
discussion in \cite{InterferenceInPrep, InterferenceShort}, used the
$M=3$ case as their definition of third-order interference, but the
two can straightforwardly if somewhat tediously be shown to be
equivalent.  Sorkin showed that if a scenario lacks $k$-th order 
interference, it cannot have $l$-th order interference for any $l > k$.

With the previous definition, we can give a concise formal statement of Postulate 3:
\begin{postulate}State spaces do not exhibit third-order interference, as introduced in Definition~\ref{DefThirdOrder}.
\end{postulate}
Now we show that $M$-slit experiments are closely related to the positive orthogonal projections introduced in Theorem~\ref{ThmPosProjection}.

\begin{proposition}
\label{PropInterferenceProjectors}
Assume Postulates 1 and 2. Then, given any $M$-slit experiment with
effects $e, e_J$, we have $e_J=P_J e$, where the $P_J$ are the
elements of the corresponding $M$-slit mask.
Conversely, given any set of faces
$F_J$, $J\subseteq\{1,2,\ldots,M\}$, with
$F_J=\bigvee_{j\in J}F_j$ and $F_i\perp F_j$ for $i\neq j$
and any effect $e$, the set of effects
$e_J:=P_J e$ defines an $M$-slit experiment.
\end{proposition}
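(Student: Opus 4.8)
The plan is to work throughout with the identification of $A$ with $A^*$ via the self-dualizing inner product of Proposition~\ref{proposition on self duality}, so that each effect $e_J$ is an element of $A_+$ (using that all effects are allowed, Proposition~\ref{SpectralityImpliesSaturation}), and each projection $P_J := P_{F_J}$ is self-adjoint ($P_J = P_J^*$) and positive (Theorem~\ref{ThmPosProjection}), with $\im P_J = \lin F_J$ and $\ker P_J = (\lin F_J)^\perp \supseteq F_J'$.

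First I would treat the forward direction. The defining condition $e_J(\omega)=e(\omega)$ for all $\omega\in F_J$ says that $e_J - e$ is orthogonal to $\lin F_J$, i.e.\ $P_J e_J = P_J e$; so it suffices to show $e_J\in\lin F_J$, since then $e_J = P_J e_J = P_J e$. To get this, I would use the other defining condition together with positivity: $e_J(\varphi)=0$ for all $\varphi\in F_J'$ says $e_J\perp\lin F_J'$, and combined with $e_J\in A_+$ this gives $e_J\in (F_J')^\perp\cap A_+ = (F_J')'$; by the involution property in Proposition~\ref{PropFPrime}, $(F_J')'=F_J$, whence $e_J\in F_J\subseteq\lin F_J$, as needed.

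For the converse, suppose faces $F_J=\bigvee_{j\in J}F_j$ with $F_i\perp F_j$ ($i\neq j$) and an effect $0\le e\le u_A$ are given, and set $e_J:=P_J e$. I would check the three things required: (i) $e_J$ is an effect --- since $P_J$ is positive and hence monotone, $0\le P_J e\le P_J u_A = u_{F_J}\le u_A$, using Definition~\ref{DefProjectiveUnits} and Lemma~\ref{LemProjectiveUnits}, and such effects are allowed by Proposition~\ref{SpectralityImpliesSaturation}; (ii) for $\omega\in F_J$ we have $P_J\omega=\omega$, so $e_J(\omega)=\langle P_J e,\omega\rangle=\langle e,P_J\omega\rangle=e(\omega)$; (iii) for $\varphi\in F_J'\subseteq\ker P_J$ we have $e_J(\varphi)=\langle e,P_J\varphi\rangle=0$. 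These are exactly the conditions in Definition~\ref{DefMSlitExperiment}.

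The step I expect to need the most care is the last one in the forward direction. The two defining conditions of an $M$-slit experiment only pin down $e_J$ on $\lin F_J\oplus\lin F_J'$, which need not be all of $A$: the complementary summand $A_{12}^c$ of Proposition~\ref{propPeirceDecomposition} can be nonzero, and a priori $e_J$ could have a component there. It is exactly the positivity of $e_J$ as an effect, via self-duality and $(F_J')'=F_J$, that forces this component to vanish and makes $e_J = P_J e$. Everything else is routine manipulation with the self-adjoint projections and the order structure.
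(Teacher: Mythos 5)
Your proof is correct and follows essentially the same route as the paper's: in the forward direction you use positivity of $e_J$ together with $(F_J')'=F_J$ (Proposition~\ref{PropFPrime}) to conclude $e_J\in F_J$, and orthogonality of $e-e_J$ to $\lin F_J$ to get $P_Je=e_J$, which is exactly the paper's argument. For the converse the paper simply says ``direct calculation,'' and your checks (i)--(iii) using positivity, self-adjointness of $P_J$, and $P_Ju_A=u_{F_J}\le u_A$ are precisely that calculation, carried out correctly.
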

\begin{proof}
Since $\langle e-e_J,\omega\rangle=0$ for all $\omega\in F_J$, we have
$e-e_J\in F_J^\perp$ (not necessarily in $F'_J$, because we do not
yet know whether $e-e_J$ is positive). Similarly, $\langle
e_J,\varphi\rangle=0$ for all $\varphi\in F'_J$ and $e_J\geq 0$
implies that $e_J\in F_J$. Thus
\[
   P_J e = P_J ( e_J+(e-e_J)) = e_J.
\]
The converse can be checked by direct calculation.
\end{proof}

According to this proposition, absence of third-order interference can
be expressed in terms of the orthogonal projections only:
\begin{lemma}
\label{LemEquivThirdOrder}
Consider a state space satisfying Postulates 1 and 2. It has no
third-order interference if and only if for any $M$-slit mask $P_J$, $J\subseteq\{1,\ldots,M\}$, it holds that
\beq \label{eqDef3rdOrder} P_{12\cdots M} = \sum_{i < j}
P_{ij} - (M-2) \sum_i P_i.  \eeq
\end{lemma}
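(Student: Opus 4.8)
The plan is to turn the scalar identity defining absence of third-order interference into the operator identity~(\ref{eqDef3rdOrder}), using Proposition~\ref{PropInterferenceProjectors} together with the fact that both the allowed effects and the normalized states span the ambient space $A$. Fix an $M$-slit mask $\{P_J\}$, with $M\geq 3$ (for $M\leq 2$ both sides of~(\ref{eqDef3rdOrder}) reduce by inspection to $P_{12\cdots M}$). Introduce the fixed linear operator
\[
   D:=P_{12\cdots M}-\sum_{i<j}P_{ij}+(M-2)\sum_i P_i
\]
on $A$ (it is self-adjoint, since each $P_F=P_F^*$ under the identification of $A$ with $A^*$). By Proposition~\ref{PropInterferenceProjectors}, any $M$-slit experiment with this mask and detector effect $e$ satisfies $e_J=P_Je$, so for every state $\omega$ the combination $e_{12\ldots M}(\omega)-\sum_{i<j}e_{ij}(\omega)+(M-2)\sum_i e_i(\omega)$ equals $\langle De,\omega\rangle$; hence third-order interference for this experiment and state is precisely the assertion $\langle De,\omega\rangle\neq 0$.

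With this reduction the ``if'' direction is immediate: if~(\ref{eqDef3rdOrder}) holds for every $M$-slit mask then $D=0$ always, so $\langle De,\omega\rangle=0$ for every experiment and state, and the condition in Definition~\ref{DefThirdOrder} can never hold. For the ``only if'' direction I would fix an arbitrary mask, built from faces $F_J=\bigvee_{j\in J}F_j$ with $F_i\perp F_j$, and invoke the converse half of Proposition~\ref{PropInterferenceProjectors}: for \emph{every} allowed effect $e$, the effects $e_J:=P_Je$ form a genuine $M$-slit experiment. Absence of third-order interference then forces $\langle De,\omega\rangle=0$ for all allowed effects $e$ and all normalized states $\omega$. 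Since the allowed effects span $A$ (they contain small positive multiples of every element of the full-dimensional dual cone $A_+^*$, with which $A$ is identified) and the normalized states span $A$ (the cone $A_+$ has full linear span and is generated by $\Omega_A$), non-degeneracy of the inner product gives $D=0$, i.e.~(\ref{eqDef3rdOrder}) for that mask; since the mask was arbitrary, the claim follows.

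The routine parts I am not spelling out are the algebraic expansion turning $e_J=P_Je$ into the scalar $\langle De,\omega\rangle$, and the elementary check behind the converse in Proposition~\ref{PropInterferenceProjectors}. The two load-bearing points, and hence where I would be careful, are: (i) that for a \emph{fixed} mask the detector effect $e$ ranges over \emph{all} allowed effects --- this is exactly what the converse statement of Proposition~\ref{PropInterferenceProjectors} supplies, and it is what upgrades the single-$e$ interference condition to an operator equation; and (ii) the spanning/non-degeneracy argument promoting the family of scalar equalities $\langle De,\omega\rangle=0$ to the operator identity $D=0$. Neither is deep, and I do not expect any substantive obstacle beyond this bookkeeping.
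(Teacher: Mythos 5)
Your proof is correct and follows essentially the same route as the paper: both use Proposition~\ref{PropInterferenceProjectors} to rewrite $e_J=P_Je$ and then promote the scalar equalities to the operator identity by the spanning of states and of (allowed) effects. Your explicit appeal to the converse half of Proposition~\ref{PropInterferenceProjectors} — so that every allowed effect $e$ with the fixed mask yields a genuine $M$-slit experiment — makes precise a step the paper's proof uses only implicitly, but it is the same argument.
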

\begin{proof}
We have absence of third-order interference if for any choice of faces (as described in the statement of the lemma)
and choice of effect $e$ as well as state $\omega$, (\ref{eqDef3rdOrderEffects}) holds with equality. Since the states span the
space, this is equivalent to the statement
\[
   e_{12\ldots M}= \sum_{i<j} e_{ij}-(M-2)\sum_i e_i,
\]
and, due to Proposition~\ref{PropInterferenceProjectors}, to
\[
   P_{12\ldots M} e = \sum_{i<j} P_{ij} e -(M-2)\sum_i P_i e.
\]
As this must hold for all effects $e$, and the effects span the space, we obtain the statement of the lemma.
\end{proof}

Now we are ready to prove one of 
our main results about the absence of third-order interference together with Postulates 1 and 2:

\begin{theorem}\label{TheoremInterference}
A system satisfies 
Postulates 1, 2 and 3 if and only if
it is an irreducible Jordan system or a classical system.
\end{theorem}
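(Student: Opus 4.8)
The plan is to reduce to Theorem~\ref{first main theorem}. For the ``only if'' direction, suppose $A$ satisfies Postulates 1, 2 and 3. By Theorem~\ref{first main theorem}, Postulates 1 and 2 already make $A$ projective, and once we know in addition that $A$ satisfies Postulate 3${}^\prime$ (filters take pure states to nonnegative multiples of pure states), the same theorem yields that $A$ is irreducible Jordan-algebraic or classical. So the entire content of this direction is the implication \emph{Postulates 1, 2 plus absence of third-order interference $\Rightarrow$ Postulate 3${}^\prime$}, and this is the step I expect to be the main obstacle.

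For the ``if'' direction, I would check that classical systems and irreducible formally real Jordan-algebraic systems satisfy all three postulates. Classical systems do so trivially. For an irreducible formally real Jordan algebra $\mathcal{A}$: Postulate 1 is the Jordan spectral theorem---every element, in particular every state (a unit-trace square), is a convex combination of an orthogonal family of primitive idempotents, which are exactly the pure states, and such a family is perfectly distinguished by itself through the trace form. Postulate 2 follows from transitivity of the automorphism group on Jordan frames of fixed size: every $n$-frame extends to a complete one, the automorphism group acts transitively on complete frames, and automorphisms are reversible transformations. Postulate 3 follows from the Peirce decomposition: faces of the positive cone correspond to idempotents, the orthogonal projection onto $\lin F$ is the Peirce-$1$ projection $P_1(c)$ of the associated idempotent $c$, and for pairwise orthogonal idempotents $c_1,\dots,c_M$ one has $P_1(c_1+\dots+c_M)=\sum_{i<j}P_1(c_i+c_j)-(M-2)\sum_i P_1(c_i)$, which is precisely condition~(\ref{eqDef3rdOrder}) of Lemma~\ref{LemEquivThirdOrder}. (Spin factors have rank $2$ and thus admit no three pairwise orthogonal nonzero faces, so they satisfy Postulate 3 vacuously.)

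It remains to sketch the hard implication. Fix a filter $P_F$ onto a face $F$ of rank $k$ and a pure state $\omega$; we must show $P_F\omega$ is ray-extremal or zero. Self-duality (Proposition~\ref{proposition on self duality}) disposes of $k=1$ for free, since then $P_F\omega=\langle\varphi,\omega\rangle\varphi$ for the generating unit vector $\varphi$. For $k\geq 2$ we may pass to the standalone system with cone $F\subset\lin F$: it again satisfies Postulates 1 and 2 (using the frame-extension Propositions~\ref{PropFrameExtension} and~\ref{PropFPrime}) and, since every $M$-slit experiment built inside $F$ is also one in $A$, it inherits absence of third-order interference; so we may assume $P_F=\mathbf{1}$ and analyse a complete $k$-slit mask $P_{\{i\}}=P_{\omega_i}$ coming from a frame $\omega_1,\dots,\omega_k$ generating $F$. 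Feeding $\omega$ through the inclusion--exclusion identities of Lemma~\ref{LemEquivThirdOrder}---using $P_{\{i\}}\omega=\langle\omega_i,\omega\rangle\,\omega_i$, the three-summand Peirce-type decomposition of Proposition~\ref{propPeirceDecomposition} relative to each face and its complement, and the transition-probability symmetry of Lemma~\ref{lemma: symmetry of transition probabilities}---one argues that the face generated by $P_F\omega$ cannot have rank $\geq 2$ without contradicting those identities. Turning the numerical identity on projections into the statement that an \emph{individual} pure state's image stays pure is the delicate point; everything before and after it is routine manipulation with the machinery developed in the preceding sections. Once Postulate 3${}^\prime$ is in hand, Theorem~\ref{first main theorem} completes the ``only if'' direction, and together with the ``if'' direction this establishes the stated equivalence.
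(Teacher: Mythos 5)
Your overall architecture is right, and your ``if'' direction is sound --- indeed the quadratic-representation identity $U_{c_1+\cdots+c_M}=\sum_{i<j}U_{c_i+c_j}-(M-2)\sum_i U_{c_i}$ for orthogonal idempotents is a more self-contained way to verify Postulate 3 for Jordan systems than the paper's route, which instead invokes the equivalence of no third-order interference with purity preservation (Proposition~\ref{interference and purity preservation}) together with the known fact that Jordan-algebraic compressions preserve purity. But the ``only if'' direction has a genuine gap exactly where you flag ``the delicate point'': you never actually prove that Postulates 1 and 2 plus absence of third-order interference imply Postulate~3${}^\prime$, and the sketch you give would not work as stated. Reducing to the standalone system $F$ and ``assuming $P_F=\mathbf{1}$'' discards the very object under discussion: the pure state $\omega$ whose purity after filtering is at issue generally lies \emph{outside} $F$, so after restricting to $F$ the claim ``$P_F\omega$ is ray-extremal'' either becomes vacuous (if $\omega\in F$) or can no longer even be formulated. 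Moreover, the identity of Lemma~\ref{LemEquivThirdOrder} is an identity of linear maps; extracting from it the statement that the image of an \emph{individual} pure state under a filter is pure requires a quantitative bridge that neither Proposition~\ref{propPeirceDecomposition} nor symmetry of transition probabilities supplies.

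The paper supplies that bridge with two ingredients missing from your sketch. First, the impurity identity of Proposition~\ref{prop: impurity inequality}: for any $M$-slit mask and any $\omega\in\im P_{12\cdots M}$, one has $\|\omega^{(3)}\|^2=\sum_{i<j}I(P_{ij}\omega)-I(\omega)-(M-2)\sum_i I(P_i\omega)$, where $I(\cdot)=u(\cdot)^2-\|\cdot\|^2\geq 0$; combined with Proposition~\ref{Prop21} (no third-order interference forces $\omega^{(3)}=0$ for pure $\omega$), this converts vanishing of the third-order component into purity statements about the $P_{ij}\omega$. Second, an induction on filter rank with a specific mask: assuming rank-$k$ filters preserve purity, take a complete mask consisting of one rank-$k$ filter $P_1$ and rank-one filters $P_i$; for pure $\omega$ all $I(P_i\omega)$ vanish and $\omega^{(3)}=0$, so the identity yields $\sum_{i<j}I(P_{ij}\omega)=0$, hence every rank-$(k+1)$ filter $P_1\vee P_i$ preserves the purity of $\omega$, and every rank-$(k+1)$ filter is of this form. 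Without these two steps the central implication of the theorem --- and hence the ``only if'' direction --- remains unproved.
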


\begin{proof} 
We begin with the ``if'' direction: irreducible Jordan
systems and classical systems satisfy Postulates 1, 2 and 3.  For
classical systems it is well-known and easy to see that Postulates 1
and 2 are satisfied: indeed, finite-dimensional classical state spaces
$\Omega$ are often \emph{defined} as those for which every state has a
\emph{unique} decomposition into extremal points, and in this case
Postulate 2 follows from the fact that any permutation of the extreme
points in this unique maximal frame is an affine automorphism of
$\Omega$.  Classical systems do not even have 2nd-order interference
\cite{SorkinMeasure94} (the first level that is actually
interference), so they cannot have any higher order of
  interference.  It follows directly from a fairly standard
orthogonal decomposition in formally real Jordan algebras
(see e.g.~\cite{FarautKoranyi}) that finite-dimensional Jordan systems
satisfy Postulate 1; and it is also well-known that the Jordan algebra
automorphisms are affine automorphisms of the normalized state space,
and act transitively on the set of ordered sets of orthogonal extremal
states in the irreducible case~\cite{FarautKoranyi}.
In Proposition~\ref{interference and purity preservation} below, we show that in the context of Postulates 1 and 2, absence of third-order
interference is equivalent to the property that filters preserve purity of states. Since the latter property is well-known for a class of Jordan
systems including the finite-dimensional ones~\cite[Thm.\ 9.38]{ASBook}, this shows that they also satisfy Postulate 3.

The ``only if'' direction is an immediate 
consequence of Proposition~\ref{interference and purity preservation}---to
be proved in the remainder of this section---which 
states that the absence of third-order interference implies that all
filters preserve purity, together with Theorem \ref{first main theorem}, 
which states that Postulates 1, 2, and purity-preservation by filters 
imply that systems are irreducible Jordan, or classical.
\end{proof}

We could have defined an $M$-slit experiment directly in terms of the positive projections $P_J$
onto the faces. These describe the action of the slits on the state. However, referring to the corresponding effects $e_J$
in Definition~\ref{DefMSlitExperiment} has the advantage that we know for sure that the effects can be implemented
(due to Proposition~\ref{SpectralityImpliesSaturation}). On the other hand, there is no analogous statement that guarantees
that the projections $P_J$ themselves can actually be implemented as physical transformations. Thus, not referring to
positive projections in the definition of an $M$-slit experiment means that we make fewer assumptions.

The proof of the crucial Proposition~\ref{interference and purity preservation}
proceeds via several other propositions and lemmas. The following property is also
mentioned in~\cite{CozThesis} and~\cite{InterferenceShort}.

\begin{lemma}\label{LemmaCompatibleFiltersCommute}
It follows from Postulates 1 and 2 that $P_J P_K = P_{J\cap K}$ for any $M$-slit mask.
\end{lemma}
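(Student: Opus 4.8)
The plan is to prove the operator identity $P_J P_K = P_{J\cap K}$ by first establishing the one-sided relation $P_{J\cap K}P_J P_K = P_J P_K$ and then finishing with a short computation with adjoints. Throughout I use that each $P_F$ (Theorem~\ref{ThmPosProjection}) is the orthogonal projection onto $\lin F$ --- hence self-adjoint and equal to the identity on $\lin F$ --- that it is a filter (established in the proof of Theorem~\ref{first main theorem}), and that $u_F:=P_F u_A$ is its projective unit with $u_{F_J}=\sum_{j\in J}u_{F_j}$ whenever the $F_j$ are pairwise orthogonal and $F_J=\bigvee_{j\in J}F_j$ (Definition~\ref{DefProjectiveUnits}, Lemma~\ref{LemProjectiveUnits}).

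Two elementary observations will be used repeatedly. If $G\subseteq H$ are faces then $\lin G\subseteq\lin H$, so $P_H$ is the identity on $\lin G$ and $P_H P_G=P_G$; and if $G\perp H$ then $\lin G\perp\lin H$, so $P_H P_G=0$. Since $F_j\subseteq F_K$ for $j\in K$ and $F_j\perp F_K$ for $j\notin K$ (the latter using $F_i\perp F_j$ for $i\neq j$, which is part of the definition of an $M$-slit mask), these give the bookkeeping identity $P_K u_{F_J}=\sum_{j\in J}P_K u_{F_j}=\sum_{j\in J\cap K}u_{F_j}=u_{F_{J\cap K}}$, and likewise $P_J u_{F_{J\cap K}}=u_{F_{J\cap K}}=P_K u_{F_{J\cap K}}$ because $\lin F_{J\cap K}\subseteq\lin F_J\cap\lin F_K$.

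Now fix a normalized state $\omega$ and put $\xi:=P_J P_K\omega$; positivity of $P_K$ and $P_J$ (Theorem~\ref{ThmPosProjection}) gives $\xi\in A_+$. Using $u_A(P_F\eta)=\langle P_F u_A,\eta\rangle=\langle u_F,\eta\rangle$ together with self-adjointness of the $P_F$, I compute $u_A(\xi)=\langle u_{F_J},P_K\omega\rangle=\langle P_K u_{F_J},\omega\rangle=\langle u_{F_{J\cap K}},\omega\rangle$ and, similarly, $u_A(P_{J\cap K}\xi)=\langle u_{F_{J\cap K}},P_J P_K\omega\rangle=\langle P_K P_J u_{F_{J\cap K}},\omega\rangle=\langle u_{F_{J\cap K}},\omega\rangle$. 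Hence $u_A(P_{J\cap K}\xi)=u_A(\xi)$: the state $\xi$ passes the filter $P_{J\cap K}$ with probability one, so by neutrality of filters $P_{J\cap K}\xi=\xi$. (If one prefers to avoid invoking neutrality: the equality says $\langle u_{F'_{J\cap K}},\xi\rangle=0$ with $\xi\in A_+$, which forces $\xi\in(F'_{J\cap K})'=F_{J\cap K}$ by Proposition~\ref{PropFPrime}, and $P_{J\cap K}$ fixes every element of $F_{J\cap K}\subseteq\lin F_{J\cap K}$.) Since states span $A$, we conclude $P_{J\cap K}P_J P_K=P_J P_K$.

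Taking adjoints of this identity --- all three projections are self-adjoint --- yields $P_K P_J P_{J\cap K}=P_K P_J$. On the other hand $P_J P_{J\cap K}=P_{J\cap K}$ and $P_K P_{J\cap K}=P_{J\cap K}$ by the first elementary observation, so $P_K P_J P_{J\cap K}=P_{J\cap K}$; comparing gives $P_K P_J=P_{J\cap K}$, and one further adjoint gives $P_J P_K=P_{J\cap K}$, as claimed. The main obstacle is recognizing that the filter property (neutrality of $P_{J\cap K}$), applied to $\xi=P_J P_K\omega$ with its normalization pinned down by the projective-unit computation, is exactly the input needed; after that the argument is bookkeeping plus a symmetry trick. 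Minor points to watch: the argument uses mutual orthogonality of the slit faces $F_i$ essentially (through $F_j\perp F_K$ for $j\notin K$), and one should pass from normalized states to all of $A_+$ by rescaling when invoking neutrality, the case $\xi=0$ (which occurs when $J\cap K=\emptyset$) being trivial.
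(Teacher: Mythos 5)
Your proof is correct, but it takes a genuinely different route from the paper's. The paper does the same projective-unit bookkeeping ($P_J u_{F_K}=\sum_{k\in J\cap K}u_{F_k}\leq u_{F_K}$) and then outsources the rest to Alfsen--Shultz: \cite[Prop.\ 7.39]{ASBook} converts that inequality into commutativity $P_JP_K=P_KP_J$, and \cite[Thm.\ 8.3]{ASBook} identifies the product of commuting compressions with the compression onto the meet, giving $P_JP_K=P_{J\cap K}$. You replace both citations by a hands-on argument: positivity of the projections (Theorem~\ref{ThmPosProjection}) puts $\xi=P_JP_K\omega$ in $A_+$, the projective-unit computation pins $u_A(P_{J\cap K}\xi)=u_A(\xi)$, neutrality of the filter $P_{J\cap K}$ (or, in your parenthetical variant, Proposition~\ref{PropFPrime}) yields $P_{J\cap K}P_JP_K=P_JP_K$, and self-adjointness symmetrizes this to the full identity, re-deriving commutativity along the way. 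What your route buys is self-containedness within the paper's own toolkit -- filters, neutrality, positivity, and Lemma~\ref{LemProjectiveUnits} -- at the cost of a somewhat longer computation; the paper's route is shorter but leans on two nontrivial external results. Two small points you should tighten. First, ``$F_j\perp F_K$ for $j\notin K$'' is not literally part of the definition of a mask: pairwise orthogonality of the $F_i$ gives it only via the observation (which the paper proves in one line at the start of its proof) that $F\perp H$ and $G\perp H$ imply $(F\vee G)\perp H$; alternatively, note $\langle u_{F_K},u_{F_j}\rangle=0$ and use positivity of $P_K$ together with strict positivity of $u_A$ on $A_+\setminus\{0\}$ to get $P_Ku_{F_j}=0$. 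Second, in the variant avoiding neutrality, the step from $\langle u_{F'_{J\cap K}},\xi\rangle=0$ to $\xi\in(F'_{J\cap K})'$ needs one more line: write $\langle u_{F'},\xi\rangle=\langle u_A,P_{F'}\xi\rangle$, use positivity of $P_{F'}$ and strict positivity of $u_A$ on the cone to conclude $P_{F'}\xi=0$, hence $\xi\in A_+\cap(F')^\perp=(F')'$ by Proposition~\ref{PropFPrime}. Neither point affects the validity of the argument.
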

\begin{proof}
First note that if $F$, $G$ and $H$ are faces such that $F\perp H$ and $G\perp H$, then $(F\vee G)\perp H$.
This is because $(F\vee G)\cap H^\perp$ is a face which contains $F$ and $G$, and is also a subset of $F\vee G$,
hence equal to $F\vee G$.

Defining the projective units $u_j:=u_{F_j}$ and $u_J := u_{F_J}$, it follows from Lemma~\ref{LemProjectiveUnits}
that $u_K=\sum_{k\in K} u_k$. Hence
\[
   u_K = \sum_{k\in K\cap J} u_k + \sum_{l\in K\setminus J} u_l.
\]
For $j\in J$ and $l\in K\setminus J$ we have $F_j\perp F_l$, thus $F_J=\vee_{j\in J} F_j \perp F_l$, and so $P_J u_l=0$.
On the other hand, if $k\in K\cap J$ then $P_J u_k=u_k$, so
\[
   P_J u_K = \sum_{k\in K\cap J} u_k \leq \sum_{k\in K} u_k = u_K.
\]
According to~\cite[Prop.\ 7.39]{ASBook}, this implies that $P_J P_K= P_K P_J$, which in turn implies~\cite[Thm.\ 8.3]{ASBook}
that $P_J P_K = P_J\wedge P_K = P_{J\cap K}$.
\end{proof}

The next proposition uses the decomposition
described in Proposition~\ref{propPeirceDecomposition} to derive a
similar decomposition corresponding to a complete $M$-slit mask.

\begin{proposition}\label{prop: M slit Peirce decomposition}
Let $P_i$ with $i \in \{ 1,..., m\}$ be a complete $M$-slit mask on a system $A$ satisfying Postulates 1 and 2.  Then there
is an orthogonal decomposition
\beq \label{M slit Peirce decomposition}
A = \oplus_i A_i \oplus_{i<j} A^c_{ij} \oplus A^{(3)}
\eeq
where $A_i := \im P_i, A^c_{ij} := \ker P_i \intersect \ker P_j \intersect
\im P_{ij}$ and $A^{(3)} := \bigcap_{i < j}  \ker P_{ij}$.
\end{proposition}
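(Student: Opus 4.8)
The plan is to iterate the two-term Peirce decomposition of Proposition~\ref{propPeirceDecomposition} over the pairwise-orthogonal faces $F_1,\ldots,F_m$. First I would note that by Lemma~\ref{LemmaCompatibleFiltersCommute} all the projections $P_J$ (for $J\subseteq\{1,\ldots,m\}$) commute, since each is a join of the pairwise-orthogonal $P_i$ and $P_J P_K = P_{J\cap K}$. Being mutually commuting self-adjoint idempotents, the $P_J$ generate a Boolean algebra of projections, and the whole space $A$ splits orthogonally into their common ``atoms.'' The content of the proposition is just to identify those atoms with the spaces $A_i$, $A^c_{ij}$, and $A^{(3)}$ and to check there are no others.

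**Identifying the summands.**
Next I would argue as follows. Since $\sum_i u_{F_i} = u_A$ (completeness means $F_{12\cdots m}=\Omega_A$, so by Lemma~\ref{LemProjectiveUnits} $u_{F_{12\cdots m}}=\sum_i u_{F_i}=u_A$), and since the $F_i$ are pairwise orthogonal, the subspaces $A_i=\im P_i=\lin F_i$ are mutually orthogonal. Their orthogonal complement inside $A$ is $\bigcap_i \ker P_i$. Now I would further decompose this complement using the rank-two projections $P_{ij}$: for each pair $i<j$, restricted to $\bigcap_k \ker P_k$ the projection $P_{ij}$ has image exactly $A^c_{ij}:=\ker P_i\cap\ker P_j\cap\im P_{ij}$ (the ``off-diagonal'' Peirce-$\tfrac12$ space attached to the classical bit $F_i\vee F_j$), and these $A^c_{ij}$ are mutually orthogonal for distinct pairs: if $\{i,j\}\cap\{k,l\}$ is a single index, say $j=k$, then $A^c_{ij}\subseteq\im P_{ij}$ and $A^c_{jl}\subseteq \ker P_j\subseteq \ker P_{ij}P_{jl}$-type reasoning via $P_{ij}P_{jl}=P_j$ forces orthogonality; if $\{i,j\}\cap\{k,l\}=\emptyset$ a similar commuting-projection computation applies. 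Whatever is left after removing all the $A_i$ and all the $A^c_{ij}$ is by definition $A^{(3)}=\bigcap_{i<j}\ker P_{ij}$, and one checks $\bigcap_{i<j}\ker P_{ij}\subseteq\bigcap_i\ker P_i$ (since $\ker P_{ij}\subseteq\ker P_i$ as $P_iP_{ij}=P_i$), so this space is indeed orthogonal to all the previously listed pieces. Counting dimensions, or equivalently checking that the orthogonal projections onto $A_i$, $A^c_{ij}$, $A^{(3)}$ sum to the identity, closes the argument.

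**The main obstacle.**
The step I expect to require the most care is verifying that the $A^c_{ij}$ are mutually orthogonal and that together with the $A_i$ and $A^{(3)}$ they exhaust $A$ — i.e. that nothing is double-counted and nothing is missed. This is the place where one genuinely uses that the mask is \emph{complete} (so that $P_{12\cdots m}=\mathbf 1$) and that all the $P_J$ commute; without commutativity the ``Boolean atoms'' picture breaks down. Concretely I would show that the operator $\mathbf 1 - \sum_i P_i - \sum_{i<j} Q_{ij} - P^{(3)}$ vanishes, where $Q_{ij}$ is the orthogonal projection onto $A^c_{ij}$ and $P^{(3)}$ the one onto $A^{(3)}$; expressing each of these projections as a polynomial in the commuting $P_J$ (for instance $Q_{ij} = P_{ij}(\mathbf 1-P_i)(\mathbf 1-P_j)$ and $P^{(3)}=\prod_{i<j}(\mathbf 1-P_{ij})$, using $P_{12\cdots m}=\mathbf 1$ to re-express things) reduces the identity to a combinatorial expansion over subsets, which is where the real work lies but which is purely formal once the commuting-projection framework is in place.
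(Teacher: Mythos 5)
Your proof is correct, and it differs from the paper's mainly in how the exhaustion of $A$ is established. Both arguments rest on Lemma~\ref{LemmaCompatibleFiltersCommute} ($P_JP_K=P_{J\cap K}$) and on the orthogonality computations for the $A^c_{ij}$ (your overlapping-pair case, via $P_{ij}P_{jk}=P_j$, is exactly the paper's inner-product calculation). The paper then finishes in two lines: applying Proposition~\ref{propPeirceDecomposition} inside each face it gets $\im P_{ij}=A_i\oplus A_j\oplus A^c_{ij}$, so $\oplus_i A_i\oplus_{i<j}A^c_{ij}=\lin\bigl(\bigcup_{i<j}\im P_{ij}\bigr)$, whose orthogonal complement is $\bigcap_{i<j}\ker P_{ij}=A^{(3)}$. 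You instead view the commuting self-adjoint idempotents as a Boolean algebra and check that the projectors onto the summands add to $\mathbf 1$. The computation you defer as ``purely formal'' does go through: $Q_{ij}=P_{ij}(\mathbf 1-P_i)(\mathbf 1-P_j)=P_{ij}-P_i-P_j$ (since $P_iP_j=0$), and expanding $P^{(3)}=\prod_{i<j}(\mathbf 1-P_{ij})$ using $P_JP_K=P_{J\cap K}$ and $P_\emptyset=0$ gives $\mathbf 1-\sum_{i<j}P_{ij}+(M-2)\sum_k P_k$, because only subsets of pairs sharing a common index $k$ survive and they contribute $\sum_{s\geq 2}(-1)^s\binom{M-1}{s}=M-2$; hence $\sum_i P_i+\sum_{i<j}Q_{ij}+P^{(3)}=\mathbf 1$. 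Two small corrections to your commentary: completeness is not what makes this identity work---it holds for any mask, and the role of completeness is only to make $A^{(3)}=\bigcap_{i<j}\ker P_{ij}$ coincide with the genuine three-way-interference part, which for an incomplete mask would also contain $\ker P_{12\cdots M}$ (cf.\ Eq.~(\ref{eqPierceIncompleteMask})); and ``counting dimensions'' by itself would not close the argument, whereas your projector identity does. What your route buys is that the coefficient $M-2$ and the equivalence in Lemma~\ref{LemEquivThirdOrder}/Proposition~\ref{Prop21} (no third-order interference $\Leftrightarrow$ vanishing $A^{(3)}$-component) become transparent; what the paper's route buys is brevity, since the span/orthocomplement step avoids the inclusion--exclusion bookkeeping entirely.
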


\begin{proof}
Using Proposition~\ref{propPeirceDecomposition} and the fact that
each face is itself a system satisfying Postulates 1 and 2, we decompose each
$\im P_{ij}$ as $A_i \oplus A_j \oplus A_{ij}^c$.  (Note that we still
have $A_{ij}^c$ orthogonal to $A_i \oplus A_j$ because it is contained
in $\ker P_i \intersect \ker P_j$.) 
For $k, l \notin \{i,
j\}$ we have $A_{ij}^c \perp A_{kl}^c$, since $\im P_{ij} \perp \im
P_{kl}$.  Furthermore, for $i \neq k$, $A^c_{ij} \perp A^c_{jk}$,
because for $x \in A^{c}_{ij}, y \in A^{c}_{jk}$
\beqa
\langle x , y \rangle &=& \langle P_{ij} x, P_{jk} y \rangle
\nonumber \\ &=& \langle x,  P_{ij} P_{jk} y \rangle
= \langle x,  P_j y \rangle = 0\;, \nonumber
\eeqa
where the first equality follows from $x \in \im P_{ij}, y \in \im P_{jk}$
due to the definitions of
$A^{c}_{ij}, A^{c}_{jk}$, the last equality from $y \in \ker P_j$ due
to the definition of $A^{c}_{jk}$, and the second last equality from Lemma
\ref{LemmaCompatibleFiltersCommute}.
Now we just have to show that $A^{(3)} := \bigcap_{i < j} \ker P_{ij}$ is
the orthogonal complement of $\oplus_i A_i \oplus_{i < j} A^c_{ij}$.  Since
$\oplus_i A_i \oplus_{i < j} A^c_{ij} = \lin \{ \bigcup_{i < j} \im P_{ij} \}$,
$(\oplus_i A_i \oplus_{i < j} A^c_{ij})^\perp =
\bigcap_{i < j} \ker P_{ij}$, and we are done.
\end{proof}

It is interesting to note that the pairwise intersections $\ker P_i \intersect \ker P'_i$ represent ``coherences'' associated with
the two-slit experiment $P_i, P'_i$ \cite{InterferenceShort}, and that intersecting this with $\im
P_{ij}$ gives the part associated with the two-slit
experiment $P_i, P_j$. As an example, consider a quantum $3$-level system with orthonormal basis
$\{|i\rangle\}_{i=1,2,3}$, and let $i= 1$, $j = 2$ so we have 
positive projections $P_i = P_1: \rho \mapsto \pi\rho\pi$ with $\pi=|1\rangle\langle 1|$, as well as
$P_i'= P'_1: \rho \mapsto \pi'\rho\pi'$, where $\pi'=|2\rangle\langle 2|+|3\rangle\langle 3|$. The action of these on a $3\times 3$
density matrix $\rho$ is to set specific entries of the matrix to zero. 
More explicitly, $\ker P_1$ is the set of Hermitian matrices of the form 
$\left(\begin{array}{ccc} 0 & \bullet & \bullet \\ \bullet & \bullet & \bullet \\ \bullet & \bullet & \bullet
\end{array}\right)$, where $\bullet$ denotes an arbitrary entry. The $\bullet$'s correspond to the 
entries set to zero by $P_1$; interchanging the $\bullet$'s with the $0$'s would give the form of the matrices in $\im P_1$.  Similarly $\ker P_1'$ is the set of Hermitian 
matrices of the form 
$\left(\begin{array}{ccc} \bullet & \bullet & \bullet \\ \bullet & 0 & 0 \\ \bullet & 0 & 0
\end{array}\right)$. 
So $\ker P_1 \cap  \ker P_1'$ is the
set of all Hermitian matrices of the form
$\left(\begin{array}{ccc} 0 & \bullet & \bullet \\ \bullet & 0 & 0 \\ \bullet & 0 & 0
\end{array}\right)$. Since $P_j = P_2$  analogously projects onto
the span of $|2\rangle$, then $\im P_{ij} \equiv \im P_{12}$ is the set of Hermitian matrices of the form
$\left(\begin{array}{ccc} \bullet & \bullet & 0 \\ \bullet & \bullet & 0 \\ 0 & 0 & 0
\end{array}\right)$, and the intersection $\ker P_1 \cap  \ker P_1' \intersect
\im_+ P_{12}$ yields the offdiagonal elements corresponding to the two-slit experiment $P_i$, $P_j$ as claimed, i.e. the Hermitian matrices of the form
$\left(\begin{array}{ccc} 0 & \bullet & 0 \\ \bullet & 0 & 0 \\ 0 & 0 & 0
\end{array}\right)$.

The decomposition of Proposition \ref{prop: M slit
    Peirce decomposition} is thus into the spans of the faces $\im_+
  P_i$, $M(M-1)$ spaces associated with interference between these
  faces, and a further space, which as the next Proposition shows, is
  associated with three-way interference.

Proposition~\ref{prop: M slit Peirce decomposition} is stated as a decomposition of the vector space $A$.
However, note that every face of $A_+$ (with group of reversible transformations given by the restriction of those
global reversible transformations that preserve that face) is itself a state space satisfying Postulates 1 and 2.
Thus, if we have an incomplete $M$-slit mask with $F:=\im P_{12\dots M}$ and corresponding face $F_+:=F\cap A_+$,
we obtain a decomposition
\begin{equation}
   F=\oplus_i F_i \oplus_{i<j} F_{ij}^c\oplus F^{(3)},
   \label{eqPierceIncompleteMask}
\end{equation}
where $F_i=\im_+ P_i\subseteq F$, $F_{ij}^c=\ker P_i \cap \ker P_j\cap \im_+ P_{ij}\subseteq F$, and $F^{(3)}=\bigcap_{i<j} \ker P_{ij}\cap F$.
This is used in the following proposition.

\begin{proposition}
\label{Prop21}
Let $A$ be a state space satisfying Postulates 1 and 2.
Then there is no third-order interference on $A$ if and only if for every $M$-slit
mask $P_J$, $J\subset\{1,\ldots,M\}$ with $M\geq 2$, and every pure state $\omega\in \im P_{12\dots M}$,
the component $\omega^{(3)}$ of $\omega$ in $F^{(3)}$ in~(\ref{eqPierceIncompleteMask}) is zero.
\end{proposition}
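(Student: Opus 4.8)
The plan is to identify the operator whose vanishing expresses absence of third-order interference with an explicit orthogonal projection that can be read off from the Peirce-type decomposition~(\ref{eqPierceIncompleteMask}). Concretely, for a fixed $M$-slit mask set
\[
   D \;:=\; P_{12\cdots M} \;-\; \sum_{i<j} P_{ij} \;+\; (M-2)\sum_i P_i .
\]
By Lemma~\ref{LemEquivThirdOrder}, the state space has no third-order interference precisely when $D=0$ for every $M$-slit mask. I claim that, for a fixed mask with $M\geq 2$, $D$ is exactly the orthogonal projection onto the summand $F^{(3)}$ of~(\ref{eqPierceIncompleteMask}); granting this, $D=0$ iff $F^{(3)}=\{0\}$, and the pure-state form of the statement then follows by a density argument at the end.

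To prove the claim I would compute $Dx$ on each summand of the orthogonal decomposition $A=\im P_{12\cdots M}\oplus(\im P_{12\cdots M})^\perp$, the first part of which is refined by~(\ref{eqPierceIncompleteMask}) into the $A_k=\im P_k$, the $A^c_{kl}=\ker P_k\cap\ker P_l\cap\im P_{kl}$, and $A^{(3)}=\bigcap_{i<j}\ker P_{ij}\cap\im P_{12\cdots M}$. The only tool needed is Lemma~\ref{LemmaCompatibleFiltersCommute}, $P_JP_K=P_{J\cap K}$ (with $P_\emptyset=0$), together with the fact that the $P_J$ and $P_{12\cdots M}$ are orthogonal projections, so $\ker P_J=(\im P_J)^\perp$ and $\im P_i\subseteq\im P_{ij}\subseteq\im P_{12\cdots M}$. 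On $(\im P_{12\cdots M})^\perp$ every term of $D$ annihilates $x$, so $Dx=0$. For $x\in A_k$ one has $P_{12\cdots M}x=x$, $P_{ij}x=P_{ij}P_kx=P_{\{i,j\}\cap\{k\}}x$, which is $x$ for the $M-1$ pairs containing $k$ and $0$ otherwise, and $\sum_iP_ix=x$ (only $i=k$ contributes); hence $Dx=x-(M-1)x+(M-2)x=0$. For $x\in A^c_{kl}$, since $x\in\im P_{kl}$ we get $P_{ij}x=P_{\{i,j\}\cap\{k,l\}}x$, which equals $x$ only for the pair $\{k,l\}$ and vanishes in every other case because $x\in\ker P_k\cap\ker P_l$ (or the intersection of index sets is empty), and likewise $\sum_iP_ix=0$; hence $Dx=x-x=0$.

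The remaining, and crucial, case is $x\in A^{(3)}$: here $P_{12\cdots M}x=x$ and $P_{ij}x=0$ for all $i<j$ by definition, and the point is that these hypotheses already force $P_ix=0$ for every $i$. Since $M\geq 2$ we may fix some $j\neq i$, and Lemma~\ref{LemmaCompatibleFiltersCommute} gives $P_i=P_iP_{ij}$, whence $P_ix=P_i(P_{ij}x)=0$. Therefore $Dx=x$. (This use of $M\geq 2$ is essential: for $M=1$ the identity $D=P_{F^{(3)}}$ fails, which is why the proposition excludes that case.) Combining the four cases, $D$ acts as the identity on $A^{(3)}$ and as zero on its orthogonal complement, i.e.\ $D$ is the orthogonal projection onto $A^{(3)}=F^{(3)}$. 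I expect this last case---recognizing that the defining hypotheses of $A^{(3)}$ also kill the single-slit projections $P_i$---to be the main obstacle, although once Lemma~\ref{LemmaCompatibleFiltersCommute} is available it is short.

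Finally I would close the loop. For a fixed mask, $D=0$ iff $F^{(3)}=\{0\}$, so by Lemma~\ref{LemEquivThirdOrder} absence of third-order interference is equivalent to $F^{(3)}=\{0\}$ for every $M$-slit mask. It remains to match this with the pure-state formulation. One direction is immediate: if $F^{(3)}=\{0\}$ then $\omega^{(3)}=P_{F^{(3)}}\omega=0$ for every $\omega$. For the converse, recall that $\im_+ P_{12\cdots M}$ is itself a state space satisfying Postulates 1 and 2; by Postulate 1 its states, in particular its pure states, admit frame decompositions, so the pure states lying in $\im_+ P_{12\cdots M}$ span $\im P_{12\cdots M}\supseteq F^{(3)}$. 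Hence if $\omega^{(3)}=P_{F^{(3)}}\omega=0$ for all such pure $\omega$, then $P_{F^{(3)}}$ vanishes on a spanning set of $\im P_{12\cdots M}$, so $F^{(3)}=\{0\}$. This establishes the equivalence mask by mask, and hence the proposition.
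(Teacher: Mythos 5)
Your proposal is correct and follows essentially the same route as the paper's proof: both rest on Lemma~\ref{LemEquivThirdOrder}, the orthogonal decomposition~(\ref{eqPierceIncompleteMask}) together with Lemma~\ref{LemmaCompatibleFiltersCommute}, and the fact that the pure states in $\im P_{12\cdots M}$ span it. The only difference is presentational: you package the pair-counting computation as the operator identity $D=P_{F^{(3)}}$, whereas the paper verifies the equivalent identity $D\omega=\omega^{(3)}$ pointwise on pure states $\omega\in\im P_{12\cdots M}$.
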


\begin{proof}
From Lemma~\ref{LemEquivThirdOrder}, the absence of third order interference is equivalent to
\begin{equation}
   P_{12\cdots M} x = \sum_{i < j} P_{ij} x  - (M-2) \sum_i P_i x
   \label{eqAbsence3rdOrderEquiv}
\end{equation}
for all $x\in A$. However, since $P_{ij}=P_{ij} P_{12\cdots M}$ and $P_i = P_i P_{12\cdots M}$, this
is equivalent to~(\ref{eqAbsence3rdOrderEquiv}) holding for all $x\in\im P_{12\cdots M}=:F$. Since the pure states in $F$
span $F$, this is equivalent to
\[
   \omega=\sum_{i < j} P_{ij} \omega  - (M-2) \sum_i P_i \omega
\]
for all pure states $\omega\in F$. By Proposition~\ref{prop: M slit Peirce decomposition} and its consequence~(\ref{eqPierceIncompleteMask}),
$P_{ij} \omega = P_i \omega + P_j \omega + \omega^c_{ij}$, where $\omega^c_{ij}$ is the component of $\omega$ in $F_{ij}^c$.
So absence of third-order interference is equivalent to
\[
   \omega = \sum_{i < j} \left[ P_i \omega + P_j \omega + \omega^c_{ij} \right] - (M-2) \sum_i P_i \omega
\]
for all pure states $\omega\in F$. Noting that $\sum_{i < j} (P_i + P_j)$ contains, for each fixed value of $k$,
$M-1$ occurrences of $P_k$, this becomes:
\begin{eqnarray*}
\omega &=& (M-1) \sum_i P_i \omega + \sum_{i < j} \omega^c_{ij}
- (M-2) \sum_i P_i \omega \\  &=&
\sum_i P_i \omega + \sum_{i < j} \omega^c_{ij}\;.
\end{eqnarray*}
In other words, $\omega^{(3)} = 0$ in $F^{(3)}$ in~(\ref{eqPierceIncompleteMask}).
\end{proof}

\begin{definition}
The \emph{impurity} $I(\omega)$ of any unnormalized state $\omega\geq 0$
is defined as:
\[
   I(\omega):=u(\omega)^2 - \|\omega\|^2.
\]
\end{definition}
For normalized states $\omega\in\Omega$, we have $u(\omega)=1$, and $\|\omega\|\leq 1$, with equality
if and only if $\omega$ is a pure state. Extending this to the unnormalized states by multiplication $\omega\mapsto \lambda\omega$
with $\lambda\geq 0$ shows that $I(\omega)\geq 0$ for all $\omega\geq 0$, with equality if and only if $\omega$ is ray-extremal.

\begin{proposition}\label{prop: impurity inequality}
Let $P_i$ with $i \in \{1,...,M\}$ be an $M$-slit mask
on a system satisfying Postulates 1 and 2. Then for any state $\omega\in F:=\im P_{12\dots M}$ (not necessarily pure or normalized)
\beq
\label{eqImpurityInequality}
||\omega^{(3)}||^2
= \sum_{i < j} I(P_{ij} \omega) - I(\omega) - (M - 2)\sum_i I(P_i \omega)\;,
\eeq
where $\omega^{(3)}$ is the component of $\omega$ in $F^{(3)}$ in~(\ref{eqPierceIncompleteMask}).
\end{proposition}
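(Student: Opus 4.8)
The plan is to reduce the identity to the orthogonal Peirce-type decomposition of $\omega$ together with one combinatorial identity for the order-unit values. Write $\omega=\sum_i \omega_i+\sum_{i<j}\omega^c_{ij}+\omega^{(3)}$ as in~(\ref{eqPierceIncompleteMask}), with $\omega_i=P_i\omega$ the component in $F_i$, $\omega^c_{ij}$ the component in $F^c_{ij}$, and $\omega^{(3)}$ the component in $F^{(3)}$; recall from the proof of Proposition~\ref{Prop21} that $P_{ij}\omega=\omega_i+\omega_j+\omega^c_{ij}$. Since these summands are mutually orthogonal, the Pythagorean theorem gives $\|\omega\|^2=\sum_i\|\omega_i\|^2+\sum_{i<j}\|\omega^c_{ij}\|^2+\|\omega^{(3)}\|^2$, $\|P_i\omega\|^2=\|\omega_i\|^2$, and $\|P_{ij}\omega\|^2=\|\omega_i\|^2+\|\omega_j\|^2+\|\omega^c_{ij}\|^2$.

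Next I would pin down the order-unit values. Since each $P_J$ is an orthogonal projection we have $P_J^*=P_J$, so $u(P_J\omega)=\langle P_J u,\omega\rangle=\langle u_{F_J},\omega\rangle$; and by Lemma~\ref{LemProjectiveUnits} (exactly as used in Lemma~\ref{LemmaCompatibleFiltersCommute}) $u_{F_J}=\sum_{j\in J}u_{F_j}$. Writing $a_i:=u(P_i\omega)$, this yields $u(P_{ij}\omega)=a_i+a_j$ and, since $\omega\in F=\im P_{12\cdots M}$ forces $P_{12\cdots M}\omega=\omega$, also $u(\omega)=\sum_i a_i$.

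It then remains to substitute these into $I(\cdot)=u(\cdot)^2-\|\cdot\|^2$ and collect the right-hand side of~(\ref{eqImpurityInequality}). The part quadratic in the $a_i$ contributes $\sum_{i<j}(a_i+a_j)^2-(\sum_i a_i)^2-(M-2)\sum_i a_i^2$, which vanishes identically by the identity $\sum_{i<j}(a_i+a_j)^2=(M-2)\sum_i a_i^2+(\sum_i a_i)^2$ (each $a_k^2$ appears in $M-1$ pairs, and the cross terms sum to $(\sum_i a_i)^2-\sum_i a_i^2$). The part involving squared norms contributes $\|\omega\|^2+(M-2)\sum_i\|\omega_i\|^2-\sum_{i<j}\bigl(\|\omega_i\|^2+\|\omega_j\|^2+\|\omega^c_{ij}\|^2\bigr)$; using $\sum_{i<j}(\|\omega_i\|^2+\|\omega_j\|^2)=(M-1)\sum_i\|\omega_i\|^2$ this collapses to $\|\omega\|^2-\sum_i\|\omega_i\|^2-\sum_{i<j}\|\omega^c_{ij}\|^2=\|\omega^{(3)}\|^2$ by the Pythagorean identity above. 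Adding the two contributions gives precisely $\|\omega^{(3)}\|^2$, as claimed.

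There is no serious obstacle here: the argument is essentially bookkeeping. The only points demanding care are (i) invoking the correct additivity $u(P_J\omega)=\sum_{j\in J}u(P_j\omega)$, which rests on self-duality (so that the $P_J$ are self-adjoint) together with Lemma~\ref{LemProjectiveUnits}, and (ii) tracking the multiplicities in the double sums so that the $a_i$-quadratic part cancels exactly while the norm part telescopes onto the single remaining component $\|\omega^{(3)}\|^2$.
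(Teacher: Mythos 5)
Your proposal is correct and follows essentially the same route as the paper's proof: the orthogonal Peirce-type decomposition~(\ref{eqPierceIncompleteMask}), the identity $P_{ij}\omega=P_i\omega+P_j\omega+\omega^c_{ij}$, additivity of projective units from Lemma~\ref{LemProjectiveUnits}, and the same multiplicity counting over pairs $i<j$; the only difference is that you cancel the order-unit and norm contributions separately rather than substituting step by step, which is a cosmetic reorganization of the same bookkeeping.
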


While we use this equation directly in what follows, its
significance is underlined by noting its immediate corollary: that if
$\omega$ and each of the $P_i \omega$ are pure, and there is no
third-order interference, then (by the nonnegativity of impurity) each
of the $P_{ij} \omega$ is also pure.  In other words: in the absence
of third-order interference, if the $P_i$ are each purity-preserving,
so also are the $P_{ij}$.

\noindent
\emph{Proof of Proposition~\ref{prop: impurity inequality}:}
First we expand $\omega\in F$ via~(\ref{eqPierceIncompleteMask}):
\[
   \omega = \sum_i P_i \omega + \sum_{i < j} \omega^c_{ij} + \omega^{(3)}.
\]
Taking squared norms and using orthogonality of the decomposition, we get
\beq \label{norm expansion N slits}
||\omega^{(3)}||^2 =
||\omega||^2 - \sum_i ||P_i \omega ||^2 - \sum_{i < j} || \omega^c_{ij}||^2 \;.
\eeq
In order to get results about
the purity of $P_i \omega$ and $P_{ij} \omega$, we use
$P_{ij} \omega  = P_i  \omega + P_j \omega + \omega^c_{ij}$ to
eliminate $\omega^c_{ij}$ by substituting $||\omega^c_{ij}||^2 =
||P_{ij} \omega||^2 -  ||P_i \omega||^2 - ||P_j \omega||^2$ in~(\ref{norm expansion N slits}), obtaining:
\beqa
||\omega^{(3)}||^2 = \nonumber \\ \nonumber
||\omega||^2 - \sum_i ||P_i \omega ||^2 + \sum_{i < j}
\left[ ||P_i \omega||^2 + ||P_j \omega ||^2 - ||P_{ij} \omega ||^2 \right].
\eeqa
Since a given $k$ appears (as $i$ or $j$)
in $M-1$ of the pairs $i < j$, and the last sum in the above expression has a
$||P_k \omega||$ for each such appearance, this becomes
\[
||\omega^{(3)}||^2 =
||\omega||^2 + (M - 2) \sum_i ||P_i \omega ||^2
- \sum_{i < j} ||P_{ij} \omega ||^2 \;.
\]
Note that $||P_{ij}\omega||^2 = u(P_{ij} \omega)^2 - I(P_{ij}\omega)$, so
\beqa\label{an equation}
||\omega^{(3)}||^2 = ||\omega||^2  + (M - 2)\sum_i \|P_i \omega\|^2
\nonumber \\ + \sum_{i < j} I(P_{ij}\omega) -  \sum_{i<j} u_{ij}(\omega)^2 .
\eeqa
Using $u_{ij} = u_i + u_j$,
\[
   \sum_{i<j} u_{ij}(\omega)^2 =
\sum_{i < j} \left(u_i(\omega)^2 + u_j(\omega)^2
+ 2 u_i(\omega) u_j(\omega) \right)\;.
\]
Again using the fact that
a given $i$ appears in $M-1$ of the pairs $i < j$, and
writing $\sum_{i} \sum_{j \ne i}$ in place of $2 \sum_{i < j}$, this becomes:
\begin{eqnarray*}
(M - 1) \sum_i u_i(\omega)^2  + \sum_{i} \sum_{j \ne i} u_i(\omega) u_j(\omega)\\
= (M - 1) \sum_i u_i(\omega)^2  + \sum_{i}\left[ u_i(\omega) \sum_{j \ne i}
u_j(\omega)\right]\;.
\end{eqnarray*}
Now, since $u_F(\omega)=\langle P_F u,\omega\rangle=\langle u,P_F \omega\rangle=u(\omega)$
and $\sum_{j \ne i} u_j(\omega) = u_F(\omega) - u_i(\omega)$, we get
\begin{eqnarray*}
\sum_{i<j} u_{ij}(\omega)^2 &=& (M - 1) \sum_i u_i(\omega)^2  + \sum_{i} u_i(\omega) \left(\strut u_F(\omega) - u_i(\omega)\right) \\
&=& u_F(\omega)^2  + (M - 2) \sum_i u_i(\omega)^2 \\
&=& u(\omega)^2 + (M - 2) \sum_i u_i(\omega)^2.
\end{eqnarray*}
Substituting this into~(\ref{an equation}) and rearranging gives~(\ref{eqImpurityInequality}).
\qed
\\

We will use this result several times in an inductive argument to
establish that all filters are purity preserving.

\begin{proposition}\label{interference and purity preservation}
Let a system $A$ satisfy Postulates 1 and 2. Then it has no third-order interference
if and only if all its filters are purity-preserving.
\end{proposition}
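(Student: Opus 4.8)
The plan is to prove the two implications separately. The ``if'' direction (all filters purity-preserving $\Rightarrow$ no third-order interference) is short and rests directly on the impurity identity of Proposition~\ref{prop: impurity inequality}. Given any $M$-slit mask with faces $F_J$ and any pure state $\omega\in\im P_{12\cdots M}$, that proposition gives $\|\omega^{(3)}\|^2=\sum_{i<j}I(P_{ij}\omega)-I(\omega)-(M-2)\sum_i I(P_i\omega)$. Since $\omega$ is pure, $I(\omega)=0$; since $P_i=P_{F_i}$ and $P_{ij}=P_{F_i\vee F_j}$ are orthogonal projections onto faces and hence filters (Theorem~\ref{first main theorem}), the assumption that all filters preserve purity makes $P_i\omega$ and $P_{ij}\omega$ ray-extremal, so $I(P_i\omega)=I(P_{ij}\omega)=0$. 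Hence $\omega^{(3)}=0$ for every such mask and every such $\omega$, and Proposition~\ref{Prop21} then yields that $A$ has no third-order interference.

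For the converse, assume $A$ has no third-order interference. It suffices to show that the orthogonal projection $P_F$ onto $\lin F$ is purity-preserving for every face $F$ (these being precisely the filters under Postulates~1 and 2, cf.\ Theorem~\ref{first main theorem}), and I would do this by strong induction on the rank $r:=|F|$. The extremes are immediate: if $r=1$ then $P_F\omega=\langle\psi,\omega\rangle\psi$ for the pure state $\psi$ generating $F$ (ray-extremal, as $\langle\psi,\omega\rangle\geq 0$ by self-duality); if $r=N:=|A_+|$ then $F=A_+$ and $P_F=\mathbf{1}$. For $2\leq r\leq N-1$, pick a frame $\omega_1,\dots,\omega_r$ generating $F$ (Proposition~\ref{FramesAndFaces}) and extend it to a maximal frame $\omega_1,\dots,\omega_N$ (Proposition~\ref{PropFrameExtension}). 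Let $G_1$ be the face generated by $\omega_1,\dots,\omega_{r-1}$ (rank $r-1$) and let $G_2,\dots,G_M$ be the one-dimensional faces $\ray(\omega_r),\dots,\ray(\omega_N)$, so $M=N-r+2\geq 3$. The $G_j$ are mutually orthogonal and their join is $A_+$ (a maximal frame generates $A_+$), so $\{G_j\}_{j=1}^M$ is a \emph{complete} $M$-slit mask with $G_1\vee G_2=F$. By the induction hypothesis $P_{G_1}$ preserves purity, and each $P_{G_j}$ with $j\geq 2$ is an atomic projection and hence trivially preserves purity. Applying the corollary noted after Proposition~\ref{prop: impurity inequality} (valid here because $M\geq 3$ and, the mask being complete, Proposition~\ref{Prop21} forces $\omega^{(3)}=0$ for every pure $\omega$), every pair projection $P_{G_i\vee G_j}$ preserves purity; in particular $P_F=P_{G_1\vee G_2}$ does, for this particular $F$. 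Finally, Postulate~2 upgrades this to all rank-$r$ faces: for any such $\tilde F$ with generating frame $\psi_1,\dots,\psi_r$, a reversible $T$ with $T\omega_i=\psi_i$ satisfies $T F=\tilde F$ and, being orthogonal (Proposition~\ref{proposition on self duality}), $P_{\tilde F}=TP_FT^{-1}$, which preserves purity since $T$ and $P_F$ do. This closes the induction.

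I expect the main obstacle to be making the induction close, i.e.\ choosing the auxiliary $M$-slit mask so that all of its single-slit projections are already controlled. The naive choice---a $3$-slit mask built from $F$, its complementary face $F'$, and a leftover---fails, because $F'$ can have rank larger than $r$, so its filter is not yet known to preserve purity; any scheme forcing a complement of $F$ into the mask suffers the same defect. The fix is the asymmetric mask above: merging $r-1$ atoms into the single slit $G_1$ and keeping the remaining $N-r+1$ atoms as one-dimensional slits realizes $P_F$ as a \emph{pair} projection in a complete mask all of whose single-slit projections have rank $r-1$ or $1$, hence strictly below $r$. A minor point to verify carefully is the scope of the corollary of Proposition~\ref{prop: impurity inequality}---it needs $\omega\in\im P_{12\cdots M}$---which is why the induction uses complete masks throughout; this also ensures only the physically meaningful case $M\geq 3$ of the no-interference hypothesis is ever invoked.
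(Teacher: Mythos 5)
Your proposal is correct and follows essentially the same route as the paper: the ``if'' direction is the same application of the impurity identity of Proposition~\ref{prop: impurity inequality} together with Proposition~\ref{Prop21}, and the converse is the same induction on filter rank using a complete mask made of one filter of rank one less than the target plus the remaining atoms as single slits. The only cosmetic difference is that you fix one frame and then transfer to all faces of the same rank via Postulate~2 and $P_{\tilde F}=TP_FT^{-1}$, whereas the paper simply observes that every rank-$(k+1)$ filter is a join $P\vee Q$ of an orthogonal rank-$k$ and rank-$1$ pair and runs the mask argument for each face directly.
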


\begin{proof}
Suppose that all filters are purity-preserving. Then, if $P_i$, $i\in\{1,\ldots,M\}$ is any $M$-slit mask and $\omega$ is a pure state in $\im P_{12\dots M}$,
we have $I(\omega)=I(P_i\omega)=I(P_{ij}\omega)=0$, and so~(\ref{eqImpurityInequality}) implies that the component $\omega^{(3)}$ of $\omega$
in $F^{(3)}$ in~(\ref{eqPierceIncompleteMask}) is zero. Then Proposition~\ref{Prop21} implies that there is no third-order interference.

To show the converse direction, note first that it follows from~\cite[Prop.\ 7.28]{ASBook} in the context of Postulates 1 and 2 that all filters are of the form $P_F$
for some face $F$; thus, we only have to show that these orthogonal projections are purity-preserving.
Let $N$ be the size of $A$'s largest frame.
The proof that all filters are purity-preserving will be inductive on
the rank of filters.  The base case is rank-1 filters, which holds
because a rank-one filter projects the state onto the span of an extremal
ray of $A_+$.

We now prove the induction step, which states that if for some fixed rank $k\leq N-1$ all filters are purity-preserving,
then all filters of rank $k+1$ are purity preserving.
Suppose filters of rank $k$ are purity-preserving and
consider any mask consisting of a rank-$k$ filter
$P_1$ and $N-k$ rank-1 filters $P_i, i \in \{2,...,N-k+1\}$.
Then for any pure state $\omega$ each $P_i \omega$ is pure.  So with $||\omega^{(3)}||^2=0$
by the absence of third-order interference,~(\ref{eqImpurityInequality}) becomes:
\[
   \sum_{ij} I(P_{ij}\omega) = 0 \;.
\]
Since impurity is nonnegative, each of the $P_{ij} \omega$ is pure too.
So all the $P_i \join P_j$, and in particular the rank-$(k+1)$ filters
$P_1 \join P_i$, $i \in \{2,...,N-k+1\}$, are purity-preserving.
Since every rank-$(k+1)$ filter on $A$ has the form
$P \join Q$ for some rank-$k$
$P$ and some rank-1 $Q$ orthogonal
to $P$, all rank-$(k+1)$ filters
on $A$ are purity-preserving, and the induction step is established
for $k \leq N-1$.  Hence all filters of rank
up to $N-1$ are purity-preserving.
\end{proof}

In the context of assumptions (a) and (b) of Theorem~\ref{ThmAlfsenShultz}, Postulate 3${}'$ is known~\cite{ASBook, GuzFilter78}
to be equivalent to another postulate: that the lattice of exposed faces
has the~\emph{covering property}. We say that an element $F$ of a lattice \emph{covers} another element $G$ if
$G$ is below $F$ and there is nothing between them. Hence an atom is an element that covers $0$.
By definition, a lattice has the covering property if for every element $F$ and atom $a$, either $F\vee a = F$
or $F\vee a$ covers $F$.
  
In the context of Postulates 1 and 2,
the covering property can be formulated as follows: if $F$ is any face of $A_+$, and $\omega$ a
  pure state, then the face $G$ generated by both has rank
  $|G|\leq|F|+1$.  Since we have shown that (a) and (b) of Theorem
  \ref{ThmAlfsenShultz} follow from Postulates 1 and 2, the covering property
  can replace the absence of third order interference (or Postulate
  3$^\prime$).

\section{Standard quantum theory from observability of energy}

In standard quantum mechanics, we are used to treating the generator
of time evolution as an observable: evolution of any closed quantum
system with initial state $\rho_0$ is given by
\[
   \rho(t)=e^{-i H t} \rho_0 e^{i H t},
\]
where $H=H^\dagger$ is the system's Hamiltonian. The right-hand side,
as a one-parameter group acting on $\rho$, is generated by the
superoperator
$X: \rho \mapsto -i[H,\rho]$,
so that
$\rho(t)=e^{tX}\rho_0$. We are used to associating the observable
$E:  \rho \mapsto \tr (H \rho)$ with this generator, and call it the
``expectation value of energy''.

It is an interesting question why such an association is possible -- what is the operational relation between $E$ and $X$? The following properties
characterize this relation:
\begin{itemize}
\item If $X$ and $X'$ are two different generators, then the corresponding observables satisfy $E'\neq E$. That is, the observable determines
the generator uniquely.
\item The observable $E$ is a conserved quantity of the time evolution generated by $X$: $E(\rho(t))=E(\rho_0)$.
\item If time evolution is not trivial (i.e.\ $\rho(t)$ not constant), then $E$ is also not a trivial observable:
there are at least two states $\rho,\sigma$ such that $E(\rho)\neq E(\sigma)$.
\item The map $X\mapsto E$ is linear -- in particular, larger values of $E$ correspond to ``faster'' time evolution.
\end{itemize}

These properties allow us to define a notion of ``observability of energy'' for arbitrary probabilistic theories, which will turn
out to be a rather restrictive property.
\begin{definition}
\label{DefEnergyMeasurement}
Let $A$ be any state space with a group of reversible transformations $\G_A$.
An \emph{energy observable assignment} is an injective linear map $\phi:\g_A\to A^*$ such that the observable $\phi(X)$ is
conserved under the time evolution generated by $X$, but not under \emph{all} time evolutions unless $X=\phi(X)=0$.
We say that ``energy is observable'' on system $A$ if $\g_A\neq\{0\}$ and if there exists an energy observable assignment.
\end{definition}

Our fourth postulate is thus
\begin{postulate}Energy is observable on every system.
\end{postulate}

Writing the time evolution starting with initial state $\omega_0$ explicitly as
\[
   \omega(t):=e^{tX}\omega_0,
\]
a conserved quantity $E\in A^*$ is a linear functional with $E(\omega(t))=E(\omega_0)$. It is easy to check that this
is equivalent to $E\circ X=0$, where ``$\circ$'' is for composition of linear maps. If $E$ were equal to the order unit,
i.e.\ $E=u_A$, then $E(\omega(t))=u_A(\omega(t))=1$ for all $t$ and \emph{all} time evolutions, since all elements of
$\G_A$ preserve the normalization. Thus, Definition 28 implies the conditions
\begin{eqnarray*}
    \phi(X)\circ X &=& 0 \mbox{ for all }X\in\g_A, \\
    u_A&\not\in&{\rm ran}(\phi).
\end{eqnarray*}
Our notion is related to Alfsen and Shultz's notion of a ``dynamical
correspondence''~\cite{ASBook}, except that they require an injection
of observables into dynamical generators, rather than vice versa, and
in addition to a conservation condition, impose a 
condition relating reversible transformations to general automorphisms
of the cone of states which is formulated in the Jordan-algebraic
setting.  Our setting is more general, and we impose no such relation
between the reversible transformations and cone automorphisms.
Connes~\cite{Connes} used a notion of \emph{orientation} related to
dynamical correspondence to characterize the state spaces of
von Neumann algebras (one of the infinite-dimensional generalizations
of standard quantum systems) among those of JBW-algebras (one infinite-dimensional generalization of finite-dimensional formally real Jordan algebras).
Other work making use of similar notions to characterize quantum and
classical theory in different settings, can be found in 
\cite{GrginPetersen1, GrginPetersenComposability, Kapustin, Moldoveanu}.  
References~\cite{Kapustin}, \cite{Moldoveanu}, and~\cite{GrginPetersen1} all \emph{derive} relations
between energy and observables, and thence that the theory must
essentially be standard quantum or classical, from considerations involving dynamics on composites, so our
work is complementary to theirs in that we avoid assumptions about composite
systems.

The identification of dynamical generators with conserved observables
that exists in classical and quantum theories is central to many
physical phenomena and arguments, providing motivation for our
postulate.  We mention in particular that standard formulations of the
statistical mechanics underlying thermodynamics use a conserved energy
observable in the definition of free energy.

Our goal is to show the following:

\begin{theorem}
Postulates 1, 2, 3, and 4 imply that the state space is an $N$-level state space of standard complex quantum theory, for some $N\in\N$,
and all conjugations $\rho\mapsto U\rho U^\dagger$ with $U\in {\rm SU}(N)$ are contained in the group of reversible transformations.
\end{theorem}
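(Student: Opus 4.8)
The plan is to start from Theorem~\ref{TheoremInterference}, which tells us that Postulates 1, 2 and 3 force the system to be classical or an irreducible formally real Jordan algebra, and then use Postulate 4 to eliminate every possibility except complex $N\times N$ quantum theory. A classical system has only permutations as reversible transformations, so $\G_A$ is finite and $\g_A=\{0\}$, which already contradicts the clause ``$\g_A\neq\{0\}$'' in the definition of energy observability. So it remains to run through the Jordan--von Neumann--Wigner list---real, complex and quaternionic quantum theory ($n\times n$), the $3$-level octonionic algebra, and the $d$-ball spin factors---and show that among these only complex quantum theory admits an energy observable assignment.

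Identify $A^*$ with $A$ through a self-dualizing inner product as in Proposition~\ref{proposition on self duality}; then every reversible transformation is orthogonal and fixes $u_A$, so $\g_A$ consists of skew operators annihilating $u_A$. For $X\in\g_A$ the conservation requirement $\phi(X)\circ X=0$ becomes, using $X^*=-X$, the condition $\phi(X)\in\ker X$, and in addition $\phi$ is injective with $u_A\notin\operatorname{ran}\phi$. Two cheap obstructions follow. \emph{(Dimension.)} If $\dim\g_A>\dim A$ there is no injective linear $\phi:\g_A\to A$ at all. \emph{(Degeneracy.)} If $\ker X=\R u_A$ for a dense set of $X\in\g_A$, then by linearity and continuity $\operatorname{ran}\phi\subseteq\R u_A$, which contradicts injectivity when $\dim\g_A>1$ and contradicts $u_A\notin\operatorname{ran}\phi$ always.

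To use these I first observe that transitivity of $\G_A$ on maximal frames (Postulate 2) pins down $\g_A$ in most cases. In the quaternionic and octonionic cases it forces $\G_A$ to contain the connected automorphism group, so $\dim\g_A=\dim\mathfrak{sp}(n)=n(2n+1)>2n^2-n=\dim A$ and $\dim\g_A=\dim\mathfrak{f}_4=52>27=\dim A$ respectively, and the dimension obstruction applies. For a $d$-ball spin factor the maximal frames have only two elements, so $\G_A$ need not be all of $SO(d)$; but a generator acts on the ball part $\R^d$ as some $\tilde H\in\mathfrak{so}(d)$ with $\ker X=\R u_A\oplus\ker\tilde H$, and one checks case by case (using the classification of transitive actions on $S^{d-1}$) that for every $d\neq 3$ one of the two obstructions applies, whereas $d=3$ is precisely the qubit, i.e.\ complex quantum theory with $N=2$. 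Real quantum theory with $n\le 2$ coincides with the $d=2$ ball, leaving only real $n\times n$ quantum theory with $n\ge 3$, where transitivity on maximal frames forces $\g_A=\mathfrak{so}(n)$ but neither cheap obstruction applies.

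This is the crux. Polarizing the conservation condition gives $[H,\phi(X_K)]+[K,\phi(X_H)]=0$ for all $H,K\in\mathfrak{so}(n)$, where $X_H:\rho\mapsto H\rho-\rho H$. Applying this to the rank-two rotation generators $H_{ab}=e_ae_b^\top-e_be_a^\top$ and repeatedly comparing the supports of the matrices on both sides forces $\phi(X_{H_{ab}})\in\R u_A$ for each pair $a<b$; since these generators span $\mathfrak{so}(n)$ we get $\operatorname{ran}\phi\subseteq\R u_A$, contradicting injectivity for $n\ge 3$. Hence the only surviving Jordan algebra is $\mathfrak{h}_N(\C)$, i.e.\ complex $N\times N$ quantum theory; for it Postulate 4 is genuinely consistent, witnessed by the assignment sending $X_H:\rho\mapsto -i[H,\rho]$ to the traceless part $H-\frac1N(\tr H)\mathbf{1}$. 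Finally, transitivity of $\G_A$ on pure states and on $2$-frames, together with the classification of transitive compact-group actions on $\mathbb{CP}^{N-1}$ (which leaves $PU(N)$ once the alternative $PSp(N/2)$ is excluded---its pure-state stabilizer is not transitive on the orthogonal pure states), gives $\G_A\supseteq PU(N)=\{\rho\mapsto U\rho U^\dagger:U\in\mathrm{SU}(N)\}$, which is the last assertion. The hardest part is the real quantum theory case, where both cheap obstructions fail and one has to squeeze a contradiction out of the polarized conservation identity; a secondary nuisance is checking that $\G_A$ is ``big enough'' in the low-dimensional spin-factor cases for the obstructions to bite.
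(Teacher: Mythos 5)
Your proposal is correct, and its global strategy coincides with the paper's: reduce via Theorem~\ref{TheoremInterference} to classical or irreducible Jordan systems, discard the classical case because $\g_A=\{0\}$, eliminate the octonionic, quaternionic and most ball cases by the injectivity/dimension obstruction, handle the low-dimensional ball exceptions (e.g.\ $d=2$ and $d=4$ with ${\rm SU}(2)$, cf.\ Appendix~\ref{AppendixB}) by showing the conserved functional degenerates to a multiple of $u_A$, and finally force ${\rm SU}(N)$-conjugations into $\G_A$ by excluding the symplectic alternative on $\mathbb{C}P^{N-1}$ with a fixed-pure-state argument, exactly as in the paper. The genuine difference is in the hardest step, real quantum theory with $N\geq 3$. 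The paper keeps only the unpolarized conservation condition $[\bar\phi(M),M]=0$, restricts it to the abelian family of block-diagonal antisymmetric matrices with pairwise distinct $\lambda_i$, computes the commutant via Lemmas~\ref{LemLinAlg2} and~\ref{LemLinAlg3}, and concludes by a dimension-and-density count that $\mathbf{1}$ (i.e.\ $u_A$) lies in the range of $\bar\phi$, contradicting Definition~\ref{DefEnergyMeasurement}. You instead polarize the conservation identity (legitimate, since $\phi$ is linear) to get $[\bar\phi(H),K]+[\bar\phi(K),H]=0$ for all $H,K\in{\rm so}(N)$, and then use the rank-two generators $H_{ab}$: the diagonal condition puts $\bar\phi(H_{ab})$ in block form, the cross conditions with disjoint pairs force its complementary block to commute with ${\rm so}(N-2)$ and hence be scalar, and the overlapping pairs equate the two scalars because the resulting commutators have disjoint supports; I checked that this chain does close for all $N\geq 3$, giving $\operatorname{ran}\bar\phi\subseteq\R\mathbf{1}$ and a contradiction with injectivity. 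Your route is somewhat more elementary (no commutant lemmas or density argument, and it yields the stronger conclusion that \emph{every} image is a multiple of $u_A$), at the price of using the polarized identity; the paper's buys the same contradiction from the weaker, unpolarized condition alone. One point you assert rather than prove is that Postulate 2 forces $\G_A$ to contain all ${\rm SO}(N)$-conjugations in the real case (and the corresponding lower bounds on $\g_A$ in the quaternionic/octonionic cases); the paper devotes an explicit lemma in Appendix~\ref{AppendixB} (the irrational-rotation argument) and citations to transitive-action classifications to this, so you should be aware that this input is nontrivial, though your reliance on it is sound.
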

\begin{proof}
We show that complex quantum $N$-level state spaces are the only finite-dimensional irreducible formally real Jordan
algebra state spaces that have observability of energy. This is enough due to Theorem~\ref{TheoremInterference}.

First, consider the $d$-dimensional ball state spaces (``spin factors'')
\[
   \Omega_d:=\{(1,r)^T\,\,|\,\ r\in\R^d,\enspace \|r\|\leq 1\}.
\]
(The qubit appears again in this class of systems, since the $d=3$ case is the Bloch ball.)
The Lie algebra is non-trivial only for $d\geq 2$. Consider the case that the group $\G_d$ of reversible transformations contains
the full orthogonal group, such that $\g_d={\rm so}(d)$. If Postulate 4 holds, then there must be an injective linear map $\phi$ from
the Lie algebra $\g_d$ of $\G_d$ to $\R^{d+1}$. But $\dim({\rm so}(d))=d(d-1)/2$ which is larger than $d+1$ for $d\geq 4$,
so no such map can exist for $d\geq 4$. If $d=2$, we have
\[
   \g_2=\R\cdot\left(
      \begin{array}{ccc}
         0 & 0 & 0 \\
         0 & 0 & 1 \\
         0 & -1 & 0
      \end{array}
   \right),
\]
and calling this matrix $X$, it is easy to see that $\phi(X)\circ X=0$ implies that $\phi(X)=c\cdot u_d$ for the normalization functional
$u_d(x_1,x_2,x_3)=x_1$. This contradicts the definition of an energy observable assignment.

If $d$ is even or $d=7$, there are compact connected subgroups of ${\rm SO}(d)$ that are transitive on the pure states of $\Omega_d$,
and thus satisfy Postulate 2, (see Ref.~\cite{MasanMulEntang} for the list of groups; they have been classified in~\cite{MontgomerySamelson,Borel}).
As we show in Appendix~\ref{AppendixB}, all of these cases except for one can be ruled out by dimension counting, exactly
as the cases $d\geq 4$ above; the only case where this does not work is $d=4$ with transformation group $\G_2={\rm SU}(2)$.
But there, it can be shown that there are time evolutions which only have the normalization as their conserved observable,
contradicting Definition~\ref{DefEnergyMeasurement}.

Now let $A$ be the state space of the $3\times 3$ octonionic matrices. Due to Postulate 2 (in the special case of $1$-frames), the group of reversible transformations $\G_A$
acts transitively on the pure state manifold, which is the Cayley plane $P^2(\mathbb{O})$, hence so does its connected component
at the identity~\cite{MMAxioms}. According to~\cite{Tsukuda} and~\cite{Onishchik}, the only compact connected
Lie group which acts transitively and effectively on it is the exceptional Lie group ${\rm F}_4$. But $\dim({\rm F}_4)=52 > \dim(A)=27$, so there is no
injective linear map from $\g_A$ to $A^*$.

For $N\geq 3$, consider the $N$-level state space $A_N$ of quaternionic quantum mechanics, with any group of reversible transformations $\G_N$ satisfying
Postulate 2. Then $\dim(A_N)=2 N^2-N$. The pure states define the quaternionic projective space $P^{N-1}(\mathbb{H})$, and so $\G_N$
must act transitively on it. According to~\cite{Onishchik}, the only possibility is $\g_N\supseteq {\rm sp}(N)$, and $\dim({\rm sp}(N))=N(2N+1)$, which is larger than $\dim(A_N)$.

The only remaining cases are the $N$-level state spaces $A_N$ of real quantum mechanics for $N\geq 3$, which are more difficult to rule out -- dimension counting
does not work. First, it can be shown from the classification results of~\cite{Tsukuda} that Postulate 2 implies that the group of reversible transformations
contains all maps of the form $\rho\mapsto O\rho O^T$ with $O\in {\rm SO}(N)$; consequently, every map $X(\rho):=\rho\mapsto [M,\rho]$ with $M\in {\rm so}(N)$ is a valid generator.
An energy observable assignment $\phi$ maps these generators (resp.\ the matrices $M$) to observables (that is, symmetric matrices $\bar M$) such that
$[\phi(X)](\rho)=\tr(\bar M \rho)$; the conservation condition $\phi(X)\circ X=0$ becomes $[M,\bar M]=0$. However, as we show in the appendix by considering
certain special generators $X$, all maps of this kind must have $\bar M=\mathbf{1}$ in their range, yielding the normalization functional, which contradicts
the definition of an energy observable assignment.

In the standard case of complex $N$-level quantum theory, it remains to show that the group of reversible transformations $\G_N$ contains all unitaries
(it might also contain anti-unitaries; due to Wigner's theorem~\cite{Bargmann,Wigner}, these are the only possibilities).
Postulate 2 implies transitivity of the connected subgroup of $\G_N$ on the pure states, hence on the projective space $P^{N-1}(\C)$;
according to~\cite{Onishchik}, for odd $N$, the only possibility is the projective action of ${\rm SU}(N)$; but if $N$ is even, say $N=2n$, there is a second possibility,
which is the projective action of ${\rm Sp}(N)$.
But consider two $N$-frames $|e_1\rangle\langle e_1|,\ldots, |e_N\rangle,\langle e_N|$ and $|f_1\rangle\langle f_1|,\ldots,|f_N\rangle,\langle f_N|$,
where $e_1,\ldots, e_N$ are defining vectors of the basis in which $J=\left(\begin{array}{cc} 0 & -\mathbf{1} \\ \mathbf{1} & 0\end{array}\right)$,
such that a unitary $U$ is in ${\rm Sp}(N)$ if and only if $U^T J U=J$. Moreover, suppose that $f_1=e_1$. If Postulate 2 is satisfied, there is $U\in {\rm Sp}(N)$
such that $U |e_i\rangle\langle e_i| U^\dagger=|f_i\rangle\langle f_i|$ for all $i$, so $U e_1 = e^{i\varphi} e_1$ for some $\varphi\in \R$. Since
$J e_1=e_{n+1}$, it is easy to see that the symplectic constraint on $U$, together with $U^\dagger e_{n+1}=\overline{(U^T e_{n+1})}$,
implies that $U e_{n+1}=e^{-i\varphi} e_{n+1}$, so $|f_{n+1}\rangle\langle f_{n+1}|
=|e_{n+1}\rangle\langle e_{n+1}|$, which contradicts frame transitivity, i.e.\ Postulate 2.
\end{proof}

The fact that energy observability rules out classical systems in
this theorem is a consequence of our finite-dimensional setting,
for which classical reversible dynamics are a discrete group.  
The probabilistic representation of phase-space classical mechanics
involves an infinite-dimensional space of Liouville distributions, and
does, of course, have continuously parametrized reversible dynamics.

\section{Discussion and conclusions}

We have given four principles that we argue have, to various degrees,
the virtues of conceptual clarity, important physical implications,
intuitive appeal, and interesting experimental consequences.
We have
shown that while they are formulated in the setting of an extremely
broad class of probabilistically described systems together they
constrain the abstract structure of such a system to be that of the
usual Hilbert space quantum theory over the complex field.  Our
demonstration was limited to finite dimension, a limitation which we
believe to be primarily technical. This reconstruction of quantum theory differs
interestingly from several previous ones in avoiding any postulates
concerning the structure or even existence of composite systems.

Another desirable feature of our reconstruction is its stepwise
structure, in which conceptually and often physically
significant properties appear even as a consequence of the first
postulate, and additional such properties appear at each step.

Postulates 1 and 2 together further have very strong consequences:
they imply that all effects are allowed, that every face of the state space is the image of a
filter, i.e., that the state space is \emph{projective}, and also that
it is self-dual.  Filters allow one to verify that a state is in a
claimed face of the state space without (if the claim is true)
disturbing the state.  They are likely to be important ingredients of
both information-processing and thermodynamical protocols;
possibilities which are under investigation.  Filters can also be used
to equip a system with operations destroying coherence between any set
of mutually orthogonal faces. In other words, the existence of filters
ensures the possibility of a process of
\emph{decoherence} similar to the one in quantum theory.

Self-duality is another strong property of state spaces that is
independent of projectivity.
Self-duality introduces a correspondence between atomic measurement
outcomes and pure states that is exploited in quantum steering and
teleportation, for example.  It is
also known to be linked, in some special contexts such as polygonal
state spaces, to correlations satisfying the Tsirel'son bound on 
violations of Bell locality~\cite{JanottaEtAl}.

The lattice of faces given Postulates 1 and 2 is orthomodular---as is
implied, indeed, by projectivity.  This expresses a kind of ``local
classicality'', which one sees also in the topos-theoretic approach
of e.g. \cite{IshamButterfield2000}, and also
relates our work to the classic ``quantum logic'' approach initiated by
Birkhoff and von Neumann \cite{BirkvN36}.  Postulate 2 imposes a high degree of
symmetry on this lattice---it would be interesting to investigate
lattices with such high symmetry using purely lattice-theoretic
methods.

There is a close connection between Postulate 2 and certain properties
of the circuit model for quantum computation. In this model it is
standard to start with an input $n-$level system in a particular
state, as well as a number of other $n-$level systems which can
without loss of generality be taken to be in the $|0 \rangle$
state. Then we implement the circuit representing the computation we
wish to carry out, and at the end we must measure a specific observable
to determine the (probability of the) output of the computation. This
last measurement step can be done without loss of generality by first
reversibly transforming the (generally entangled) logical $n-$level
system of interest into an individual physical $n-$level system,
and then doing the desired measurement on this system alone. This transfer is
possible because quantum theory satisfies Postulate 2.  
Postulate 1 and 2 together can be understood as generalizing this idea
by demanding that every state (not just pure ones) of a system can be
transferred to any other system (with the same or larger number of
distinguishable states) by a suitable reversible interaction, provided
both are subsystems of a common larger system.

Our third postulate provides, in the context set by the first two
postulates, a perhaps surprising link between the absence of
irreducibly three-slit interference, currently under experimental
scrutiny, and mathematical notions: the Jordan algebraic structure of
quantum theory on the one hand, and the satisfaction of the covering
law by its lattice of faces, on the other.  In the context of our
first two postulates, these are all equivalent.  The known equivalence
(even in the broader context of projective systems) of the latter two
with the requirement that filters preserve purity is further food for
thought.  An interesting question is whether the equivalence of no
higher-order interference with either of
 these two principles still holds in the
broader projective context.  Looking to operational consequences,
perhaps the failure of purity preservation might give rise to an extra
source of noise or irreversibility in information processing or
thermodynamical protocols---though this might be circumvented if the
protocols are designed so the states being filtered are ``compatible''
with the filters.

Most interesting, perhaps, is the possibility that there exist
families of systems satisfying our first two postulates but not the
third: these would still have an extremely regular structure and
likely support interesting information processing, but so far no
examples are known. Should they be shown not to exist,
we would then know that Jordan systems are singled out by Postulates
1 and 2 alone.

The final step, narrowing things down from Jordan systems to complex
quantum systems via energy observability, is not so surprising.
Similar postulates have been used for this purpose by Connes and by
Alfsen and Shultz.  We require an injection of dynamical generators
into the space of observables, each injected generator conserved by
the dynamics it generates, whereas Alfsen and Shultz require the
converse and also impose ancillary conditions. In contrast, our
condition, though applied only to Jordan algebraic systems, is
formulated in greater generality where the ancillary conditions do not
make sense.  It is likely that in the Jordan-algebraic setting, the
ancillary conditions, as well as a bijection, are obtained
automatically.  Exploration of conditions of this type---either ours,
or abstractions of Connes' or Alfsen and Shultz's---in a broader
context are desirable.  Indeed, as we have mentioned, others have
explored similar principles, though some of these investigations 
(e.g. \cite{Kapustin}, \cite{Moldoveanu}) have
made use of composite systems which appear to us to be required to
satisfy local tomography.  In the context of our Postulates 1 and 2,
locally tomographic composites and the existence of stand-alone
$2$-level systems would imply that the systems are standard quantum
systems; indeed one reason for our interest in energy observability is
as an \emph{alternative} to local tomography.

The fact that energy observability rules out classical systems in this
theorem is an artifact of our finite-dimensional setting, for which
classical reversible dynamics are a discrete group.  Since
infinite-dimensional classical systems \emph{do} have continuous
one-parameter groups of reversible transformations, however, it is
important to point out that there are numerous alternative assumptions
which would allow us to rule out classical systems in the
finite-dimensional case without assuming the existence of continuous
reversible dynamics.  Such alternatives are likely to retain their
usefulness in infinite dimensions. For example, we could postulate
the existence of a tradeoff between information gained in a
measurement and disturbance to the measured state \cite{Barrett}, or
the existence of at least one state that has two distinct convex
decompositions into pure states, or the existence of interference; the
existence of nonclonable or nonbroadcastable sets of states
\cite{BBLW08clone, BBLW07broadcast} might also work.

Although we are not aware of work using the set of postulates we use,
several authors have used one or more related principles.  In Wilce's
characterization in \cite{Wilce4.5axioms}, a symmetry principle
reminiscent of our Postulate 2 (but concerning test spaces rather than
state spaces) was used, along with reversible transitivity on pure
states (a special case of Postulate 2). In his most recent
reconstruction, Hardy~\cite{Hardy2011} uses a postulate (``filters are
non-flattening'') which relies on a definition of filters that is
equivalent to ours (at least in the context of our Postulates 1 and
2), and which implies Postulate 3${}'$ (that filters are
purity-preserving).  Niestegge has also used the absence of
higher-order interference as one ingredient in deriving Jordan
algebraic systems \cite{NiesteggeInterference}.  In
\cite{InterferenceInPrep} it was established that finite-dimensional
Jordan systems do not have higher-order interference, a result also
found by Niestegge in \cite{NiesteggeInterference}.

Daki\'c and Brukner~\cite{DakicBrukner} have used Postulate 1 and the fact that all
pure states are connected by reversible transformations to derive the ball shape of two-level
state spaces (a fact that carries over to all two-level systems satisfying our Postulates 1 and 2).
In their reconstruction of quantum theory, Chiribella, d'Ariano, and Perinotti~\cite{GiulioDer11} have proven
several lemmas that are close to some of ours (such as statements on positive projections, or a sum representation
of projective units), but obtained them from different assumptions.
We have already
mentioned other work postulating connections between observables and
dynamical generators.  More work understanding the connections between
the various approaches would likely be fruitful.

Besides providing an understanding of the Hilbert space structure of quantum theory from first principles, our reconstruction suggests a variety of open questions, such as the existence of systems with strong symmetry and classical decomposability, but also with higher-order interference. Furthermore, we think that the naturalness of our postulates allows us to make closer contact with other aspects of physics, a direction we consider important to pursue.

This is evident from the postulates themselves -- Postulate 3 considers a property that is under direct experimental investigation, and so solving the aforementioned open problem might provide concrete consistent models that can be tested against quantum theory in experiments. Postulate 4 relates the probabilistic structure to the existence of a notion of energy of the form physicists are used to. Furthermore, consequences of the postulates -- such as projectivity -- seem crucial for thermodynamic reasoning. In fact, weaker versions of Postulates 1 and 2, in conjunction with local tomography, are enough to make sense of the general-probabilistic thermodynamics results
in~\cite{GPTThermo1,GPTThermo2}.

In this sense, our result is part of a broader research program: analyze the structure of
physics -- that is, the way that the different parts of physics fit
together -- by rigorously assessing the consequences of changing some
of its parts.  One part of physics is quantum theory, and seeing how a
more general probabilistic theory could still harmonize with
thermodynamics or Hamiltonian mechanics is one of many ways to gain
insights into the way our world works.  Given the current quest for a
theory that unifies quantum and gravitational physics, in a situation
where conclusive experimental results are mostly absent, it seems
particularly promising to rigorously analyze the logical and
conceptual structure of what is known, hoping thereby to glimpse a
path towards the unknown.

\section*{Acknowledgments}
Some of this work was done while the authors were employed by or visiting the Perimeter Institute for Theoretical Physics, Waterloo, Ontario, Canada.
Research at Perimeter Institute is supported in part by the Government
of Canada through NSERC and by the Province of Ontario through MRI.
Also, part of the work was done while HB was a Fellow of the Stellenbosch
Institute for Advanced Studies at the Wallenberg Research Center at 
Stellenbosch University, in 2012. Furthermore, we would like to thank two anonymous
referees for the thorough review of the manuscript and for helpful suggestions, and one in particular for pointing out a mistake in an earlier
version of Proposition~\ref{SpectralityImpliesSaturation} and for suggesting how to fix it.

\onecolumngrid

\begin{appendix}

\section{Perfection and positive projections}
\label{AppendixIochum}
In this section, we give a proof of the following proposition which is originally due to Iochum~\cite{IochumThesis,IochumBook}.
\begin{proposition}
Let $A_+$ be a regular self-dual cone in $A$.  $A_+$ is 
perfect if and only if each orthogonal (with 
respect to the self-dualizing inner product) projection $P_F$ 
onto the linear span $F$ of a face $F_+$, is positive.
\end{proposition}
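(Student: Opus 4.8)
\emph{Proof proposal.} The plan is to establish the equivalence one face at a time. Fix a face $F_+$ of $A_+$, let $F=\lin F_+$ be its linear span, and let $F^*:=\{y\in F:\langle y,x\rangle\ge 0\text{ for all }x\in F_+\}$ be the dual cone of $F_+$ computed inside $F$ with the inherited inner product. I would show that $P_F$ is positive if and only if $F^*=F_+$. Since ``$F^*=F_+$ for every face'' is exactly the definition of perfection, and ``$P_F$ positive for every face'' is the condition in the statement, quantifying over faces then gives the proposition.

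The ``easy'' inclusions hold with no hypothesis beyond regularity and self-duality. First, $F_+\subseteq F^*$: if $v,w\in F_+\subseteq A_+=A_+^*$ then $\langle v,w\rangle\ge 0$, so $v\in F^*$. Second, $P_F(A_+)\subseteq F^*$: since $P_F$ is the orthogonal projection onto $F$ it is self-adjoint and fixes $F$ pointwise, so for $x\in A_+$ and $w\in F_+$ we have $\langle P_F x,w\rangle=\langle x,P_F w\rangle=\langle x,w\rangle\ge 0$. From these two facts the direction ``$F^*=F_+\Rightarrow P_F$ positive'' is immediate: $P_F(A_+)\subseteq F^*=F_+\subseteq A_+$.

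For the converse, the one nontrivial ingredient is the finite-dimensional lemma $A_+\cap F=F_+$ for every face $F_+$. I would prove this by choosing a point $w$ in the relative interior of $F_+$ (whose affine hull is the linear subspace $F$, since $0\in F_+$): for any $z\in A_+\cap F$ we have $w+\varepsilon z\in F_+$ for small $\varepsilon>0$ because $z\in F$, and then $w\in A_+$, $\varepsilon z\in A_+$, and $w+\varepsilon z\in F_+$ force $\varepsilon z\in F_+$ by the defining property of a face in its cone formulation, hence $z\in F_+$. Granting the lemma: suppose $P_F$ is positive and take $y\in F^*$. For arbitrary $z\in A_+$, the decomposition $z=P_F z+(I-P_F)z$ together with $y\in F=\im P_F$ gives $\langle y,z\rangle=\langle y,P_F z\rangle$; by positivity $P_F z\in A_+\cap F=F_+$, so $\langle y,P_F z\rangle\ge 0$ since $y\in F^*$. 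Thus $\langle y,z\rangle\ge 0$ for all $z\in A_+$, i.e.\ $y\in A_+^*=A_+$, and combined with $y\in F$ this gives $y\in A_+\cap F=F_+$. Hence $F^*\subseteq F_+$, and with the reverse inclusion we get $F^*=F_+$.

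The main obstacle is essentially just the identity $A_+\cap F=F_+$; the rest is a routine unwinding of self-duality and of the self-adjointness and idempotence of $P_F$. The point to be careful about is that the argument for that identity uses only the existence of relative-interior points (which is where finite-dimensionality is genuinely needed) and the face property itself, and in particular does not presuppose that faces of $A_+$ are exposed.
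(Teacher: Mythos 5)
Your proof is correct and takes essentially the same route as the paper's (Appendix~\ref{AppendixIochum}): both directions rest on the self-adjointness and idempotence of $P_F$ combined with self-duality of $A_+$, and the inclusions you call ``easy'' are exactly the ones the paper uses. The only difference is that you prove the identity $A_+\cap\lin F_+=F_+$ explicitly (via relative-interior points), whereas the paper invokes it implicitly when concluding $P_F A_+=F_+$ and when passing from $y\in F\cap A_+$ to $y\in F_+$.
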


\begin{proof} 
We write $F_+^*$ for the dual of $F_+$ in $F$, according to the 
restriction of the self-dualizing inner product for $A_+$; thus 
perfection means that $F_+^* = F_+$ for every face. 

We begin with ``only if''.  Let $P$ be the orthogonal projector onto
$F$, $x \in A_+$, $y \in F_+$.
Now $\langle y, Px \rangle = \langle P^* y, x \rangle$; since $P$ is
Hermitian this equals $\langle Py , x \rangle = \langle y , x
\rangle$.  The latter is nonnegative because both $y$ and $x$ are in
$A_+$, which is self-dual.  So we have shown 
$\forall y \in F_+\, \langle y, Px \rangle \ge 0$, i.e. $Px \in F_+^*$.  
But by perfection $F_+^* = F_+$.  Thus $Px \in F_+$ for any $x \in A_+$, 
i.e. $P$ is positive. 

For ``if'', we begin by observing that given positivity of $P$,
$P A_+ = F_+$.   This is because $P x = x$ for any $x \in F$, 
so $P F_+ = F_+$, whence $P A_+ \supseteq F_+$; 
on the other hand $P A_+ \subseteq F_+$ by positivity.  

Note that $F_+ \subseteq F_+^*$ as a consequence of self-duality of 
$A_+$:  since everything in $F_+$ is in $A_+$, it must have nonnegative
inner product with everything in $A_+$, hence with everything in $F_+$, 
and since it is in addition in $F$, it is in $F_+^*$.   Recall
that $y \in F_+^*$ is defined as 
 $y \in F$ and satisfying $\forall x \in F_+ \,
\langle y, x \rangle \ge 0$.  Since $PA_+ = F_+$, the latter part
of this condition is equivalent to
$\forall z \in A_+\, \langle y, Pz \rangle \ge 0$.  Again moving the projector
to act on $y$, using its Hermiticity and that $y \in F$ so $Py=y$, this 
is equivalent to $\forall z \in A_+\, \langle y, z \rangle \ge 0$, i.e. 
$y \in A_+^*$.  Since  $A_+^* = A_+$ and $y$ was also assumed in $F$, 
$y \in F_+$, establishing that $F_+^* \subseteq F_+$.  We have now shown
$F_+^* = F_+$, i.e. perfection.
\end{proof}

\section{Calculations for observability of energy}
\label{AppendixB}
The goal of this section is to show the following:
\begin{lemma}\label{LemGroups}
The possible state spaces satisfying Postulates 1, 2 and 3 which have a non-trivial connected component $\G_0$ of
their reversible transformation groups are the following:
\begin{itemize}
\item The $d$-dimensional ball state spaces $\Omega_d:=\{(1,r)^T\,\,|\,\ r\in\R^d,\enspace \|r\|\leq 1\}$ with $d\geq 2$, and either $\G_0={\rm SO}(d)$, or
$\G_0={\rm SU}(d/2)$ if $d=4,6,8,\ldots$, or $\G_0={\rm U}(d/2)$ if $d=2,4,6,8,\ldots$, or $\G_0={\rm Sp}(d/4)$ if $d=8,12,16,\ldots$, or $\G_0={\rm Sp}(d/4)\times {\rm U}(1)$ if
$d=8,12,16,\ldots$, or $\G_0={\rm Sp}(d/4)\times {\rm SU}(2)$ if $d=4,8,12,\ldots$, or $\G_0={\rm G}_2$ if $d=7$, or $\G_0={\rm Spin}(7)$ if $d=8$, or $\G_0={\rm Spin}(9)$ if
$d=16$,
\item $N$-level real quantum theory with $N\geq 2$ and $\G_0=\{\rho\mapsto O \rho O^T\,\,|\,\, O\in {\rm SO}(N)\}$,
\item $N$-level complex quantum theory with $N\geq 2$ and $\G_0=\{\rho\mapsto U \rho U^\dagger\,\,|\,\, U\in {\rm SU}(N)\}$,
\item $N$-level quaternionic quantum theory with $N\geq 2$ and $\G_0\simeq {\rm Sp}(N)/\{-\mathbf{1},+\mathbf{1}\}$ (see~\cite{Bargmann,Onishchik}),
\item $3$-level octonionic quantum theory with $\G_0\simeq {\rm F}_4$.
\end{itemize}
However, among those, only the complex quantum theory state spaces (including $\Omega_3$, the qubit) satisfy Postulate 4, that is, observability of energy.
\end{lemma}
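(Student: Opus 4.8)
The plan is to reduce to irreducible formally real Jordan algebras via Theorem~\ref{TheoremInterference} and then handle observability of energy family by family. A finite-dimensional classical system has a discrete (finite) group of reversible transformations, so its $\G_0$ is trivial; hence by Theorem~\ref{TheoremInterference} a system satisfying Postulates~1, 2, 3 with non-trivial $\G_0$ is irreducible Jordan-algebraic, i.e.\ (by the Jordan--von Neumann--Wigner classification) an $N$-level real, complex, or quaternionic quantum system, the $3\times 3$ octonionic system, or a spin factor $\Omega_d$. For each of these the $1$-frame case of Postulate~2 forces $\G_0$ to act transitively on the set of pure states --- the sphere $S^{d-1}$ for $\Omega_d$, and $P^{N-1}(\R)$, $P^{N-1}(\C)$, $P^{N-1}(\mathbb{H})$, $P^2(\mathbb{O})$ for the matrix systems. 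I would then invoke the classification of compact connected Lie groups acting transitively and effectively on these spaces (Montgomery--Samelson and Borel for the spheres; Onishchik and Tsukuda for the projective spaces) and prune the resulting lists using the \emph{full} strong symmetry, i.e.\ transitivity on frames. For the spheres every orthogonal transitive action automatically moves antipodal pairs, so (the maximal frame size of $\Omega_d$ being $2$) the entire Montgomery--Samelson--Borel list survives; for complex quantum theory the symplectic action of ${\rm Sp}(N)$ (for even $N$), though transitive on $P^{N-1}(\C)$, is not frame-transitive --- this is exactly the argument given in the proof of the final main theorem, that a frame-preserving $U\in{\rm Sp}(N)$ fixing $e_1$ up to a phase must, by preservation of the form $J$, also fix $Je_1$ up to a phase. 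Reading off what remains yields ${\rm SO}(N)$-conjugation (real), ${\rm SU}(N)$-conjugation (complex), ${\rm Sp}(N)/\{\pm\mathbf 1\}$ (quaternionic), ${\rm F}_4$ (octonionic), and the sphere list for the balls --- the list in the statement.

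For observability of energy, write $\g:=\g_A$ for the Lie algebra of $\G_0$, realized by linear generators on $A$. An energy observable assignment is an injective linear map $\phi:\g\to A^*$ with $\phi(X)\circ X=0$ for all $X\in\g$, with $u_A\notin{\rm ran}(\phi)$, and with $\phi(X)$ non-conserved under some time evolution whenever $X\neq 0$. The first tool is dimension counting: if $\dim\g>\dim A=\dim A^*$ then no injective linear $\phi$ exists, so Postulate~4 fails. This eliminates the balls with $\G_0={\rm SO}(d)$, $d\geq 4$ (as $\binom{d}{2}>d+1$), essentially all the exceptional sphere-transitive groups (${\rm SU}(d/2)$, ${\rm U}(d/2)$, ${\rm Sp}(d/4)$, ${\rm Sp}(d/4)\times{\rm U}(1)$, ${\rm Sp}(d/4)\times{\rm SU}(2)$, ${\rm G}_2$, ${\rm Spin}(7)$, ${\rm Spin}(9)$), the octonionic system ($\dim{\rm F}_4=52>27$), and quaternionic $N$-level quantum theory ($\dim{\rm sp}(N)=N(2N+1)>2N^2-N$). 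The ball cases that counting leaves occur only for small $d$: $\Omega_3$ is the qubit (complex $2$-level, treated below), and $\Omega_2$ with ${\rm SO}(2)$ and $\Omega_4$ with $\G_0\supseteq{\rm SU}(2)$ are ruled out because $\g$ contains a generator $X$ whose action on the radial space $\R^d$ is invertible; then $\phi(X)\circ X=0$ forces the radial part of $\phi(X)$ to vanish, so $\phi(X)$ is a multiple of $u_d$, contradicting injectivity of $\phi$ together with $u_A\notin{\rm ran}(\phi)$.

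Two families remain. For $N$-level real quantum theory, $N\geq 3$ (the case $N=2$ is $\Omega_2$), the generators are $X_M:\rho\mapsto[M,\rho]$ with $M\in{\rm so}(N)$, and an assignment would give, by linearity of $\phi$, a nonzero linear map $M\mapsto\bar M$ into the symmetric matrices with $\phi(X_M):\rho\mapsto\tr(\bar M\rho)$; the conservation condition then reads $[M,\bar M]=0$ for all $M$. Evaluating this constraint on rotation generators supported on coordinate $2$-planes and on ${\rm so}(3)$-subalgebras and exploiting linearity pins $\bar M$ into the line $\R\mathbf 1$ for every $M$, and since $\mathbf 1$ represents $u_A$ this contradicts either injectivity of $\phi$ or $u_A\notin{\rm ran}(\phi)$. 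Complex $N$-level quantum theory, on the other hand, \emph{does} admit an assignment: identify $\g$ with the traceless Hermitian matrices via $X_H:\rho\mapsto -i[H,\rho]$ and set $\phi(X_H):\rho\mapsto\tr(H\rho)$. This is injective; $\tr(H[H,\rho])=0$ gives conservation; $\mathbf 1$ is not traceless, so $u_A\notin{\rm ran}(\phi)$; and for $H\neq 0$ the functional $\phi(X_H)$ is conserved under every generator only if $[H',H]=0$ for all traceless Hermitian $H'$, hence $H\propto\mathbf 1$ and so $H=0$ --- so $\phi(X_H)$ is genuinely non-conserved. Thus Definition~\ref{DefEnergyMeasurement} is satisfied, including for $N=2$, i.e.\ the Bloch ball $\Omega_3$.

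The routine part is the dimension counting. The main obstacle is the real quantum case, where counting gives no leverage: one must do an explicit matrix computation showing that $[M,\bar M]=0$, imposed for enough generators $M$ and combined with linearity, collapses the candidate energy observable to a multiple of the identity; getting that bookkeeping right uniformly in $N$ --- and, to a lesser extent, carrying out the explicit $\Omega_4$/${\rm SU}(2)$ computation --- is the technical heart. A secondary subtlety is that the classification step genuinely uses transitivity on frames, not merely on pure states, to discard the symplectic alternatives and to fix $\G_0$ in the matrix systems.
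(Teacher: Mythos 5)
Your overall route is the same as the paper's: reduce to irreducible Jordan systems (classical systems having finite reversible group), classify compact connected groups transitive on the pure-state manifolds and prune by frame-transitivity (including the ${\rm Sp}(N)$-vs-${\rm SU}(N)$ argument via the form $J$), then kill Postulate 4 by dimension counting for most families, by the explicit generator-kernel computation for $\Omega_2$ and $\Omega_4$ with ${\rm SU}(2)$, and by a matrix argument for real quantum theory; the explicit construction $\phi(X_H):\rho\mapsto\tr(H\rho)$ for complex quantum theory is also as expected. However, there is a genuine gap precisely at the step you yourself call the technical heart: $N$-level real quantum theory. Your sketch claims that imposing $[M,\bar M]=0$ on coordinate-plane rotation generators and ${\rm so}(3)$-subalgebras, together with linearity, ``pins $\bar M$ into the line $\R\mathbf{1}$ for every $M$''. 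That is not what these constraints give, and it cannot be: the commutant of a single $2$-plane rotation generator $M_{12}$ inside the symmetric matrices is large (a scalar on the $e_1$--$e_2$ block plus an \emph{arbitrary} symmetric matrix on the complement), so $[M_{12},\bar\phi(M_{12})]=0$ comes nowhere near forcing $\bar\phi(M_{12})\propto\mathbf{1}$; and if the conclusion ``$\bar\phi(M)\in\R\mathbf{1}$ for all $M$'' were derivable, it would contradict injectivity of $\phi$ outright for $\dim{\rm so}(N)\geq 2$, which signals that the claimed intermediate statement is not the right one.

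The argument the paper actually needs (and gives) is different in mechanism: first one must show that Postulate 2 forces $\G_0$ to contain \emph{all} conjugations by ${\rm SO}(N)$, so that every $\rho\mapsto[M,\rho]$, $M\in{\rm so}(N)$, is an allowed generator (the paper proves this with an explicit irrational-rotation construction); then one restricts $\bar\phi$ to the abelian subspace $H\subset{\rm so}(N)$ of block-diagonal generators with generic, pairwise distinct $\pm\lambda_i$, and proves a linear-algebra lemma that any symmetric matrix commuting with such a generator is diagonal with equal paired entries, i.e.\ lies in a space $\mathcal{H}$ with $\dim\mathcal{H}=\dim H$. Injectivity of $\bar\phi$ then forces $\bar\phi(H)=\mathcal{H}$, which contains $\mathbf{1}$, i.e.\ $u_A\in{\rm ran}(\phi)$ --- the contradiction. (For odd $N$ an extra twist is needed, using two different embeddings of the even-dimensional block pattern to force the last diagonal entry to pair up, since $\dim\mathcal{H}$ would otherwise exceed $\dim H$ by one.) So the contradiction comes from a dimension-count-plus-surjectivity argument on a maximal commuting family, not from each $\bar\phi(M)$ individually collapsing to a multiple of the identity; as written, your step would fail, and this is exactly where the uniform-in-$N$ bookkeeping you deferred has to be supplied.
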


In complex quantum theory, the group of reversible transformations
$\G$ can actually be larger: it may also contain the antiunitary
transformations according to Wigner's theorem (but not
more). Similarly, real quantum theory may also contain the
conjugations with $O\in {\rm O}(N)$ (which yields additional transformations for even $N$), but for quaternionic quantum theory with $N\geq 3$, we
have $\G=\G_0$~\cite{Adler}. As pointed out in~\cite{Adler,Bargmann}, the case $N=2$ is exceptional in
the quaternionic case. Since the state space is in this case a $5$-dimensional unit ball, $\G$ may
contain reflections in adition to the rotations ${\rm SO}(5)\simeq {\rm Sp}(2)/\{-\mathbf{1},+\mathbf{1}\}$.
We do not know whether octonionic $3\times 3$ quantum theory may contain additional elements in its
transformation group, and we do not know the complete classifications
of possible compact transformation groups $\G\supset\G_0$ for the ball
state spaces (except that obviously $\G\subset {\rm O}(d)$).

Lemma~\ref{LemGroups} will be proven step by step.
We start by showing that the only ball state space with transitive
group of reversible transformations that has observability of energy
is the qubit.
\begin{lemma}
For $d\geq 2$, consider the $d$-dimensional ball state space
$\Omega_d:=\{(1,r)^T\,\,|\,\ r\in\R^d,\enspace \|r\|\leq 1\}$,
and let $\G_d$ be any compact group of reversible transformations
that acts transitively on the pure states. Then energy is observable
(in the sense of Definition~\ref{DefEnergyMeasurement})
if and only if $d=3$.
\end{lemma}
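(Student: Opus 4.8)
The plan is to reduce the question to the known classification of compact connected groups acting transitively on spheres, eliminate all but a few cases by a dimension count, and then decide the surviving cases by examining which linear functionals a one-parameter evolution can conserve.

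First I would observe that the pure states of $\Omega_d$ are the sphere $S^{d-1}$, and that any reversible transformation preserves normalization and maps the ball into itself, hence acts on the Bloch vector $r$ as an orthogonal map; thus $\g_d\subseteq{\rm so}(d)$, and since $\g_d$ is the Lie algebra of the identity component $\G_0$ (which again acts transitively on $S^{d-1}$, as $\G_d/\G_0$ is finite), the classification of Montgomery--Samelson and Borel~\cite{MontgomerySamelson,Borel}, reproduced in Lemma~\ref{LemGroups}, restricts $\g_d$ to the list ${\rm so}(d)$, ${\rm u}(d/2)$, ${\rm su}(d/2)$, ${\rm sp}(d/4)$, ${\rm sp}(d/4)\oplus{\rm u}(1)$, ${\rm sp}(d/4)\oplus{\rm su}(2)$, ${\rm g}_2$ ($d=7$), ${\rm spin}(7)$ ($d=8$), ${\rm spin}(9)$ ($d=16$). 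Observability of energy demands an injective linear $\phi:\g_d\to A^*\cong\R^{d+1}$, so $\dim\g_d\le d+1$; feeding this inequality into the list (for instance $d(d-1)/2\le d+1$ forces $d\le 3$ for ${\rm so}(d)$, and the analogous estimates handle the rest) leaves exactly $d=2$ with $\g_2={\rm so}(2)$, $d=3$ with $\g_3={\rm so}(3)$, and $d=4$ with $\g_4={\rm su}(2)$ or ${\rm u}(2)$.

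Next I would dispose of $d=2$ and $d=4$ by producing, in each, a nonzero generator $X$ whose only conserved linear functionals are multiples of the order unit $u_d$. For $d=2$ this is the rotation-of-the-disc computation already in the main text. For $d=4$, conjugating by a reversible unitary I may take $X\in{\rm su}(2)$ diagonal, so that $e^{tX}$ acts on $\C^2\cong\R^4$ by $(z_1,z_2)\mapsto(e^{it\theta}z_1,e^{-it\theta}z_2)$ with $\theta\neq 0$; writing a general functional as $E(x_0,r)=a_0 x_0+a\cdot r$ and requiring $E(e^{tX}\omega)$ to be independent of $t$ for all $\omega$ forces $a=0$, i.e.\ $E\in\R u_4$. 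In either case, if $\phi$ were an energy observable assignment then injectivity would give $\phi(X)\neq 0$, hence $\phi(X)$ would be a nonzero multiple of $u_d$, so $u_d\in{\rm ran}(\phi)$ and $\phi(X)$ would be conserved under \emph{all} evolutions, contradicting Definition~\ref{DefEnergyMeasurement}; for $\g_4={\rm u}(2)$ the same $X\in{\rm su}(2)\subset{\rm u}(2)$ works.

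Finally I would confirm $d=3$ by constructing the assignment explicitly. Identifying functionals on $\Omega_3$ with affine maps $(x_0,r)\mapsto a_0 x_0+a\cdot r$, and denoting by $\omega_X\in\R^3$ the angular-velocity vector of $X\in{\rm so}(3)$ (so $X$ acts by $r\mapsto\omega_X\times r$), set $\phi(X):(x_0,r)\mapsto\omega_X\cdot r$. This $\phi$ is linear and injective; $\phi(X)\circ X=0$ because $\omega_X\cdot(\omega_X\times r)=0$; the range of $\phi$ is the span of the three Bloch-coordinate functionals and therefore does not contain $u_3$; and for $X\neq 0$ the functional $\omega_X\cdot r$ is not conserved under rotation about an axis that is not parallel to $\omega_X$, so it is not conserved under all evolutions. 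Hence $\phi$ is an energy observable assignment and energy is observable on $\Omega_d$ precisely when $d=3$. The routine parts are the finite dimension check and the $d=3$ verification; the step needing real care is $d=4$, where dimension counting fails and one must exploit the explicit form of the ${\rm SU}(2)$-action on $S^3$ to see that the only conserved observable is the normalization.
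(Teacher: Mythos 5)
Your proposal is correct and follows essentially the same route as the paper's proof: reduce to the Montgomery--Samelson/Borel classification of connected compact groups transitive on $S^{d-1}$, eliminate all but $d=2,3,4$ by comparing $\dim\g_d$ with $d+1$, and kill the surviving non-quantum cases ($d=2$, and $d=4$ with ${\rm su}(2)\subseteq\g_4$) by showing the generator's action on the Bloch component has trivial kernel, so the only conserved functional is a multiple of $u_d$, contradicting Definition~\ref{DefEnergyMeasurement}. The only differences are cosmetic: you phrase the $d=4$ computation in complex coordinates instead of the paper's explicit $5\times5$ kernel calculation, and you spell out the $d=3$ assignment $\phi(X):(x_0,r)\mapsto\omega_X\cdot r$ explicitly, which the paper leaves implicit via the identification of $\Omega_3$ with the qubit.
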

\begin{proof}
If $\G_d$ acts transitively on the pure states, then so does
its connected component at the identity~\cite{MMAxioms}. According
to~\cite{MasanMulEntang}, the list of groups is the following. Since
the group action is locally effective~\cite{Onishchik}, the dimensions
of $\g_d$ are just the dimensions of the corresponding groups.
\begin{itemize}
\item For all $d\geq 2$: ${\rm SO}(d)$. We have shown in the main text that
  an energy observable assignment only exists if $d=3$.
\item For $d=4,6,8,\ldots:$ ${\rm SU}(d/2)$. We have
  $\dim({\rm su}(d/2))=(d/2)^2-1$, and this is larger than $d+1$ if $d\geq
  6$. Thus, no injective map $\phi:{\rm su}(d/2)\to\R^{d+1}$ defining an
  energy observable assignment can exist. However, we have to treat $d=4$
  separately. In this case, the transformation group is (up to
  similarity)
\[
   \G_4\supseteq \left\{
      \left.
      \left(
         \begin{array}{ccc}
            1 & 0 & 0 \\
            0 & {\rm re}\, U & {\rm im}\, U \\
            0 & -{\rm im}\, U & {\rm re}\, U
         \end{array}
      \right)\,\,\right|\,\, U\in {\rm SU}(2)
   \right\},
\]
such that the Lie algebra is at least
\[
   \g_4\supseteq \left\{
      \left.
      \left(
         \begin{array}{ccccc}
              0 & 0 & 0 & 0 & 0 \\
              0 & 0 & b & a & c \\
              0 & -b & 0 & c & -a \\
              0 & -a & -c & 0 & b \\
              0 & -c & a & -b & 0
         \end{array}
      \right)\,\,
      \right|\,\, a,b,c\in\R
   \right\}.
\]
Let $X\in \g_4$ be a generator corresponding to the choice of
parameters $a=1$ and $b=c=0$. If $\phi$ is any energy observable assignment, we
can write the functional $\phi(X)$ as a vector $\varphi\in\R^5$ such
that $[\phi(X)](y)=\langle \varphi,y\rangle$ for all $y\in\R^5$, and
the condition $\phi(X)\circ X=0$ translates into $X^T \varphi=0$.
The kernel of $X^T$ is one-dimensional, with unique solution (up to
some factor) of $\varphi=\lambda\cdot (1,0,0,0,0)^T$,
$\lambda\in\R$. But this represents the normalization functional:
$\langle\varphi,y\rangle=u_4(y)$ for all $y$, so $\phi(X)=u_4$,
contradicting the definition of an energy observable assignment.
\item For $d=2,4,6,8,\ldots$: ${\rm U}(d/2)$. The case $d=2$ is already
  covered in the main text; in all other cases, this representation
  contains the corresponding representation of ${\rm SU}(d/2)$ as a
  subgroup, and this has already been treated.
\item For $d=8,12,16,\ldots$: ${\rm Sp}(d/4)$. Dimension counting rules out
  these cases: We have $\dim({\rm sp}(d/4))=d/4(2\cdot d/4+1)$, and this is
  larger than $d+1$ for the relevant dimensions.
\item For $d=8,12,16,\ldots$: ${\rm Sp}(d/4)\times {\rm U}(1)$. This
  representation contains the representation of ${\rm Sp}(d/4)$ as a
  subgroup; thus, it is ruled out by the previous case.
\item For $d=4,8,12,\ldots$: ${\rm Sp}(d/4)\times {\rm SU}(2)$. If $d\geq 8$, this too
  contains ${\rm Sp}(d/4)$ as a subgroup. If $d=4$ then the dimension of the group ${\rm Sp}(1)\times {\rm SU}(2)$ is $9$,
  which is larger than $d+1=5$.
\item For $d=7$: the exceptional Lie group ${\rm G}_2$. Dimension counting
  again: $\dim {\rm g}_2=14 > 7+1$.
\item For $d=8$: ${\rm Spin}(7)$. $\dim {\rm Spin}(7)=7(7-1)/2 = 21 > 8+1$.
\item For $d=16$: ${\rm Spin}(9)$. $\dim {\rm Spin}(9)=9(9-1)/2 = 36 > 16+1$.
\end{itemize}
This proves the claim.
\end{proof}

As mentioned in the main text, it is more difficult to rule out
$N$-level real quantum mechanics for $N\geq 3$. This needs a sequence
of lemmas.
\begin{lemma}
\label{LemLinAlg1}
Let $J=\left(\begin{array}{cc} 0 & -1 \\ 1 & 0 \end{array}\right)\in
\R^{2\times 2}$, and let $S\in\R^{2\times 2}$ such that
\[
   JS=\alpha SJ\mbox{ for some }\alpha\in \R.
\]
Then $\alpha\in\{-1,+1\}$ or $S=0$. Furthermore, if $S=S^T$ and
$\alpha=1$, then $S=c\cdot\mathbf{1}$ for some $c\in\R$.
\end{lemma}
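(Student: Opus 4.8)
The plan is to reduce the statement to an elementary computation with the four entries of $S$. Write
\[
   S=\left(\begin{array}{cc} w & x\\ y & z\end{array}\right),\qquad w,x,y,z\in\R.
\]
A direct multiplication gives
\[
   JS=\left(\begin{array}{cc} -y & -z\\ w & x\end{array}\right),\qquad SJ=\left(\begin{array}{cc} x & -w\\ z & -y\end{array}\right),
\]
so the hypothesis $JS=\alpha SJ$ is equivalent to the four scalar equations $-y=\alpha x$, $\ z=\alpha w$, $\ w=\alpha z$ and $x=-\alpha y$.

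First I would combine these pairwise: substituting $z=\alpha w$ into $w=\alpha z$ yields $w=\alpha^2 w$, and combining $-y=\alpha x$ with $x=-\alpha y$ yields $x=\alpha^2 x$. Hence $(\alpha^2-1)w=(\alpha^2-1)x=0$. If $\alpha^2\neq 1$, this forces $w=x=0$, and then the two remaining equations give $y=-\alpha x=0$ and $z=\alpha w=0$, so $S=0$. Otherwise $\alpha^2=1$, i.e.\ $\alpha\in\{-1,+1\}$, which proves the first assertion.

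For the second assertion, assume in addition $S=S^T$ (so $x=y$) and $\alpha=1$. Then $-y=\alpha x=x=y$ forces $x=y=0$, while $z=\alpha w$ and $w=\alpha z$ both reduce to $z=w$; hence $S=w\cdot\mathbf{1}$, which is the claimed form with the constant $c:=w$. The computation is entirely routine; the only point requiring a moment's care is to confirm that in the case $\alpha^2\neq 1$ \emph{all four} entries of $S$ vanish---not merely the pair $w,x$---which follows at once from the two equations not used in deriving $w=\alpha^2 w$ and $x=\alpha^2 x$. There is no genuine obstacle here; the lemma is purely a bookkeeping device for the later argument ruling out real and quaternionic quantum theory.
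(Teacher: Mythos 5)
Your proof is correct: the entrywise equations from $JS=\alpha SJ$ are computed accurately, and both conclusions ($\alpha^2=1$ or $S=0$, and $S=c\cdot\mathbf{1}$ in the symmetric $\alpha=1$ case) follow exactly as you argue. The paper itself omits the proof of this lemma, explicitly calling it a simple exercise in linear algebra, and your direct $2\times 2$ computation is precisely the routine verification that was intended.
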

We omit the proof; it is a simple exercise in linear algebra.

\begin{lemma}
\label{LemLinAlg2}
Consider any antisymmetric matrix of the form
\[
   Y:=\left(
      \begin{array}{ccccccc}
         0 & \lambda_1 & & & & & \\
         -\lambda_1 & 0 & & & & & \\
         & & 0 & \lambda_2 & & & \\
         & & -\lambda_2 & 0 & & & \\
         & & & & \ddots & &  \\
         & & & & & 0 & \lambda_k \\
         & & & & & -\lambda_k & 0
      \end{array}
   \right)\in\R^{(2k)\times (2k)},\mbox{ all }\lambda_i\neq 0,\enspace \lambda_i\neq \pm \lambda_j\mbox{ for }i\neq j
\]
(all other entries zero). Let $S=S^T \in \R^{(2k)\times(2k)}$ be any symmetric matrix that commutes with $Y$, i.e.\ $[Y,S]=0$. Then $S$ is a diagonal matrix of the form
\[
   S={\rm diag}(s_1,s_1,s_2,s_2,\ldots,s_k,s_k),\qquad s_i\in\R.
\]
\end{lemma}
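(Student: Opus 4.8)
The plan is to exploit the block structure of $Y$ and reduce the whole statement to Lemma~\ref{LemLinAlg1}, the $2\times 2$ fact just established. Decompose $\R^{2k}=\bigoplus_{i=1}^k \R^2$ according to the consecutive coordinate pairs, so that $Y$ becomes block-diagonal with $i$-th diagonal block $Y_i=-\lambda_i J$, where $J$ is the fixed antisymmetric matrix of Lemma~\ref{LemLinAlg1} (it satisfies $J^2=-\mathbf{1}$). Partition $S$ conformally into $2\times 2$ blocks $S_{ij}$, $1\le i,j\le k$. Since $Y$ is block-diagonal, the $(i,j)$ block of $[Y,S]=0$ reads $Y_iS_{ij}=S_{ij}Y_j$, i.e.\ $-\lambda_iJS_{ij}=-\lambda_jS_{ij}J$; as all $\lambda_i\neq 0$ this is precisely $JS_{ij}=(\lambda_j/\lambda_i)\,S_{ij}J$.

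Next I would apply Lemma~\ref{LemLinAlg1} to each block with $\alpha:=\lambda_j/\lambda_i$. For $i\neq j$ the hypothesis $\lambda_i\neq\pm\lambda_j$ gives $\alpha\notin\{-1,+1\}$, so the lemma forces $S_{ij}=0$; hence all off-diagonal blocks vanish. For $i=j$ we have $\alpha=1$, and since $S=S^T$ implies $S_{ii}=S_{ii}^T$, the second assertion of Lemma~\ref{LemLinAlg1} gives $S_{ii}=s_i\mathbf{1}_2$ with $s_i\in\R$. Assembling these facts yields $S={\rm diag}(s_1\mathbf{1}_2,\ldots,s_k\mathbf{1}_2)={\rm diag}(s_1,s_1,\ldots,s_k,s_k)$, which is the claim.

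I expect no serious obstacle here once Lemma~\ref{LemLinAlg1} is in hand; the only point requiring care is bookkeeping — using the same coordinate pairing to block-diagonalize $Y$ and to partition $S$, and keeping the index order in $\alpha=\lambda_j/\lambda_i$ straight so that the distinctness hypothesis $\lambda_i\neq\pm\lambda_j$ is exactly what annihilates the off-diagonal blocks. One could alternatively diagonalize $Y$ over $\C$ (its eigenvalues $\pm i\lambda_1,\ldots,\pm i\lambda_k$ are pairwise distinct), observe that the real commutant must then be block-diagonal in the associated two-dimensional real eigenplanes, and finally impose symmetry; but the route through Lemma~\ref{LemLinAlg1} is shorter and is evidently the intended one.
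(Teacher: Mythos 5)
Your argument is correct and coincides with the paper's own proof: both partition $S$ into $2\times 2$ blocks conformal with the block-diagonal structure of $Y$, identify each diagonal block of $Y$ as $-\lambda_i J$, and apply Lemma~\ref{LemLinAlg1} with $\alpha=\lambda_j/\lambda_i$ to kill the off-diagonal blocks (using $\lambda_i\neq\pm\lambda_j$) and to force $S_{ii}=s_i\mathbf{1}$ (using $\alpha=1$ and symmetry). No gaps; the bookkeeping you flag is exactly the only care required.
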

\begin{proof}
Define the $2\times 2$ block matrices $\Lambda_i:=\left(\begin{array}{cc} 0 & \lambda_i \\ -\lambda_i & 0\end{array}\right)$, and divide $S$ into $2\times 2$ block matrices $S_{i,j}$:
\[
   S=\left(
      \begin{array}{ccc}
         S_{1,1} & \hdots & S_{1,k} \\
         \vdots & &\vdots \\
         S_{k,1} & \hdots & S_{k,k}
      \end{array}
   \right)
\]
Then the commutator is the symmetric matrix
\[
   [Y,S]=\left(
      \begin{array}{cccc}
         [\Lambda_1,S_{1,1}] & \Lambda_1 S_{1,2} - S_{1,2}\Lambda_2 & \Lambda_1 S_{1,3}-S_{1,3}\Lambda_3 & \hdots \\
         \vdots & [\Lambda_2,S_{2,2}] & \Lambda_2 S_{2,3}-S_{2,3}\Lambda_3 & \vdots \\
         & & \ddots & \\
         & \hdots & & [\Lambda_k,S_{k,k}]
      \end{array}
   \right).
\]
If this is the zero matrix, then $0=[\Lambda_i,S_{i,i}]=-\lambda_i[J,S_{i,i}]$ for all $i$. It follows from Lemma~\ref{LemLinAlg1} that there exists $s_i\in\R$ such that
$S_{i,i}=s_i\cdot\mathbf{1}$. Similarly, for all $i\neq j$, we have $\Lambda_i S_{i,j}=-\lambda_i J S_{i,j} = S_{i,j} \Lambda_j = - S_{i,j} J \lambda_j$, hence
$J S_{i,j} = \alpha S_{i,j} J$ with $\alpha=(\lambda_j/\lambda_i)\not\in\{-1,+1\}$. Thus, Lemma~\ref{LemLinAlg1} yields that $S_{i,j}=0$.
\end{proof}

We show that an analogue of this remains true in odd dimensions:
\begin{lemma}
\label{LemLinAlg3}
Consider any antisymmetric matrix of the form
\[
   Y:=\left(
      \begin{array}{cccccccc}
         0 & \lambda_1 & & & & & & \\
         -\lambda_1 & 0 & & & & & & \\
         & & 0 & \lambda_2 & & & & \\
         & & -\lambda_2 & 0 & & & & \\
         & & & & \ddots & & &  \\
         & & & & & 0 & \lambda_k & \\
         & & & & & -\lambda_k & 0 & \\
         & & & & & & & 0
      \end{array}
   \right)\in\R^{(2k+1)\times (2k+1)},\mbox{ all }\lambda_i\neq 0,\enspace \lambda_i\neq \pm \lambda_j\mbox{ for }i\neq j
\]
(all other entries zero). Let $S=S^T \in \R^{(2k+1)\times(2k+1)}$ be any symmetric matrix that commutes with $Y$, i.e.\ $[Y,S]=0$. Then $S$ is a
diagonal matrix of the form
\[
   S={\rm diag}(s_1,s_1,s_2,s_2,\ldots,s_k,s_k,s_{k+1}),\qquad s_i\in\R.
\]
\end{lemma}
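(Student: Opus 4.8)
The plan is to mimic the proof of Lemma~\ref{LemLinAlg2}, the only new feature being the single extra coordinate on which $Y$ acts as $0$. Write $\R^{2k+1}=V_1\oplus\cdots\oplus V_k\oplus V_{k+1}$ with $V_i=\R^2$ for $i\leq k$ and $V_{k+1}=\R$, so that $Y=\Lambda_1\oplus\cdots\oplus\Lambda_k\oplus 0$, where $\Lambda_i=\lambda_i J$ are the $2\times 2$ blocks of $Y$ (with $J$ as in Lemma~\ref{LemLinAlg1}). Decompose $S$ into blocks $S_{ij}$ adapted to this splitting: $S_{ij}\in\R^{2\times 2}$ for $i,j\leq k$, $S_{i,k+1}\in\R^{2\times 1}$, $S_{k+1,j}\in\R^{1\times 2}$, and $S_{k+1,k+1}=s_{k+1}\in\R$. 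Symmetry of $S$ gives $S_{ji}=S_{ij}^T$, and in particular $S_{k+1,i}=S_{i,k+1}^T$.

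Next I would compute the block $(i,j)$ entry of the commutator $[Y,S]$. For $i,j\leq k$ this is $\Lambda_i S_{ij}-S_{ij}\Lambda_j$, exactly the expression appearing in the proof of Lemma~\ref{LemLinAlg2}; hence that argument (via Lemma~\ref{LemLinAlg1}) applies verbatim to the upper-left $2k\times 2k$ block and yields $S_{ii}=s_i\mathbf{1}$ and $S_{ij}=0$ for $i\neq j$ with $i,j\leq k$. For the border block $(i,k+1)$ with $i\leq k$ one gets $\Lambda_i S_{i,k+1}-S_{i,k+1}\cdot 0=\Lambda_i S_{i,k+1}$; since $\lambda_i\neq 0$ the matrix $\Lambda_i$ is invertible, so $[Y,S]=0$ forces $S_{i,k+1}=0$, and consequently $S_{k+1,i}=S_{i,k+1}^T=0$. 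The $(k+1,k+1)$ block of $[Y,S]$ is $0\cdot s_{k+1}-s_{k+1}\cdot 0=0$ identically, so $s_{k+1}$ is unconstrained. Collecting these, $S=\mathrm{diag}(s_1,s_1,\ldots,s_k,s_k,s_{k+1})$, which is the claim.

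There is no real obstacle here: the computation is routine linear algebra, essentially a one-line reduction to Lemma~\ref{LemLinAlg2} together with the observation that the extra diagonal block of $Y$ being zero, combined with invertibility of each $\Lambda_i$, kills all coupling between the last coordinate and the rest. If anything deserves care it is the bookkeeping of the block structure and the transpose relations forced by $S=S^T$, so I would set that up explicitly before computing the commutator.
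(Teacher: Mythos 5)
Your proof is correct and follows essentially the same route as the paper's: the paper splits off the last coordinate, applies Lemma~\ref{LemLinAlg2} to the upper-left $2k\times 2k$ block, and kills the border vector via $\ker\bar Y=\{0\}$, which is exactly your invertibility-of-$\Lambda_i$ argument stated blockwise. No gaps.
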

\begin{proof}
Divide $Y$ and $S$ into block matrices:
\begin{eqnarray*}
   Y&=&\left(\begin{array}{cc} \bar Y & 0 \\ 0 & 0 \end{array}\right),\qquad \bar Y\in\R^{(2d)\times(2d)}, \\
   S&=&\left(\begin{array}{cc} S_{1,1} & S_{1,2} \\ S_{1,2}^T & S_{2,2} \end{array}\right),\qquad S_{1,1}\in\R^{(2d)\times(2d)},\enspace S_{1,2}\in\R^{2d},\enspace S_{2,2}\in\R.
\end{eqnarray*}
Then
\[
   [Y,S]=\left(
      \begin{array}{cc}
         [\bar Y,S_{1,1}] & \bar Y S_{1,2} \\
         S_{1,2}^T \bar Y & 0
      \end{array}
   \right).
\]
If this is the zero matrix, then $[\bar Y,S_{1,1}]=0$, and the diagonal form of $S_{1,1}$ with all entries repeated twice follows from Lemma~\ref{LemLinAlg2}.
Furthermore, $S_{1,2}\in{\rm ker}(\bar Y)=\{0\}$. Finally, set $s_{k+1}:=S_{2,2}$.
\end{proof}

Before applying this, we need to show that real quantum mechanics is necessarily equipped with all reversible transformations (conjugations with orthogonal
matrices) to comply with Postulate 2:
\begin{lemma}
For $N\geq 3$, let $\Omega_N$ be the state space of $N$-level real quantum mechanics, and $\G_N$ be a group of reversible transformations on it
such that Postulate 2 is satisfied. Then
\[
   \G_N=\left\{
      \rho\mapsto O\rho O^T\,\,|\,\, O\in \G
   \right\},
\]
where either $\G={\rm SO}(N)$ or $\G={\rm O}(N)$. In particular, $\g_N=\left\{\rho\mapsto [M,\rho]\,\,|\,\, M\in {\rm so}(N)\right\}$.
\end{lemma}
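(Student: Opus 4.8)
The plan is to use Postulate 2 to force $\G_N$ to be \emph{full-dimensional} inside the group of all reversible transformations of $N$-level real quantum theory, and then to conclude by a soft dimension-and-connectedness argument; the classification of transitive actions alluded to in the main text can be avoided.

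First I would fix coordinates. It is standard that every reversible transformation of $\Omega_N$ has the form $\rho\mapsto O\rho O^T$ with $O\in{\rm O}(N)$: the cone of $N$-level real quantum theory is $\mathrm{Sym}_N^+(\R)$, whose linear cone automorphisms are exactly the congruences $\rho\mapsto A\rho A^T$ with $A\in{\rm GL}_N(\R)$, and preservation of $\tr$ forces $A^TA=\mathbf{1}$. Since $O$ and $-O$ induce the same map, and since the closure of a group obeying Postulate 2 still obeys it (and still consists of reversible transformations, as $\G_A$ is compact), I may assume $\G_N=\{\rho\mapsto O\rho O^T\mid O\in H\}$ for a \emph{closed} subgroup $H\subseteq{\rm O}(N)$ with $-\mathbf{1}\in H$. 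Next I read off what Postulate 2 says about $H$. An $N$-frame of $\Omega_N$ is a list $(v_1v_1^T,\dots,v_Nv_N^T)$ for an orthonormal basis $(v_1,\dots,v_N)$ of $\R^N$ (mutual orthogonality together with summation to the order unit $\mathbf{1}$ forces precisely this), and it determines the basis only up to the signs of the $v_i$. Writing a basis as $V=[v_1|\cdots|v_N]\in{\rm O}(N)$ and letting $D\subseteq{\rm O}(N)$ be the finite group ($|D|=2^N$) of diagonal $\pm1$ matrices acting on the right, $N$-frames are parametrized by ${\rm O}(N)/D$, with $\G_N$ acting by left multiplication; hence the $n=N$ instance of Postulate 2 says exactly that $H$ acts transitively on ${\rm O}(N)/D$, i.e.\ $H\cdot D={\rm O}(N)$.

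Now comes the key step. Since $H\cdot D=\bigcup_{s\in D}Hs$ is a union of $2^N$ right cosets of $H$, each a closed submanifold of ${\rm O}(N)$ diffeomorphic to $H$, the assumption $\dim H<\dim{\rm O}(N)$ would make each coset nowhere dense, so by the Baire category theorem their finite union could not be all of ${\rm O}(N)$. Therefore $\dim H=\dim{\rm O}(N)=N(N-1)/2$, so the identity component $H_0$ is an open---hence closed---connected subgroup of the connected group ${\rm SO}(N)$, whence $H_0={\rm SO}(N)$. Thus $H\supseteq{\rm SO}(N)$, and since $[{\rm O}(N):{\rm SO}(N)]=2$ we get $H={\rm SO}(N)$ or $H={\rm O}(N)$, which is the asserted dichotomy (with $\G=H$). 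The Lie-algebra statement then follows by differentiating the curve $O(t)=e^{tM}$ with $M\in{\rm so}(N)$ (so $O(t)^T=e^{-tM}$), which yields the generator $\rho\mapsto[M,\rho]$, together with ${\rm Lie}({\rm SO}(N))={\rm Lie}({\rm O}(N))={\rm so}(N)$.

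The main obstacle lies in two pieces of bookkeeping rather than in any deep step: (i) the input that every reversible transformation of real quantum theory is an ${\rm O}(N)$-conjugation, which rests on knowing the automorphism group of the symmetric cone $\mathrm{Sym}_N^+(\R)$ (alternatively, as in the main text, one bypasses this by invoking the classification of compact connected Lie groups acting transitively on $\R P^{N-1}$, e.g.\ \cite{Tsukuda}); and (ii) the implication $H\cdot D={\rm O}(N)\Rightarrow\dim H=\dim{\rm O}(N)$, which is precisely where the \emph{full} frame-transitivity of Postulate 2 is used---its weaker $1$-frame consequence, transitivity on pure states (i.e.\ on $\R P^{N-1}$), is satisfied by many proper subgroups such as ${\rm U}(N/2)\subseteq{\rm SO}(N)$, and so would not suffice.
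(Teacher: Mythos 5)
Your proof is correct, but it reaches the crucial inclusion $\mathrm{SO}(N)\subseteq\G$ by a genuinely different route than the paper. Both arguments share the first step: every reversible map is a congruence $\rho\mapsto O\rho O^T$ with $O\in\mathrm{O}(N)$ (automorphisms of the cone of positive semidefinite symmetric matrices plus trace preservation; the paper cites \cite{Gowda} for this). From there the paper argues constructively: it builds two explicit $N$-frames obtained by rotating the standard frame in the $e_1$--$e_2$ plane by irrational multiples $a\pi,b\pi$ of $\pi$ (with $a-b$ irrational), uses Postulate 2 to obtain $V,W\in\G$ realizing these frame transports up to signs, and shows that, whatever the signs, one of $V^2$, $W^2$, $(VW)^2$ is a planar rotation by an irrational multiple of $\pi$; composition and topological closure then yield $\mathrm{SO}(2)$ in each coordinate plane and hence all of $\mathrm{SO}(N)$. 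You instead observe that ordered $N$-frames are parametrized by $\mathrm{O}(N)/D$ with $D$ the $2^N$-element sign group, so frame transitivity reads $H\cdot D=\mathrm{O}(N)$, and a Baire-category/dimension count on the finitely many closed cosets forces $H$ to be open, hence to contain the identity component $\mathrm{SO}(N)$. Your route is shorter and softer, avoids the explicit irrational-angle construction, and makes transparent that full frame transitivity (not mere transitivity on pure states) is what is being used; the paper's route is more elementary and exhibits concrete generators. One small caveat: your opening reduction ``replace $\G_N$ by its closure'' proves the classification only for $\overline{\G_N}$, whereas the lemma asserts it for $\G_N$ itself. The paper handles this by simply invoking the framework's stipulation that the reversible transformations form a compact (hence closed) group, and you should do the same; alternatively, your own coset count already shows $[\mathrm{O}(N):H]\le 2^N$, and since $\mathrm{SO}(N)$ is divisible (the exponential map is surjective) it has no proper finite-index subgroups, which yields $\mathrm{SO}(N)\subseteq H$ without any closedness assumption. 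Apart from this bookkeeping point, your argument is sound, including the concluding dichotomy $H=\mathrm{SO}(N)$ or $\mathrm{O}(N)$ and the Lie-algebra identification.
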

\begin{proof}
Every $G\in \G_N$ is an automorphism of the cone of positive semidefinite symmetric real matrices, and thus of the form $\rho\mapsto Q\rho Q^T$~\cite{Gowda};
preservation of the trace implies that $Q^T Q=\mathbf{1}$, i.e.\ that Q is orthogonal. Define $\G$ as the set of all orthogonal $Q$ such that the map $\rho\mapsto Q \rho Q^T$
is contained in $\G_N$. Clearly $\G$ is a subgroup of ${\rm O}(N)$; since $\G_N$ is topologically closed, so is $\G$.

Now we show that $\G$ contains all of ${\rm SO}(N)$. Let $a,b\in\R$ be irrational numbers such that their difference $a-b$ is also irrational. Define the unit vectors
$e_i:=(0,\ldots,\underbrace{1}_i,0,\ldots,0)^T$, and
\begin{eqnarray*}
   v_1&:=& (\cos(a\pi),-\sin(a\pi),0,\ldots,0)^T,\enspace v_2:=(\sin(a\pi),\cos(a\pi),0,\ldots,0)^T,\enspace v_3=e_3,\ldots,v_N=e_N,\\
   w_1&:=& (\cos(b\pi),-\sin(b\pi),0,\ldots,0)^T,\enspace w_2:=(\sin(b\pi),\cos(b\pi),0,\ldots,0)^T,\enspace w_3=e_3,\ldots,w_N=e_N.
\end{eqnarray*}
Then the sets of vectors $\{v_1,\ldots,v_N\}$ and $\{w_1,\ldots,w_N\}$ are both orthonormal bases of $\R^N$, and so the sets of pure states
$\{|v_1\rangle\langle v_1|,\ldots,|v_N\rangle\langle v_N|\}$ and $\{|w_1\rangle\langle w_1|,\ldots,|w_N\rangle\langle w_N|\}$ are both $N$-frames
in $N$-level real quantum mechanics, and so is $\{ |e_1\rangle\langle e_1|,\ldots,|e_N\rangle\langle e_N|\}$. Thus, according to Postulate 2, there are
two orthogonal matrices $V,W\in\G$ such that
\[
   V|e_i\rangle\langle e_i| V^T = |v_i\rangle\langle v_i|\qquad\mbox{and}\qquad W|e_i\rangle\langle e_i|W^T =|w_i\rangle\langle w_i|\qquad\mbox{for }i=1,\ldots,N.
\]
It follows that there are signs $\sigma_1,\ldots,\sigma_N,\tau_1,\ldots,\tau_N\in\{-1,+1\}$ such that $V|e_i\rangle=\sigma_i|v_i\rangle$ and $W|e_i\rangle=\tau_i|w_i\rangle$.
Hence
\[
   V=\left(
      \begin{array}{ccccc}
         \sigma_1 \cos(a\pi) & \sigma_2 \sin(a\pi) & 0 & \hdots & 0\\
         -\sigma_1 \sin(a\pi) & \sigma_2 \cos(a\pi) & 0 & \hdots & 0\\
         0 & 0 & \sigma_3 & & \\
         \vdots & \vdots & & \ddots & \\
         0 & 0 & & & \sigma_N
      \end{array}
   \right),\qquad
   W=\left(
      \begin{array}{ccccc}
         \tau_1 \cos(b\pi) & \tau_2 \sin(b\pi) & 0 & \hdots & 0\\
         -\tau_1 \sin(b\pi) & \tau_2 \cos(b\pi) & 0 & \hdots & 0\\
         0 & 0 & \tau_3 & & \\
         \vdots & \vdots & & \ddots & \\
         0 & 0 & & & \tau_N
      \end{array}
   \right)
\]
Now we consider two different cases. As the first case, suppose that $\sigma_1=\sigma_2$ or $\tau_1=\tau_2$. Then
\[
   V^2=\left(
      \begin{array}{ccccc}
         \cos(2 a\pi) & \sin(2a\pi) & 0 & \hdots & 0\\
         -\sin(2a\pi) & \cos(2a\pi) & 0 & \hdots & 0\\
         0 & 0 & 1 & & \\
         \vdots & \vdots & & \ddots & \\
         0 & 0 & & & 1
      \end{array}
   \right)\in\G\qquad\mbox{or}\qquad
   W^2 =\left(
      \begin{array}{ccccc}
          \cos(2b\pi) & \sin(2b\pi) & 0 & \hdots & 0\\
         -\sin(2b\pi) & \cos(2b\pi) & 0 & \hdots & 0\\
         0 & 0 & 1 & & \\
         \vdots & \vdots & & \ddots & \\
         0 & 0 & & & 1
      \end{array}
   \right)\in\G.
\]
As the second case, suppose that $\sigma_1\neq \sigma_2$ and $\tau_1\neq  \tau_2$. Then $\sigma:=\sigma_1=-\sigma_2$ and $\tau:=\tau_1=-\tau_2$, and
\[
   VW=\sigma\tau\left(
      \begin{array}{ccccc}
         \cos((a-b)\pi) & \sin((a-b)\pi) & 0 & \hdots & 0\\
         -\sin((a-b)\pi) & \cos((a-b)\pi) & 0 & \hdots & 0\\
         0 & 0 & 1 & & \\
         \vdots & \vdots & & \ddots & \\
         0 & 0 & & & 1
      \end{array}
   \right)\Rightarrow
   (VW)^2=\left(
      \begin{array}{ccccc}
         \cos(2(a-b)\pi) & \sin(2(a-b)\pi) & 0 & \hdots & 0\\
         -\sin(2(a-b)\pi) & \cos(2(a-b)\pi) & 0 & \hdots & 0\\
         0 & 0 & 1 & & \\
         \vdots & \vdots & & \ddots & \\
         0 & 0 & & & 1
      \end{array}
   \right)\in\G.
\]
In both cases, we have established the existence of a matrix in $\G$ that acts as
$\left(\begin{array}{cc}\cos\theta & \sin\theta \\ -\sin\theta & \cos\theta\end{array}\right)$ in the $e_1-e_2$-subspace,
where $\theta$ is an irrational multiple of $\pi$. But any matrix of this form generates all of ${\rm SO}(2)$ by composition and closure.
We can argue similarly for all other $e_i-e_j$-subspaces. The corresponding ${\rm SO}(2)$ rotations in all these planes generate
all special orthogonal matrices, hence ${\rm SO}(N)\subseteq \G$.
\end{proof}

\begin{theorem}
Energy is not observable on any $N$-level real quantum mechanics state space.
\end{theorem}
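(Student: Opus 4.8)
The plan is to reduce the claim to a statement in linear algebra about linear maps ${\rm so}(N)\to{\rm Sym}(N)$ (with ${\rm Sym}(N)$ the real symmetric $N\times N$ matrices), and then to settle that statement with the three commutant lemmas above together with a polarization trick. By the preceding lemma, a group of reversible transformations on the $N$-level real quantum state space $A_N$ that complies with Postulate~2 has Lie algebra $\g_N=\{X_M:\rho\mapsto[M,\rho]\mid M\in{\rm so}(N)\}$, and, under the self-dualizing inner product $\langle X,Y\rangle=\tr(XY)$, we may identify $A_N^*$ with ${\rm Sym}(N)$. Suppose for contradiction that $\phi:\g_N\to A_N^*$ is an energy observable assignment in the sense of Definition~\ref{DefEnergyMeasurement}, and define $\psi:{\rm so}(N)\to{\rm Sym}(N)$ by $\phi(X_M)(\rho)=\tr(\psi(M)\rho)$. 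Then $\psi$ is linear, and it is injective because $\phi$ is and because $M\mapsto X_M$ is. The conservation condition $\phi(X_M)\circ X_M=0$ reads $\tr\big(\psi(M)[M,\rho]\big)=\tr\big([\psi(M),M]\rho\big)=0$ for all $\rho\in{\rm Sym}(N)$; since $[\psi(M),M]$ is itself symmetric (a commutator of a symmetric with an antisymmetric matrix) and the trace form is nondegenerate on ${\rm Sym}(N)$, this is equivalent to $[\psi(M),M]=0$ for every $M\in{\rm so}(N)$. Finally $u_{A_N}$ corresponds to $\mathbf{1}$, so the requirement $u_{A_N}\notin{\rm ran}(\phi)$ becomes $\mathbf{1}\notin{\rm ran}(\psi)$.

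It therefore suffices to show that for $N\geq 3$ no injective linear map $\psi:{\rm so}(N)\to{\rm Sym}(N)$ with $[\psi(M),M]=0$ for all $M$ can exist. I would in fact prove the stronger assertion that any linear $\psi$ with $[\psi(M),M]=0$ for all $M\in{\rm so}(N)$ has ${\rm ran}(\psi)\subseteq\R\mathbf{1}$ when $N\geq 3$. This is more than enough: on the one hand $\dim{\rm ran}(\psi)\leq 1<N(N-1)/2=\dim{\rm so}(N)$ then contradicts injectivity of $\psi$; on the other hand, if $\psi\neq 0$ it gives $\mathbf{1}\in{\rm ran}(\psi)$ after rescaling, hence $u_{A_N}\in{\rm ran}(\phi)$ --- the form of the contradiction noted in the main text. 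The crucial input is the commutant structure isolated in Lemmas~\ref{LemLinAlg2} and~\ref{LemLinAlg3}: a symmetric matrix that commutes with a \emph{nondegenerate} antisymmetric matrix $Y$ --- one whose canonical form has pairwise distinct nonzero rotation frequencies --- must, in the eigenbasis of $Y$, be block diagonal with every $2\times 2$ rotation block replaced by a scalar multiple of $\mathbf{1}_2$ (plus a single free diagonal entry when $N$ is odd).

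The argument for ${\rm ran}(\psi)\subseteq\R\mathbf{1}$ then proceeds as follows. Polarizing the identity $[\psi(M),M]=0$ (replace $M$ by $M+M'$) yields the bilinear relation $[\psi(M),M']+[\psi(M'),M]=0$ for all $M,M'\in{\rm so}(N)$. Replacing $\psi$ by its traceless part $f(M):=\psi(M)-\frac{1}{N}\tr(\psi(M))\,\mathbf{1}$ changes neither this relation nor $[\psi(M),M]=0$ (since $\mathbf{1}$ is central), so it is enough to show $f\equiv 0$. Next one picks a spanning family of nondegenerate generators $Y^{(1)},\dots,Y^{(r)}$ of ${\rm so}(N)$ (block rotations with distinct frequencies, taken with respect to several different pairings of the coordinate axes). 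Lemmas~\ref{LemLinAlg2} and~\ref{LemLinAlg3} force each $f(Y^{(a)})$ into block scalar diagonal form in the corresponding eigenbasis; feeding pairs $Y^{(a)},Y^{(b)}$ into the polarization relation (noting that, for suitable choices, $Y^{(a)}+Y^{(b)}$ is again nondegenerate) produces linear equations among the block scalars which force them to coincide, so that $f(Y^{(a)})$ is a multiple of $\mathbf{1}$ and, being traceless, vanishes. Since the $Y^{(a)}$ span ${\rm so}(N)$, $f\equiv 0$, and we conclude as above. The case $N=3$ is exactly this computation carried out on the three rotation generators $L_{12},L_{13},L_{23}$, which are themselves nondegenerate in dimension $3$.

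I expect the main obstacle to be the third step: choosing a small but spanning family of nondegenerate generators and extracting the relations among the block scalars, together with the genuinely special low-dimensional behaviour. In particular, for $N=4$ the algebra ${\rm so}(4)$ is not simple and a generic self-dual or anti-self-dual $2$-form has a \emph{doubly degenerate} spectrum, so Lemma~\ref{LemLinAlg2} does not apply to it directly; one must instead work with generic elements $M^{+}+M^{-}$ having $\|M^{+}\|\neq\|M^{-}\|$ (both nonzero), which \emph{are} frequency-nondegenerate, and then run the same polarization bookkeeping. Once this is handled uniformly for all $N\geq 3$ --- the content of Lemmas~\ref{LemLinAlg1}--\ref{LemLinAlg3} and the preceding lemma --- the contradiction with the injectivity of $\phi$ (equivalently, with $u_{A_N}\notin{\rm ran}(\phi)$) is immediate, giving the theorem; the remaining value $N=2$ of real quantum theory is the disc $\Omega_2$, already disposed of among the ball state spaces.
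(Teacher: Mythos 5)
Your reduction is exactly the paper's: identify $\g_N$ with $\mathrm{so}(N)$ acting by commutators (via the preceding lemma), trade $\phi$ for a linear injective $\psi:\mathrm{so}(N)\to\mathrm{Sym}(N)$ with $[\psi(M),M]=0$, and translate $u_{A_N}\notin\mathrm{ran}(\phi)$ into $\mathbf{1}\notin\mathrm{ran}(\psi)$; the commutant Lemmas~\ref{LemLinAlg2} and~\ref{LemLinAlg3} play the same role in both arguments. Where you genuinely diverge is the endgame. The paper never proves (nor needs) your stronger claim $\mathrm{ran}(\psi)\subseteq\R\mathbf{1}$: it restricts $\psi$ to a \emph{single} abelian subspace $H$ of block rotations, uses density of the frequency-nondegenerate elements to get $\psi(H)\subseteq\mathcal{H}$ (block-scalar diagonals), and then exploits injectivity together with $\dim H=\dim\mathcal{H}$ to force $\psi(H)=\mathcal{H}\ni\mathbf{1}$, the desired contradiction; for odd $N=2k+1$ the naive count fails ($\dim\mathcal{H}=k+1>k$), and the paper repairs it with a second, shifted block family and a matrix with overlapping blocks whose commutant forces the last three diagonal entries to coincide, restoring a $k$-versus-$k$ count. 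Your route buys a stronger, injectivity-free statement, but as written its decisive step is only announced: the extraction of ``all block scalars coincide'' from the polarization identities $[f(Y),Y']+[f(Y'),Y]=0$ across differently paired Cartans is precisely what you defer as the ``main obstacle,'' so the proposal has a gap exactly there. The good news is that it does go through, and more easily than you fear: take $Y$ generic in the standard pairing $(1,2)(3,4)\cdots$ and $Y'$ generic in the cyclically shifted pairing $(2,3)(4,5)\cdots$; then $[f(Y),Y']$ and $[f(Y'),Y]$ are supported on disjoint sets of symmetric off-diagonal units, so each coefficient $(s_i-s_{i+1})$ vanishes separately, chaining all blocks (and, for odd $N$, the lone axis) together; your $N=4$ worry about doubly degenerate self-dual forms is moot, since only generic elements of coordinate Cartans with $|\lambda_1|\neq|\lambda_2|$ are ever needed, and these are dense. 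So either write out that bookkeeping in full, or adopt the paper's injectivity-plus-dimension-count device, which needs no cross-Cartan polarization at all (only the overlapping-block trick for odd $N$); with one of these in place, your argument is complete, the $N=2$ case being the disc already handled among the ball state spaces and $N=1$ being trivial since $\g_1=\{0\}$.
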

\begin{proof}
The case $N=1$ is trivial; $N=2$ is shown in the main text, so let $N\geq 3$.
First, consider the case that $N$ is even. Let $H\subset {\rm so}(N)$ be the subspace of matrices
\[
   H:=\left\{
      \left.
      \left(
         \begin{array}{ccccccc}
            0 & \lambda_1 & & & & & \\
            -\lambda_1 & 0 & & & & &\\
            & & 0 & \lambda_2 & & &\\
            & & -\lambda_2 & 0 & & &\\
            & & & & \ddots & & \\
            & & & & &0 & \lambda_{N/2} \\
            & & & & & -\lambda_{N/2} & 0
         \end{array}
      \right)\,\,\right|\,\, \lambda_1,\ldots,\lambda_{N/2}\in\R
   \right\},
\]
and $h\subset \g_N$ be the corresponding subspace of maps of the form $\rho\mapsto [\Lambda,\rho]$ with $\Lambda\in H$. Moreover,
let $H'$ be the set of all $\Lambda\in H$ where the corresponding $\lambda_i$ satisfy $\lambda_i\neq 0$ and $\lambda_i\neq \pm \lambda_j$ for $i\neq j$.
Then $H'$ is dense in $H$. Similarly, by $h'$, denote the set of maps $\rho\mapsto [M,\rho]$ with $M\in H'$; then $h'$ is dense in $h$.

Consider any energy observable assignment $\phi$. Any matrix $M\in {\rm so}(n)$ defines a generator $X\in\g_N$ by $X(\rho):=[M,\rho]$ and vice versa.
This generator is mapped by $\phi$ to some map $\rho\mapsto \tr(\bar M\rho)$, where $\bar M=\bar M^T$. Denote the map $M\mapsto \bar M$ by $\bar\phi$, such that
\[
   [\phi(X)](\rho)=\tr(\bar\phi(M)\rho)\qquad\mbox{if }X(\rho)=[M,\rho].
\]
Then we have the equivalences
\[
   \phi(X)\circ X=0\Leftrightarrow [\phi(X)](X(\rho))=0\mbox{ for all }\rho\Leftrightarrow \tr(\bar\phi(M)[M,\rho])=0\mbox{ for all }\rho\Leftrightarrow
   \tr(\rho[\bar\phi(M),M])=0\mbox{ for all }\rho,
\]
and so $\bar\phi(M)$ is a symmetric matrix that must commute with $M$. Suppose that $M\in H'$, then Lemma~\ref{LemLinAlg3} shows
that $\bar\phi(M)={\rm diag}(s_1,s_1,s_2,s_2,\ldots,s_{N/2},s_{N/2})$. Denote by $\mathcal{H}$ the linear space of all diagonal $(N\times N)$-matrices
of that form. We have shown that $\bar\phi(H')\subset \mathcal{H}$. Since $H'$ is dense in $H$, this implies that $\bar\phi(H)\subset \mathcal{H}$.
Since $\dim H=\dim\mathcal{H}=N/2$, and since $\bar\phi$ is injective, this implies that $\bar\phi(H)=\mathcal{H}$. In particular, there is $0\neq M\in H$
such that $\bar\phi(M)=\mathbf{1}$, so the corresponding generator $X\in h$ satisfies $X(\rho)=[M,\rho]$ which is not identically zero for all $\rho$,
and $[\phi(X)](\rho)=\tr(\rho)=u_N(\rho)$, contradicting the definition of an energy observable assignment.

Now consider the case that $N$ is odd, say, $N=2k+1$. Define the subspace $H$ of antisymmetric matrices by
\[
   H:=\left\{
      \left.\left(
      \begin{array}{cccccccc}
         0 & \lambda_1 & & & & & & \\
         -\lambda_1 & 0 & & & & & & \\
         & & 0 & \lambda_2 & & & & \\
         & & -\lambda_2 & 0 & & & & \\
         & & & & \ddots & & &  \\
         & & & & & 0 & \lambda_k & \\
         & & & & & -\lambda_k & 0 & \\
         & & & & & &  & 0
      \end{array}
   \right)\,\,\right|\,\, \lambda_1,\ldots,\lambda_k\in\R
   \right\}
\]
Similar argumentation as in the even case, now using Lemma~\ref{LemLinAlg3}, shows that $\bar\phi(M)$ is a diagonal matrix for every $M\in H$;
the same conclusion holds true if the subspace $H$ is defined by appending the zero in the top-left corner instead of the bottom-right. But then, by linearity,
the matrix
\[
   M:=\left(
      \begin{array}{cccccccc}
         0 & \lambda_1 & & & & & & \\
         -\lambda_1 & 0 & & & & & & \\
         & & 0 & \lambda_2 & & & & \\
         & & -\lambda_2 & 0 & & & & \\
         & & & & \ddots & & &  \\
         & & & & & 0 & \lambda_k & \\
         & & & & & -\lambda_k & 0& \lambda_k  \\
         & & & & & & -\lambda_k & 0
      \end{array}
   \right)=
   \left(
      \begin{array}{cccccccc}
         0 & \lambda_1 & & & & & & \\
         -\lambda_1 & 0 & & & & & & \\
         & & 0 & \lambda_2 & & & & \\
         & & -\lambda_2 & 0 & & & & \\
         & & & & \ddots & & &  \\
         & & & & & 0 & \lambda_k & \\
         & & & & & -\lambda_k & 0 & \\
         & & & & & &  & 0
      \end{array}
   \right)+
   \left(
      \begin{array}{ccccc}
         0 & & & & \\
           & \ddots & & & \\
           & & 0 & &  \\
           & & & 0 & \lambda_k \\
           & & & -\lambda_k & 0
      \end{array}
   \right)
\]
also has the property that $\bar\phi(M)$ is a diagonal matrix. Suppose that all $\lambda_i\neq 0$, then the only diagonal matrix $S$ that commutes with $M$
is of the form $S={\rm diag}(s_1,s_1,s_2,s_2,\ldots,s_{k-1},s_{k-1},s_k,s_k,s_k)$. Again, arguing analogously to the even case, the subspace of all matrices $M$
of the given form (dropping the condition $\lambda_i\neq 0$) is mapped by $\bar\phi$ injectively into the subspaces of all diagonal matrices $S$ of that form.
Since both are of dimension $k$, there is $M\neq 0$ such that $\bar\phi(M)=\mathbf{1}$, violating the definition of an energy observable assignment.
\end{proof}

\end{appendix}

\end{document}